\newcommand\tsup[2][2]{%
	\def\useanchorwidth{T}%
	\ifnum#1>1%
	\stackon[-.5pt]{\tsup[\numexpr#1-1\relax]{#2}}{\scriptscriptstyle\sim}%
	\else%
	\stackon[.5pt]{#2}{\scriptscriptstyle\sim}%
	\fi%
}
\theoremstyle{plain}
\newtheorem{thm}{\protect\theoremname}
\theoremstyle{plain}
\newtheorem{assumption}{\protect\assumptionname}[section]
\theoremstyle{plain}
\newtheorem{prop}{\protect\propositionname}[section]
\theoremstyle{plain}
\newtheorem{lem}{\protect\lemmaname}[section]
\theoremstyle{plain}
\theoremstyle{plain}
\newtheorem{Corollary}{Corollary}[section]
\providecommand{\assumptionname}{Assumption}
\providecommand{\lemmaname}{Lemma}
\providecommand{\propositionname}{Proposition}
\providecommand{\theoremname}{Theorem}
\renewcommand\thmcontinues[1]{Continued}
\numberwithin{equation}{section}
\title{Treatment Effects in the Regression Discontinuity Model with Counterfactual Cutoff and Distorted Running Variables}
\author[1]{Moyu Liao \thanks{First draft: Nov, 2025. School of Economics, University of Sydney. Email: moyu.liao@sydney.edu.au.} }
\affil[1]{University of Sydney}
\date{\today}
\begin{document}
	\maketitle

\begin{abstract}
We develop a new framework for evaluating the total policy effect in regression discontinuity designs (RDD), incorporating both the direct effect of treatment on outcomes and the indirect effect arising from distortions in the running variable when treatment becomes available. Our identification strategy combines a conditional parallel trend assumption to recover untreated potential outcomes with a local invariance assumption that characterizes how the running variable responds to counterfactual policy cutoffs. These components allow us to identify and estimate counterfactual treatment effects for any proposed threshold. We construct a nonparametric estimator for the total effect, derive its asymptotic distribution, and propose bootstrap inference procedures. Finally, we apply our framework to the Italian Domestic Stability Pact, where population-based fiscal rules generate both behavioral responses and running-variable distortions.
\end{abstract}
	
\section{Introduction}

Regression Discontinuity Designs (RDD) have become one of the most influential empirical strategies in applied economics, in large part due to their transparent identification of causal effects in settings where treatment assignment is determined by a cutoff rule. \cite{lee2008randomized} provides a foundational interpretation of the sharp RDD by showing that, under continuity of potential outcomes at the threshold, the discontinuity in observed outcomes identifies a \emph{local average treatment effect} (LATE) for units at the margin of indifference. Importantly, Lee's framework allows the running variable to be influenced by individuals' anticipation of treatment, as long as such sorting does not generate a discontinuity in potential outcomes at the cutoff. In this sense, the RDD estimand can be understood as a characteristics–weighted average treatment effect for units whose underlying attributes place them near the policy threshold.

While this framework is extremely powerful, it is limited in several important dimensions when the research or policy question extends beyond the immediate neighborhood of the cutoff. First, the classical RDD focuses exclusively on the \emph{direct} effect of treatment on outcomes by conditioning on the running variable, without accounting for the possibility that the running variable itself may be \emph{endogenously distorted} by the availability or anticipation of treatment. Second, identification is restricted to a \emph{local} parameter at the cutoff, whereas policy makers may be interested in causal effects for individuals \emph{away} from the boundary. Third, standard RDD methods are tied to the \emph{observed} cutoff, leaving open the policy-relevant question of how treatment effects would differ under a \emph{counterfactual threshold}.

To address these three limitations, we develop a new framework that augments the standard RDD setup with a two-period, two-group structure. We consider periods $t\in\{0,1\}$ and groups $Z_i\in\{0,1\}$. For individuals with $Z_i=0$, treatment is never available in either period. For individuals with $Z_i=1$, treatment is unavailable in period $t=0$ but becomes available in period $t=1$ depending on their running variable. We introduce a potential-outcome framework for the running variable,$
R_{it} \;=\; R_{it}^0(1-Z_i) \;+\; 
Z_i\Big[ R_{it}^0\mathbbm{1}(t=0) + R_{it}^1\mathbbm{1}(t=1)\Big],$
which allows the running variable to be distorted from $R_{it}^0$ to $R_{it}^1$ when treatment becomes available. By defining the \emph{total policy effect} on individual $i$ in period $t=1$ as $
Y_{i1}^1(R_{i1}^1)-Y_{i1}^0(R_{i1}^0),$
our framework explicitly incorporates both the direct treatment effect and the indirect effect arising from the endogenous distortion of the running variable. This directly resolves the first limitation of the prior literature.

To address the second limitation—extrapolating treatment effects away from the cutoff—we leverage the $Z_i=0$ group as a control group. Specifically, we impose a conditional parallel trend assumption on untreated potential outcomes between the two groups, which allows us to recover the counterfactual untreated outcome $Y_{i1}^0(R_{i1}^0)$ for the $Z_i=1$ units. This strategy differs fundamentally from the two main existing extrapolation approaches: the local policy invariance method of \cite{dong2015identifying}, which extrapolates using local derivatives near the cutoff, and the conditional-independence-based extrapolation of \cite{angrist2015wanna}, which conditions on covariates to remove dependence between potential outcomes and the forcing variable. In contrast, our approach uses \emph{overtime variation} and \emph{between-group comparisons} to build the necessary counterfactual.

To address the final limitation—evaluating treatment effects under counterfactual cutoffs—we assume that the distortion of the running variable under a counterfactual policy threshold exhibits a translation-invariant structure relative to the observed distortion. Combined with the conditional parallel trend assumption, this allows us to predict both the counterfactual distribution of the distorted running variable and the implied counterfactual treatment effects at any proposed cutoff.

Building on these identification ideas, we construct a nonparametric estimator of the counterfactual total policy effect by combining kernel estimators for (i) the conditional mean of untreated potential outcomes, (ii) the conditional density of the running variable distortion, and (iii) the conditional mean of realized treated outcomes. Following the procedure outlined in Section~\ref{section: Asymptotics}, we derive the asymptotic linear expansion of the estimator and show that, despite the multiple nonparametric components, the estimator converges at the $\sqrt{n}$ rate. This fast rate arises because the estimand is an \emph{average} counterfactual policy effect that depends only on the treated subsample, and because undersmoothing renders the nonparametric bias asymptotically negligible. We evaluate the finite-sample performance through simulations and find that, due to the persistence of finite-sample bias, a two-scale bias-reduction estimator performs substantially better. Both empirical and $t$-bootstrap procedures deliver accurate confidence interval coverage in moderately sized samples.

We illustrate the usefulness of our approach by applying it to the setting of \cite{grembi2016fiscal}, who study the effect of Italy's Domestic Stability Pact (DSP) rules on municipal fiscal behavior. In their setting, municipalities below a population threshold faced different fiscal rules than those above it, and population itself may respond endogenously to fiscal incentives. Section \ref{section: Empirical} describes the institutional context and data. Our result shows that DSP decreases the municipal deficit for counterfactual cutoff set above 4,600 and can increase deficit if set below 4,600.

\paragraph{Literature} 
This paper contributes to the literature on RDD by extending the identification and estimation of treatment effects beyond the observed cutoff. \cite{lee2008randomized} establishes the modern interpretation of the RDD as identifying a local average treatment effect under continuity of potential outcomes. \cite{dong2015identifying} show how marginal changes in the policy threshold can identify the marginal threshold treatment effect under a local policy invariance assumption. \cite{angrist2015wanna} develop a conditional-independence-based strategy for extrapolating RDD effects away from the cutoff using covariates.

We also contribute to the growing literature combining Difference-in-Differences ideas with RDD designs. \cite{grembi2016fiscal} use a parallel trend structure to net out confounding mayoral wage policies that coincide with the same population cutoff, but their focus remains on the direct treatment effect rather than the full policy effect including the distortion of the running variable. Our framework incorporates both components and thus enriches the set of policy-relevant estimands available in RDD settings.

\section{Econometric Framework}\label{section: model}
Consider a two-period model $t=0,1$ and a regression discontinuity design that happens in period $t=1$. 
Let $Y_{it}$ be the observed outcome variable for individual $i$ in period $t$, let $R_{it}$ be the running variable that will determine the treatment status in period $t=1$. Let $Z_i\in\{0,1\}$ be a binary variable that determines whether the individual is subject to RDD in period $t=1$.  The treatment status $D_{it}=0$ for all individuals $i$ and $t=0$, and $D_{it}=\mathbbm{1}(R_{it}\ge c,Z_{i}=1)$ for period $t=1$.  The observed outcome variable is generated by the potential outcome model:
\begin{equation}\label{eq: potential outcome for Y}
	\begin{split}
		Y_{it}(R_{it})&=Y_{it}^1(R_{it})D_{it}+Y_{it}^0(R_{it})(1-D_{it}),\\
		R_{it}&= R_{it}^0(1-Z_i)+ Z_i\left[R_{it}^0 \mathbbm{1}(t=0)+ R_{it}^1 \mathbbm{1}(t=1)\right].
	\end{split}
\end{equation}
The running variable $R_{it}$ can also be influenced by the treatment due to the firm's endogenous behavior of choosing the running variable. The econometrician cannot observe $Y_{it}^1(R_{it})$ and $Y_{it}^0(R_{it})$ simultaneously and the potential outcome can depend on $R_{it}$. The running variable can also be influenced by the potential to be treated since individuals may strategically choose the running variable when the treatment is potentially available to them. As a result, for the control group $Z_i=0$, we always observe the policy undistorted potential running variable $R_{it}^0$, while for the $Z_i=1$ group, we observe $R_{it}^0$ for $t=0$ period and the distorted potential running variable $R_{it}^1$ for the $t=1$ period.  The econometrician may also observe a set of covariates $X_{it}$, but we leave the discussion to extensions. We also impose the following exogeneity condition for the potential outcome. The potential outcome framework in \eqref{eq: potential outcome for Y} is general: If we impose $E[Y_{it}^d(R_{it})|X_{it}=x]\equiv E[Y_{it}^d|X_{it}=x]$ for $d=0,1,$ then we get the model in \cite{angrist2015wanna}. 
\begin{assumption}\label{assumption: conditional exogeneity of the potential outcome }
	The potential outcome $Y_{i1}^0(R_{i1}^0)$ in period $t=1$ is exogenous to the distorted running variable given the undistorted running variable $R^0_{it}$, i.e.,
	\[
	Y_{i1}^0(R_{i1}^0)\perp D_{it},R_{i1}^1,R_{i0}^1 \big| Z_i,R_{i0}^0, R_{i1}^0.
	\]
\end{assumption}

The literature has focused on the local average treatment effect:
\[
S(r_1,r_2)=E[Y_{it}^1(R_{it})-Y_{it}^0(R_{it})|Z_i=1, c=r_1, R_{it}=r_2,t=1],
\]
which is the effect of treatment with treatment cutoff $c=r_1$ if an individual's running variable is fixed at $r_2>r_1$  \citep{dong2015identifying}. Under suitable continuity conditions, we can identify the local treatment effect at the treatment cutoff:
\begin{equation}
	S(c,c)=_{i.d.} E[Y_{it}|t=1,Z_i=1,R_{it}=c]- \lim_{\varepsilon\rightarrow 0^+}  E[Y_{it}|t=1,Z_i=1,R_{it}=c-\varepsilon].
\end{equation}

The local average treatment effect is not the only object that is interesting to the policy makers: it does not consider the policy effect via the distortion of the potential running variable. To take into account the distortion of running variables, we can also define the local effects of policy
\[
T(r_1,r_2)=E[Y_{i1}^1(R^1_{i1})-Y_{i1}^0(R_{i0}^0)|Z_i=1,c=r_1,R^1_{i1}=r_2].
\]
The total treatment effect with cutoff $c=r_1$ focuses on a person with post-treatment running variable $R_{i1}=r_2$, and compares the treated outcome $Y_{i1}^1(R^1_{i1})$ with its untreated potential outcome evaluated at the undistorted potential running variable $Y_{i1}^0(R_{i0}^0)$. The difference $T(r_1,r_2)-S(r_1,r_2)$ is the indirect effect of policy via the distortion of the running variable.

The policy maker is interested in both evaluating policy effect under the current policy cutoff $c$, as well as the counterfactual policy effect when the policy cutoff is chosen at a different value $c^{cf}>c$. When evaluating the treatment effect at each cutoff level, the policy maker also wants to know the local direct and indirect effect on the treated unit, i.e.,  $S(c',r)$, $T(c',r)$ for $r\ge c'$. For a policy maker who wants to know the counterfactual overall population treatment effect on the treated, we define the quantity
\[ATT(c^{cf})\equiv E[T(c^{cf},R_{it})|R_{it}\ge c^{cf}],\]
which is policy relevant and can inform the policy maker about the overall benefits of policy.

\subsection{Identification Strategy}	
To infer the treatment effect in a counterfactual cutoff location, we need to consider several steps: First, we use the over-time variation of $(Y_{it},R_{it})|Z_i=0$ to infer the trend of $Y_{it}^0(R_{it}^0)$. When we derive the trend of $Y_{it}^0(R_{it}^0)$, the potential outcome model \eqref{eq: potential outcome for Y} implies the correct conditional parallel trend assumption to impose.  Second, for $Z_i=1$ and $t=1$ period, we can identify the realized distributions of $(Y_{it}^0,R^1_{it})|Z_i=1,t=1,R_{it}<c$ and $(Y_{it}^1,R^1_{it})|Z_i=1,t=1,R_{it}\ge c$. The conditional distribution $R_{i1}^1|R_{i1}^0,Z_i=1$ informs us how the individuals select the distorted running variable given their past undistorted running variable. Last, to infer the treatment effect in the new cutoff place, we use the difference between the distribution $Y_{it}|Z_i=1,t=1$ and $Y_{it}^0|Z_{i}=1,t=1$ to inform us the treatment effect conditional on the running variables, and then impose additional translation invariance in the running variable distortion to inform us of the distortion of running variables under the new policy cutoff.
\paragraph{Identifying the counterfactual outcome for $D_{it}=1$} We first impose the conditional parallel trend assumption in the untreated potential outcome conditional on the running variable:
\begin{assumption}\label{assumption: conditional parallel trend assumption in running variable}
	(Conditional parallel trend in running variables) 
	\[
	\begin{split}
		\quad E[Y_{i1}^0(R_{i1}^0)-Y_{i0}^0(R_{i0}^0)|Z_{i}=1,R_{i0}^0=r_0]= E[Y_{i1}^0(R_{i1}^0)-Y_{i0}^0(R_{i0}^0)|Z_{i}=0, R_{i0}^0=r_0].
	\end{split}
	\]
\end{assumption}
We choose $R_{i0}^0$ as the conditioning variable in Assumption  \ref{assumption: conditional parallel trend assumption in running variable} based on two reasons: first, we cannot observe $R_{i1}^0$ for the $Z_i=1$ group and $R_{i1}^1$ for the $Z_i=0$ group, so it is not possible to add these running variables in the parallel trend assumption; second, as we will see later, the distortion of running variable can be characterized as the conditional distribution of $R_{i1}^1$ given $R_{i0}^0$, and conditioning on $R_{i0}^0$ will deliver the right conditional unobserved potential outcome expectation so that we can take into account the distortion of the running variables.

\begin{lem}\label{lem: parallel trend for Y^0 using running variables}
	Assumption \ref{assumption: conditional parallel trend assumption in running variable} identifies the conditional mean 
	$E[Y_{i1}^0(R^0_{i1})| R^0_{i0}=r_0, Z_i=1]$.
\end{lem}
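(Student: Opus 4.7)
The plan is to rearrange the parallel trend condition so that the unobserved target $E[Y_{i1}^0(R^0_{i1})\mid R^0_{i0}=r_0, Z_i=1]$ is written as a sum of three conditional expectations, and then to argue that each of those three objects coincides with an observable conditional mean.

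Starting from Assumption \ref{assumption: conditional parallel trend assumption in running variable} and adding $E[Y_{i0}^0(R_{i0}^0)\mid Z_i=1, R_{i0}^0=r_0]$ to both sides, I obtain
\begin{equation*}
E[Y_{i1}^0(R_{i1}^0)\mid Z_i=1, R_{i0}^0=r_0]
= E[Y_{i0}^0(R_{i0}^0)\mid Z_i=1, R_{i0}^0=r_0]
+ E[Y_{i1}^0(R_{i1}^0)-Y_{i0}^0(R_{i0}^0)\mid Z_i=0, R_{i0}^0=r_0].
\end{equation*}
It now suffices to identify each term on the right-hand side from the observable distribution of $(Y_{it}, R_{it}, Z_i, t)$.

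The key step is to use the structure of the potential-outcome model in \eqref{eq: potential outcome for Y}. At $t=0$ no one is treated, so $D_{i0}=0$ and hence $Y_{i0} = Y_{i0}^0(R_{i0}^0)$ for every unit, regardless of $Z_i$. Similarly, at $t=0$ the model forces $R_{i0} = R_{i0}^0$ for both $Z_i=0$ and $Z_i=1$, so conditioning on the observed $R_{i0}=r_0$ is equivalent to conditioning on $R_{i0}^0=r_0$. Consequently,
\begin{equation*}
E[Y_{i0}^0(R_{i0}^0)\mid Z_i=z, R_{i0}^0=r_0] = E[Y_{i0}\mid Z_i=z, R_{i0}=r_0], \quad z\in\{0,1\}.
\end{equation*}
For the $Z_i=0$ group in period $t=1$, treatment is never available, so $D_{i1}=0$, $Y_{i1}=Y_{i1}^0(R_{i1}^0)$, and $R_{i1}=R_{i1}^0$, which yields
\begin{equation*}
E[Y_{i1}^0(R_{i1}^0)\mid Z_i=0, R_{i0}^0=r_0] = E[Y_{i1}\mid Z_i=0, R_{i0}=r_0].
\end{equation*}
Substituting these three observable representations back gives the closed-form identification
\begin{equation*}
E[Y_{i1}^0(R_{i1}^0)\mid Z_i=1, R_{i0}^0=r_0]
= E[Y_{i0}\mid Z_i=1, R_{i0}=r_0] + E[Y_{i1}\mid Z_i=0, R_{i0}=r_0] - E[Y_{i0}\mid Z_i=0, R_{i0}=r_0],
\end{equation*}
where every term on the right-hand side is a conditional mean of observed variables.

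The argument is essentially bookkeeping over the potential-outcome definitions, and the only subtlety is making sure that conditioning on the latent $R_{i0}^0$ can be replaced by conditioning on the observable $R_{i0}$. That equivalence, which is the main place where the two-group/two-period structure of \eqref{eq: potential outcome for Y} is used, is ensured by the fact that distortion only enters the $(Z_i=1, t=1)$ cell, leaving $R_{i0}=R_{i0}^0$ universally.
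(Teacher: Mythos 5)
Your proof is correct and matches the paper's implicit argument: the paper does not state a separate proof of this lemma, but its estimation procedure (constructing $\widehat{\theta}(r_0)=\widehat{g}(r_0)+\widehat{m}_0(r_0)$, where $m_0(r_0)=E[Y_{i0}\mid R_{i0}=r_0,Z_i=1]$ and $g(r_0)=E[\Delta Y_i\mid R_{i0}=r_0,Z_i=0]$) relies on exactly the decomposition you derive, and the bookkeeping steps you make — that $D_{i0}=0$ and $R_{i0}=R_{i0}^0$ for all units, and that $D_{i1}=0$ and $R_{i1}=R_{i1}^0$ for the $Z_i=0$ group — are the ones the potential-outcome model \eqref{eq: potential outcome for Y} guarantees.
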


\paragraph{Identifying the distortion of running variables} When individuals can potentially have access to the treatment, the running variable may be intentionally distorted to select into or out of treatment. The way that the running variables are distorted is characterized by the following conditional distribution:
\[
F(R_{i1}^1\le r_1 |R_{i0}^0=r_0,Z_i=1).
\]
To characterize the counterfactual distribution of the distorted running variables, we impose the following translation invariance in the selection behavior when the counterfactual cutoff is $c^{cf}>c$:
\begin{assumption}\label{assumption: selection behavior}
	(Translation invariance selection) Let $R^{1}_{i1}(c^{cf})$ be the counterfactual running variable if the policy cutoff is chosen at $c^{cf}\ge c$. We require the selection behavior of running variable to depend only on the distance to the cutoff point: Suppose the new cutoff is $c^{cf}=c+\Delta c$, then 
	\[
	Pr(R_{i1}^{1}(c^{cf})\le r_1+\Delta c |R_{i0}^0=r_0+\Delta c,Z_i=1)=F(R_{i1}^1\le r_1 |R_{i0}^0=r_0,Z_i=1)
	\]
\end{assumption}
Assumption \ref{assumption: selection behavior} has the following interpretation: For treated units, its distorted running variable $R_{i1}^1$ depends on its untreated potential running variable $R_{i0}^0$. When the cutoff for treatment is moved to a new location $c^{cf}=c+\Delta c$, the selection of $R_{it}^1(c^{cf})$ takes a parallel shift. 

\begin{lem}\label{lem: counterfactual distribution of R_{it} in Z_i=1}
	Assumption \ref{assumption: selection behavior} identifies the counterfactual distribution of $(R_{i0}^0,R_{it}^1(c^{cf}))$ conditional on the $Z_i=1$ group. As a result, the conditional distribution 
	$F(R_{i0}^0\le r_0|R^{1}_{i1}(c^{cf})=r)$ for the $Z_i=1$ group and the marginal distribution $F^{cf}(r)\equiv Pr(R_{i1}(c^{cf})\le r)$ are identified.
\end{lem}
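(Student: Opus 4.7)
The plan is to leverage the fact that both coordinates of $(R_{i0}^0, R_{i1}^1)$ are directly observed on the $Z_i=1$ subpopulation, translate the observed conditional distribution by $\Delta c = c^{cf}-c$ via Assumption \ref{assumption: selection behavior}, and then reassemble the counterfactual joint, conditional, and marginal laws by standard probability manipulations.

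First, I would note that for the $Z_i=1$ group the definition in \eqref{eq: potential outcome for Y} says that the econometrician observes $R_{i0}^0$ in period $t=0$ and $R_{i1}^1$ in period $t=1$; hence the joint law of $(R_{i0}^0, R_{i1}^1) \mid Z_i=1$, its marginal $F(R_{i0}^0 \mid Z_i=1)$, and the conditional $F(R_{i1}^1 \le \cdot \mid R_{i0}^0 = \cdot, Z_i=1)$ are all identified directly from the sample. Next, I would apply Assumption \ref{assumption: selection behavior} as a pure change of variables: for any target pair $(r_0', r_1')$, set $r_0 = r_0' - \Delta c$ and $r_1 = r_1' - \Delta c$ to obtain
\[
Pr\bigl(R_{i1}^{1}(c^{cf}) \le r_1' \,\big|\, R_{i0}^0 = r_0', Z_i=1\bigr) = F\bigl(R_{i1}^1 \le r_1' - \Delta c \,\big|\, R_{i0}^0 = r_0' - \Delta c, Z_i=1\bigr),
\]
whose right-hand side is identified from the previous step. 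This delivers the counterfactual conditional $R_{i1}^{1}(c^{cf}) \mid R_{i0}^0, Z_i=1$ at every point where the shifted conditional is defined.

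Finally, I would reassemble. Because $R_{i0}^0$ is a pre-treatment quantity, its marginal on $Z_i=1$ is unaffected by which cutoff is later chosen, so combining it with the counterfactual conditional above identifies the joint distribution of $(R_{i0}^0, R_{i1}^{1}(c^{cf})) \mid Z_i=1$ via $F_{\mathrm{joint}}(r_0, r_1) = \int_{-\infty}^{r_0} F(R_{i1}^{1}(c^{cf}) \le r_1 \mid R_{i0}^0 = u, Z_i=1) \, dF(u \mid Z_i=1)$. The conditional $F(R_{i0}^0 \le r_0 \mid R_{i1}^{1}(c^{cf}) = r, Z_i=1)$ then follows from this joint by the usual Bayes/ratio-of-densities argument, and the overall marginal $F^{cf}$ is obtained by the law of total probability: the $Z_i=0$ subgroup contributes $Pr(R_{i1}^0 \le r \mid Z_i=0)$ (undistorted regardless of the cutoff), and the $Z_i=1$ subgroup contributes the marginal of $R_{i1}^{1}(c^{cf}) \mid Z_i=1$ just constructed. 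The main obstacle is really a support caveat rather than a conceptual one: if $R_{i0}^0 \mid Z_i=1$ has bounded support $[\underline r, \overline r]$ and $\Delta c > 0$, the translation identity only pins down the counterfactual conditional for $r_0' \in [\underline r + \Delta c, \overline r + \Delta c]$, so a boundary region may require an auxiliary support or stationarity condition. Once that is granted, the proof is simply a change of variables followed by standard integration.
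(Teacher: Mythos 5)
Your core derivation is correct and matches the implicit reasoning the paper relies on (the paper treats this lemma as following directly from Assumption~\ref{assumption: selection behavior} and supplies no separate proof): the joint law of $(R_{i0}^0, R_{i1}^1)\mid Z_i=1$ is observed, the translation map in Assumption~\ref{assumption: selection behavior} shifts the conditional law of $R_{i1}^1$ given $R_{i0}^0$ by $\Delta c$, the marginal of $R_{i0}^0\mid Z_i=1$ is pre-treatment and hence cutoff-invariant, and the counterfactual joint, the reverse conditional $F(R_{i0}^0\le r_0\mid R_{i1}^1(c^{cf})=r, Z_i=1)$, and the $Z_i=1$ marginal of $R_{i1}^1(c^{cf})$ follow by Bayes and integration. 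The support caveat you raise at the end is a genuine and sharp observation: for $\Delta c>0$, the translated conditional is only pinned down on a $\Delta c$-shifted sub-interval of the support of $R_{i0}^0\mid Z_i=1$, and the paper does not explicitly address this boundary region.

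The one substantive misstep is your final law-of-total-probability construction of $F^{cf}$. The paper's $F^{cf}(r)=Pr(R_{i1}(c^{cf})\le r)$ is the marginal of $R_{i1}^1(c^{cf})$ \emph{within} the $Z_i=1$ group, not a population mixture over $Z_i$. This is visible both from the estimator
$\widehat{f}^{cf}(r_1)=\int_{r_0}\widehat f_{1|0}(r_1-\Delta c\mid r_0-\Delta c)\,\widehat f_0(r_0)\,dr_0$,
which only uses $Z_i=1$ objects ($f_0(\cdot)\equiv f(R_{i0}^0\mid Z_i=1)$), and from the ATT definition, which is the average of $T(c^{cf},\cdot)$ over \emph{treated} units, i.e.\ $Z_i=1$ units with $R_{i1}^1(c^{cf})\ge c^{cf}$. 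Mixing in $Pr(R_{i1}^0\le r\mid Z_i=0)$ therefore produces a different object than the one the paper needs. Fortunately, the error is self-contained: the quantity the paper actually wants is exactly the $Z_i=1$ marginal you already built when integrating the counterfactual conditional against $F(R_{i0}^0\mid Z_i=1)$, so dropping the extra mixing step completes the proof.
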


\subsubsection*{Identifying the counterfactual treatment effect}
We notice that the counterfactual local average policy effect $T(c^{cf},r)$ is identified for all $r\ge c^{cf}\ge c$.

\begin{thm}\label{theorem: identify conditional total treatment effect}
	The counterfactual local average policy effect for the $Z_i=1$ group, $T(c^{cf},r)$, is identified as
	\begin{equation} \label{eq: local total policy effect}
		\begin{split}
			T(c^{cf},r)=E[Y_{i1}|Z_i=1,R_{i1}=r] -	\int_{r_0} E[Y_{i1}^0(R_{i1}^0)|Z_i=1,R_{i0}^0=r_0]dF(R_{i0}^0=r_0|R^{1}_{i1}(c^{cf})=r)
		\end{split}
	\end{equation}
	where $E[Y_{i1}^0(R_{i1}^0)|Z_i=1,R_{i0}^0=r_0] $ is identified in Lemma \ref{lem: parallel trend for Y^0 using running variables}, $F(R_{i0}^0=r_0|R^{1}_{i1}(c^{cf})=r)$ is identified in Lemma \ref{lem: counterfactual distribution of R_{it} in Z_i=1}.
\end{thm}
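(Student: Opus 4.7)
The plan is to split $T(c^{cf},r)$ into a treated-outcome piece and an untreated-counterfactual piece, and then identify each piece using the translation-invariance structure (Assumption~\ref{assumption: selection behavior}) together with the conditional exogeneity of the untreated potential outcome (Assumption~\ref{assumption: conditional exogeneity of the potential outcome }). By linearity of the conditional expectation,
\[
T(c^{cf},r) \;=\; \underbrace{E\!\left[Y_{i1}^1(R_{i1}^1(c^{cf}))\,\big|\, Z_i=1,\, R_{i1}^1(c^{cf})=r\right]}_{(A)} \;-\; \underbrace{E\!\left[Y_{i1}^0(R_{i1}^0)\,\big|\, Z_i=1,\, R_{i1}^1(c^{cf})=r\right]}_{(B)},
\]
so the theorem reduces to identifying $(A)$ with $E[Y_{i1}\mid Z_i=1, R_{i1}=r]$ and $(B)$ with the integral on the right-hand side of \eqref{eq: local total policy effect}.

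For $(A)$, I would argue that the treated potential outcome map $Y_{i1}^1(\cdot)$ is a structural function of the unit's type and the realized running variable, so its conditional law given the realized distorted running variable is invariant to the particular cutoff that produced the distortion. Combined with Assumption~\ref{assumption: selection behavior}, which says the selection of the distorted running variable is itself translation-invariant, this yields $(A)=E[Y_{i1}^1(R_{i1}^1)\mid Z_i=1, R_{i1}^1=r]$. Since any $r\ge c^{cf}\ge c$ satisfies $D_{i1}=1$ under the observed policy, so that $Y_{i1}=Y_{i1}^1(R_{i1}^1)$ on this event, the latter equals the observable $E[Y_{i1}\mid Z_i=1, R_{i1}=r]$.

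For $(B)$, I would apply the law of iterated expectations with respect to $R_{i0}^0$,
\[
(B) \;=\; \int E\!\left[Y_{i1}^0(R_{i1}^0)\,\big|\, Z_i=1, R_{i0}^0=r_0, R_{i1}^1(c^{cf})=r\right] dF\!\left(r_0\,\big|\, R_{i1}^1(c^{cf})=r, Z_i=1\right),
\]
and then use Assumption~\ref{assumption: conditional exogeneity of the potential outcome } to purge $R_{i1}^1(c^{cf})$ from the inner conditional expectation. Concretely, because $Y_{i1}^0(R_{i1}^0)$ is conditionally independent of the distorted running variables given $(Z_i,R_{i0}^0,R_{i1}^0)$, integrating out $R_{i1}^0$ leaves $E[Y_{i1}^0(R_{i1}^0)\mid Z_i=1, R_{i0}^0=r_0]$, which is identified by Lemma~\ref{lem: parallel trend for Y^0 using running variables}. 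The integrating measure $F(r_0\mid R_{i1}^1(c^{cf})=r, Z_i=1)$ is identified by Lemma~\ref{lem: counterfactual distribution of R_{it} in Z_i=1}, and combining these two pieces yields the stated expression.

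The main obstacle is the identification of $(A)$: replacing the counterfactual conditional expectation $E[Y_{i1}^1(R_{i1}^1(c^{cf}))\mid R_{i1}^1(c^{cf})=r]$ by the observable $E[Y_{i1}\mid R_{i1}=r]$ implicitly requires that the conditional law of $Y_{i1}^1$ given the realized distorted running variable is policy-invariant; Assumption~\ref{assumption: selection behavior} only restricts the marginal selection of $R_{i1}^1$, so this invariance must either be read into the structural interpretation of the potential outcome or stated as a separate condition. A secondary care point is that the conditional-exogeneity statement in Assumption~\ref{assumption: conditional exogeneity of the potential outcome } must apply when $R_{i1}^1$ is replaced by $R_{i1}^1(c^{cf})$; this is natural provided one views $R_{i1}^1(c^{cf})$ as the output of the same behavioral mechanism under a translated cutoff, but it should be flagged explicitly in the argument.
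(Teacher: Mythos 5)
Your proposal takes essentially the same route as the paper's proof: the same decomposition into a treated term $(A)$ and an untreated term $(B)$, the same use of the law of iterated expectations together with Assumption~\ref{assumption: conditional exogeneity of the potential outcome } and Lemma~\ref{lem: counterfactual distribution of R_{it} in Z_i=1} to handle $(B)$, and the same assertion equating $(A)$ with the observed conditional mean $E[Y_{i1}\mid Z_i=1,R_{i1}=r]$. The two caveats you raise---that identifying $(A)$ requires the conditional law of $Y_{i1}^1$ given the realized distorted running variable to be invariant to the cutoff that produced it, and that Assumption~\ref{assumption: conditional exogeneity of the potential outcome } as stated conditions out $R_{i1}^1$ rather than $R_{i1}^1(c^{cf})$---are in fact implicitly relied on by the paper's proof too, which asserts both steps without further comment, so your flagging them is a precision gain rather than a defect in your argument.
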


The \eqref{eq: local total policy effect} in Theorem \ref{theorem: identify conditional total treatment effect} summarizes our identification strategy: The $E[Y_{i1}|Z_i=1,R_{i1}=r]$ corresponds to the mean of treated potential outcome with the distorted running variable, the integrand $E[Y_{i1}^0(R_{i1}^0)|Z_i=1,R_{i0}^0=r_0]$ is the untreated potential outcome under undistorted running variable, and the distribution $F(R_{i0}^0=r_0|R^{1}_{i1}(c^{cf})=r)$ characterizes the distortion of the running variable in the counterfactual policy cutoff. 

Last, we can identify the treatment effect on the treated (ATT) using the counterfactual distribution of $R_{it}^1(c^{cf})$.
\begin{prop}
	For counterfactual cutoff $c^{cf}\ge c$, the counterfactual ATT is given by 
	\begin{equation}
		\label{eq: identified ATT }
		ATT(c^{cf})=\frac{ \int_{c^{cf}}^{\infty} T(c^{cf},r) dF^{cf}(r)}{1-F^{cf}(c^{cf})}.
	\end{equation}
\end{prop}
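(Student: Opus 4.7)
The plan is to unpack the definition of $ATT(c^{cf})$ and apply iterated expectations using the objects already identified in Theorem \ref{theorem: identify conditional total treatment effect} and Lemma \ref{lem: counterfactual distribution of R_{it} in Z_i=1}. In the counterfactual world with cutoff $c^{cf}$, the treated population is the subset of the $Z_i=1$ group with $R^1_{i1}(c^{cf}) \ge c^{cf}$, so by definition
\[
ATT(c^{cf}) = E\bigl[T(c^{cf}, R^1_{i1}(c^{cf})) \bigm| Z_i=1,\, R^1_{i1}(c^{cf}) \ge c^{cf}\bigr].
\]

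First I would apply the law of iterated expectations by conditioning on $R^1_{i1}(c^{cf})$ and integrating against its conditional distribution on the event $\{R^1_{i1}(c^{cf}) \ge c^{cf}\}$ within the $Z_i=1$ group. Since $F^{cf}(r) = \Pr(R^1_{i1}(c^{cf}) \le r \mid Z_i = 1)$ is the marginal distribution delivered by Lemma \ref{lem: counterfactual distribution of R_{it} in Z_i=1}, the conditional distribution on the treated subpopulation has mass $dF^{cf}(r)/(1 - F^{cf}(c^{cf}))$ on $[c^{cf}, \infty)$. Substituting yields
\[
ATT(c^{cf}) = \frac{1}{1 - F^{cf}(c^{cf})} \int_{c^{cf}}^{\infty} T(c^{cf}, r)\, dF^{cf}(r),
\]
which is the claimed expression.

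Finally, I would invoke Theorem \ref{theorem: identify conditional total treatment effect} to note that $T(c^{cf}, r)$ is identified for every $r \ge c^{cf}$ and Lemma \ref{lem: counterfactual distribution of R_{it} in Z_i=1} to note that $F^{cf}$ is identified, so the right-hand side is a functional of objects already pinned down by the observed data under Assumptions \ref{assumption: conditional exogeneity of the potential outcome }--\ref{assumption: selection behavior}. There is no genuine obstacle: the result is essentially a tautology combining the definition of ATT with truncation by the counterfactual marginal. The only point requiring care is to make explicit that the conditioning event ``$R_{it} \ge c^{cf}$'' in the definition of $ATT(c^{cf})$ refers to the counterfactual distorted running variable $R^1_{i1}(c^{cf})$, not to the observed $R_{i1}$ (which was generated under the original cutoff $c$); this is precisely why the integration uses $F^{cf}$ rather than the observed distribution of $R_{i1}$.
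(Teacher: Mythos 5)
Your proof is correct, and it is essentially the only natural argument: the paper does not supply an explicit proof of this proposition precisely because it follows immediately from the definition of $ATT(c^{cf})$ together with iterated expectations over the counterfactual marginal $F^{cf}$ truncated to $[c^{cf},\infty)$, which is exactly what you do. The one observation worth keeping is the clarifying remark at the end—that the conditioning variable in the definition of $ATT(c^{cf})$ is the counterfactual distorted running variable $R^1_{i1}(c^{cf})$ rather than the observed $R_{i1}$—since the paper's notation $E[T(c^{cf},R_{it})\mid R_{it}\ge c^{cf}]$ leaves this implicit, and it is the reason the integration must use $F^{cf}$ rather than the empirical distribution of the observed running variable.
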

In the estimation procedure below, we will follow the identification formula in  \eqref{eq: local total policy effect} and \eqref{eq: identified ATT } to construct an estimator for the conditional mean and density components and assemble them to construct the final estimators for  \eqref{eq: local total policy effect} and \eqref{eq: identified ATT } .

\section{Estimation and Asymptotic Properties}\label{section: Asymptotics}

In this section, we outline the nonparametric kernel estimation procedures for the key conditional densities and expectations required to compute the counterfactual treatment effects discussed in Section 2. Let \( K(\cdot) \) be the Epanechnikov kernel function, let $K_h(\cdot)\equiv K(\cdot/h)$ denote the bandwidth normalized kernel function, and let \( h_n, b_n \) correspondingly denote the bandwidth sequences for the $(R_{i0},R_{i1})$.

\paragraph{Estimating the Marginal Distribution $f_0(r_0):=f(R_{i0}^0=r_0|Z_i=1)$} This marginal density is later used to infer the counterfactual distribution of $R_{it}^1$ with selection behavior. Note that, for $t=0$, $R_{i0}^0=R_{i0}$, so we can estimate $f_0(r_0)$ by:
\begin{equation} \label{eq: marginal density f(r_0)}
	\hat{f}(R_{i0}^0|Z_i=1)\equiv \widehat{f}_0(r_0) =
	\sum_{i: Z_i = 1} K_{{h}_n}(R_{i0} - r_0).
\end{equation}

\paragraph{Estimating the Conditional Mean \( m_0(r_0):=\mathbb{E}[Y_{i0}^0(R_{i0}^0) \mid R_{i0}^0 = r_0, Z_i = 1] \)}
This expectation is identified from the pre-treatment period \( t = 0 \) in the treated group by realizing that $Y_{i0}^0(R_{i0}^0)=Y_{i0}$ and $R_{i0}^0=r_0$. The corresponding estimator is:
\begin{equation}
	\widehat{m}_0(r_0)\equiv \widehat{\mathbb{E}}[Y_{i0}^0(R_{i0}^0) \mid R_{i0}^0 = r_0, Z_i=1] =
	\frac{ \sum_{i: Z_i = 1} K_{h_n}(R_{i0} - r_0) Y_{i0} }{
		\sum_{i: Z_i = 1} K_{h_n}(R_{i0} - r_0)
	}.
\end{equation}
Here \( h_n \) denotes the bandwidth for the running variable in the $Z_i=1$ group.

\paragraph{Estimating the Mean Difference \(g(r_0):=\mathbb{E}[Y_{i1}^0(R_{i1}^0) - Y_{i0}^0(R_{i0}^0) \mid R_{i0}^0 = r_0, Z_i = 0]\)}  This conditional mean difference quantity is identified from the $Z_i=0$ group by realizing that $Y_{i1}^0(R_{i1}^0)=Y_{i1}$ and $Y_{i0}^0(R_{i0}^0) =Y_{i0}$ for the $Z_i=0$ individuals. Define the difference \( \Delta Y_i = Y_{i1} - Y_{i0} \), and estimate:

\begin{equation}
	\widehat{g}(r_0)\equiv 	\widehat{\mathbb{E}}[\Delta Y_i \mid R_{i0} = r_0] =
	\frac{
		\sum_{i: Z_i = 0} K_{h_n}(R_{i0} - r_0) \cdot \Delta Y_i
	}{
		\sum_{i: Z_i = 0} K_{h_n}(R_{i0} - r_0) 
	}.
\end{equation}
Here \( h_n \) denotes the bandwidths for \( R_{i0} \).

\paragraph{Estimating the Conditional Density \( f_{1|0}(r_1|r_0):=f(R_{i1}^1 \mid R_{i0}^0, Z_i = 1) \)}
This density captures selection behavior in the treated group. Using observed running variable \( R_{i1} \) to replace \( R_{i1}^1 \) and $R_{i0}$ to replace $R_{i0}^0$ for the $Z_i=1$ group, we estimate:

\begin{equation}\label{eq: conditional density}
	\widehat{f}_{1|0}(r_1 \mid r_0) \equiv \widehat{f}(R_{i1}^1 \mid R_{i0}^0, Z_i = 1)	=
	\frac{
		\sum_{i: Z_i = 1} K_{b_n}(R_{i1} - r_1) K_{h_n}(R_{i0}^0 - r_0)
	}{
		\sum_{i: Z_i = 1} K_{h_n}(R_{i0}^0 - r_0)
	}.
\end{equation}
The marginal density of counterfactual running variable $R_{i1}^1(c^{cf})$ can be calculated as 
\[
\widehat{f}^{cf}(r_1)=\int_{r_0} \widehat{f}_{1|0}(r_1-\Delta c|r_0-\Delta c) \hat{f}_0(r_0) dr_0
\]

\paragraph{Estimating the Conditional Mean $m_1(r):=E[Y_{i1}|R_{i1}=r,Z_i=1]$} This object corresponds to the conditional mean $E[Y^1_{i1}|R^1_{i1}=r,Z_i=1]$ if $r\ge c$. Since for the $Z_i=1$ group, we only observe $Y_{i1}^1$ for $R_{i1}\ge c$, we have a boundary issue for the nonparametric estimation. Therefore, we need to use the boundary kernel estimation. Let $\widehat{m}_1(r)$ be the nonparametric estimator such that 
\begin{equation}
	\widehat m_1(r)
	=
	\begin{cases}
		\dfrac{\displaystyle
			\sum_{i:Z_i=1} 
			K\!\left(\dfrac{R_{i1}-r}{b_n}\right) Y_{i1}}
		{\displaystyle
			\sum_{i:Z_i=1} 
			K\!\left(\dfrac{R_{i1}-r}{b_n}\right)},
		& \text{if } r-c \ge \tau\,b_n,\\[3ex]
		\dfrac{\displaystyle
			\sum_{i:Z_i=1} 
			K^{\mathrm{bd}}_{r,b_n}\!\left(\dfrac{R_{i1}-r}{b_n}\right) Y_{i1}}
		{\displaystyle
			\sum_{i:Z_i=1} 
			K^{\mathrm{bd}}_{r,b_n}\!\left(\dfrac{R_{i1}-r}{b_n}\right)},
		& \text{if } c \le r-c < \tau\,b_n,
	\end{cases}
\end{equation}
where the boundary kernel \(K^{\mathrm{bd}}_{r,b_n}(u)\) adjusts the moments of the
base kernel \(K(u)\) on the truncated support \(\{u \ge (c-r)/b_n\}\):
\[
A_0(r)=\int_{(c-r)/b_n}^{\infty} K(u)\,du,\qquad
A_1(r)=\int_{(c-r)/b_n}^{\infty} u\,K(u)\,du,\qquad
A_2(r)=\int_{(c-r)/b_n}^{\infty} u^2\,K(u)\,du,
\]
\[
a(r)=\frac{A_2(r)}{A_0(r)A_2(r)-A_1(r)^2},\qquad
b(r)=-\,\frac{A_1(r)}{A_0(r)A_2(r)-A_1(r)^2},
\]
\[
K^{\mathrm{bd}}_{r,b_n}(u)
=
\bigl[a(r)+b(r)\,u\bigr]\,
K(u)\,
\mathbbm{1}\!\left\{u\ge \tfrac{c-r}{b_n}\right\}.
\]
Here \(c\), the policy cutoff value, is the lower boundary of the support of \(R_{i1}^1\),
and \(\tau>0\) (e.g.\ \(\tau=1\)) determines the width of the boundary region.

\paragraph{Estimating the Counterfactual Total Policy Effect} Given the nonparametric estimators in \eqref{eq: marginal density f(r_0)}- \eqref{eq: conditional density}, we estimate the counterfactual conditional total policy effect 
\begin{equation}
	\begin{split}
		\widehat{T(c^{cf},r)}=\widehat{m}_1(r)-\int_{r_0}  \left[ \widehat{g}(r_0)+\widehat{m}_0(r_0) \right] \hat{f}_{1|0}(r-\Delta c|r_0-\Delta c)\frac{\hat{f}_0(r_0)}{\hat{f}^{cf}(r)}dr_0.
	\end{split}
\end{equation}
For the overall treatment effect on the treated, we can estimate by  
\begin{equation*}
	\begin{split}
		\widehat{ATT}(c^{cf}) 
		&= \int_{r \ge c^{cf}} \widehat{T}(c^{cf},r) \widehat{f}^{cf}(r)\,dr/ (1-\widehat{F}^{cf}(c^{cf}) )\\
		&=  \frac{\int_{r \ge c^{cf}} \widehat{m}_1(r) \widehat{f}^{cf}(r)dr -\int_{r \ge c^{cf}} \int_{r_0}
			\big[\,\widehat{g}(r_0)+\widehat{m}_0(r_0)\,\big]\,
			\widehat{f}_{1\mid 0}(r-\Delta c \mid r_0-\Delta c)\,
			\widehat{f}_0(r_0)\,dr_0\,dr }{1-\widehat{F}^{cf}(c^{cf})}
	\end{split}
\end{equation*}
where $\widehat{F}^{cf}(c^{cf})=\int_{r\ge c^{cf}} \widehat{f}^{cf}(r) dr$. 
\subsection{Asymptotic Properties}

\begin{assumption}[Kernel Function]\label{ass:G3}
	The univariate kernel function $K(\cdot)$ is symmetric, bounded, and integrates to one, 
	and it is of order~2 (i.e.\ $\int K(u)\,du=1$, $\int u\,K(u)\,du=0$, and 
	$\int u^2 K(u)\,du<\infty$). 
	For multivariate estimation, a product kernel $K(u)K(v)$ is used.
\end{assumption}

\begin{assumption}[Design Density]\label{ass:G2}
	For each group $Z_i\in\{0,1\}$, the marginal and joint densities $f_{R_{i0}\mid Z_i}(r_0)$, $f_{R_{i1}\mid Z_i}(r_1)$ and $f_{R_{i0},R_{i1}\mid Z_i}(r_0,r_1)$ exist, are continuous, and are uniformly bounded and bounded away from zero on the compact supports of $R_{i0}$ and $R_{i1}$. 
\end{assumption}

\begin{assumption}[Smoothness of Structural Functions]\label{ass:smoothness}
	The functions 
	$f_{R_{i0}\mid Z_i=1}(r_0)$, 
	$m_0(r_0)=E[Y_{i0}\mid R_{i0}=r_0,Z_i=1]$, 
	$g(r_0)=E[\Delta Y_i\mid R_{i0}=r_0,Z_i=0]$, 
	$f_{R_{i1},R_{i0}\mid Z_i=1}(r_1,r_0)$, 
	and 
	$m_1(r_1)=E[Y_{i1}\mid R_{i1}=r_1,Z_i=1]$
	are twice continuously differentiable in their respective arguments, 
	with uniformly bounded first and second derivatives on compact supports.
\end{assumption}

\begin{assumption}[Balanced Sample Size]\label{ass: balanced sample}
	Let $n_1=\sum_{i} \mathbbm{1}(Z_i=1)$ and $n_0=\sum_{i} \mathbbm{1}(Z_i=0)$, with $n=n_1+n_0$. Then $\lim_{n\rightarrow \infty } n_1/n= \alpha\in (0,1)$.	
\end{assumption}

Assumption \ref{ass:G3}-\ref{ass:smoothness} are standard nonparametric estimation assumptions on the nonparametric estimation. The uniformly bounded density condition in Assumption \ref{ass:G2} is important to derive the uniform convergence of the conditional mean and conditional density to the population. Assumption \ref{ass:smoothness} allows us to derive the uniform bias from kernel estimation. Assumption \ref{ass: balanced sample} is later used to derive the asymptotic normality of the estimator, and shows that the estimation variation from both $Z_i=0$ and $Z_i=1$ group matters asymptotically, which suits the common empirical context that researchers will encounter.

\begin{thm}
	Under Assumptions \ref{ass:G3}-\ref{ass:smoothness}, when $h_n\asymp b_n\asymp n^{-1/5-\delta}$, for a small $\delta\ge0$ number, the estimator of counterfactual conditional total policy effect has the following linear expansion
	{\footnotesize
		\[
		\begin{split}
			&\sqrt{nb_n}\left[\widehat{T(c^{cf},r)}-T(c^{cf},r) \right]\\
			&=    \sqrt{nb_n}\frac{1}{f_1(r)}
			\sum_{i:Z_i=1} K_r^*\!\Big(\frac{R_{i1}-r}{b_n}\Big)\,\eta_{i1}\\
			&-  \frac{1}{\alpha\sqrt{nb_n}}\frac{\int_{r_0} f_{1|0}(r-\Delta c|r_0-\Delta c)f_0(r_0)dr_0}{f^{cf}(r)^2} \sum_{i:Z_i=1}  \frac{\Big[K_{b_n}(R_{i1}-(r-\Delta c))-E\left[K_{b_n}(R_{i1}-(r-\Delta c))|R_{i0}=r-\Delta c\right]\Big]\,f_0(R_{i0}+\Delta c )}{f_{0|1}(R_{i0})}  \\
			&-\frac{1}{\alpha\sqrt{nb_n}}\sum_{Z_i=1} \theta(R_{i0}+\Delta c)\frac{f_0(R_{i0}+\Delta c)}{f^{cf}(r)f_{0|1}(R_{i0})}  \left[K\left(\frac{R_{i1}-r+\Delta c}{b_n}\right)-E\left[K\left(\frac{R_{i1}-r+\Delta c}{b_n}\right)\mid R_{i0}\right]\right] \\
			&+O_p(n^{-2.5\delta}+n^{2/5-1/2-\delta/2} +   n^{-0.3+\delta}\log n  +n^{-\delta/2})\\
		\end{split}
		\] 
	}
	where $\eta_{i1}=Y_{i1}-E[Y_{i1}|R_{i1}=r,Z_i=1]$. 
\end{thm}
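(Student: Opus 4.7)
The plan is to decompose $\widehat{T}(c^{cf},r)-T(c^{cf},r)$ into contributions from the three distinct nonparametric ingredients --- the pointwise regression $\widehat{m}_1(r)$, the pointwise conditional density $\widehat{f}_{1\mid 0}(r-\Delta c\mid\cdot)$, and the smoothing averages $\widehat{g},\widehat{m}_0,\widehat{f}_0$ integrated over $r_0$ --- and to show that only the first two deliver terms at the pointwise rate $\sqrt{nb_n}$. After substituting $r_0\mapsto r_0+\Delta c$ inside the integral defining $\widehat{T}$, I would write $\widehat{T}-T=[\widehat{m}_1(r)-m_1(r)]-I_1-I_2$, where $I_1=\int[\widehat{g}(r_0+\Delta c)+\widehat{m}_0(r_0+\Delta c)-g(r_0+\Delta c)-m_0(r_0+\Delta c)]\,w_0(r_0)\,dr_0$ is integrated against the true weight $w_0(r_0)=f_{1\mid 0}(r-\Delta c\mid r_0)f_0(r_0+\Delta c)/f^{cf}(r)$, and $I_2=\int\theta(r_0+\Delta c)[\widehat{w}(r_0)-w_0(r_0)]\,dr_0$ with $\theta=g+m_0$ denoting the treated-group counterfactual untreated outcome.

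For $\widehat{m}_1(r)-m_1(r)$, both the interior Nadaraya--Watson expansion and the boundary-kernel expansion using $K_{r,b_n}^{\mathrm{bd}}$ collapse into a single representation with boundary-corrected equivalent kernel $K_r^*$, delivering the \emph{first} summand of the theorem together with an $O_p(b_n^2)$ bias that rescales to $O_p(n^{-5\delta/2})$. The term $I_1$ is a linear functional of $\widehat{g}-g$ and $\widehat{m}_0-m_0$ against a smooth weight; standard arguments for integrated kernel regression give $I_1=O_p(n^{-1/2}+h_n^2)$, which after multiplication by $\sqrt{nb_n}$ yields $O_p(n^{-1/10-\delta/2}+n^{-5\delta/2})$, absorbed into the stated remainder.

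The main work is the analysis of $I_2$. I would apply the first-order functional Taylor expansion $\widehat{w}-w_0\approx \frac{1}{f^{cf}(r)}\bigl[B(\widehat{A}-A)+A(\widehat{B}-B)\bigr]-\frac{AB}{f^{cf}(r)^2}(\widehat{C}-C)$ with $A(r_0)=f_{1\mid 0}(r-\Delta c\mid r_0)$, $B(r_0)=f_0(r_0+\Delta c)$, and $C=f^{cf}(r)$. Integration against $\theta(r_0+\Delta c)$ produces three pieces. The $(\widehat{B}-B)$-piece is a smooth integral of the marginal kernel density estimator of $R_{i0}$ and is $O_p(n^{-1/2}+h_n^2)$. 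The $(\widehat{A}-A)$-piece retains the pointwise kernel $K_{b_n}(R_{i1}-(r-\Delta c))$ because $A$ is evaluated only at the single point $r-\Delta c$; linearizing $\widehat{f}_{1\mid 0}$ via the standard ratio-of-kernels expansion and rewriting as an empirical average produces the \emph{third} summand of the theorem, whose influence weight $\theta(R_{i0}+\Delta c)f_0(R_{i0}+\Delta c)/[f^{cf}(r)f_{0\mid 1}(R_{i0})]$ emerges from the Jacobian of $r_0\mapsto r_0+\Delta c$ combined with the $f_0^{-1}$ denominator introduced by the ratio linearization. The $(\widehat{C}-C)$-piece is handled analogously using $\widehat{C}-C=\int(\widehat{A}\widehat{B}-AB)$: the same pointwise $r_1$-kernel survives, and integration against the scalar coefficient $\int\theta AB/f^{cf}(r)^2$ yields the \emph{second} summand.

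Finally, all quadratic cross-terms from the Taylor expansions of $\widehat{m}_1$ and of $\widehat{w}$ are uniformly $O_p(\log n/(nb_n))$ by Assumptions~\ref{ass:G2}--\ref{ass:smoothness} and standard uniform convergence rates for kernel estimators; after multiplication by $\sqrt{nb_n}$ they become $O_p(\log n/\sqrt{nb_n})$, which is dominated by $O_p(n^{-0.3+\delta}\log n)$. Combining with the bias contribution and with $I_1$ reproduces the stated remainder. The hardest step is the $I_2$ analysis: matching the influence-function weights in the second and third summands requires carefully tracking how the substitution $r_0\mapsto r_0+\Delta c$ interacts with the ratio linearization of $\widehat{f}_{1\mid 0}$ and with the re-expression of $\widehat{C}-C$ via its constituents, while retaining uniform control over the $r_0$-integration so that the pointwise representations assemble into honest empirical sums. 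A secondary technical point is verifying that the boundary-kernel construction makes the expansion of $\widehat{m}_1$ uniform across the transition $r-c=\tau b_n$, which is standard once one invokes the local-linear moment structure encoded by $a(r)$ and $b(r)$ in $K_{r,b_n}^{\mathrm{bd}}$.
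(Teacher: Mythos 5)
Your decomposition and handling of each piece track the paper's own proof closely: you split $\widehat{T}-T$ into the $\widehat m_1$ error (paper's Lemma~\ref{lem:m1-bdry-unif}), the $\widehat\theta$ error integrated against the true weight, and the weight error $\widehat w-w_0$ linearized into $(\widehat A-A)$, $(\widehat B-B)$, $(\widehat C-C)$ pieces (paper's Lemmas~\ref{lem:theta-phi-product} and~\ref{lemma: linearization for the second term in T}), and you correctly identify which pieces survive at the $\sqrt{nb_n}$ rate. This is essentially the same argument.

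One observation worth flagging: for the $(\widehat C-C)$ piece you write the scalar coefficient as $\int\theta(r_0+\Delta c)f_{1|0}(r-\Delta c\mid r_0)f_0(r_0+\Delta c)\,dr_0/f^{cf}(r)^2$, i.e.\ you retain $\theta$ under the $r_0$-integral. Since the $\mu_i(r)$ influence term is $r_0$-free and it multiplies $\theta(r_0)\,f_{1|0}\,f_0/f^{cf}(r)^2$ inside $\int_{r_0}\theta(r_0)\mathsf{S}_\phi(r,r_0)\,dr_0$, the $\theta$ should indeed stay under the integral; the paper's displayed coefficient $\int f_{1|0}f_0\,dr_0/f^{cf}(r)^2=1/f^{cf}(r)$ appears to have dropped it. Your version is the careful one, and the rest of the two arguments agree.
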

\begin{proof}
	Follows directly from Lemma \ref{lem:m1-bdry-unif} and \ref{lemma: linearization for the second term in T}.
\end{proof}

The asymptotic linear expansion shows several things: first, the optimal MSE for the $\widehat{T(c^{cf},r)}$ is $n^{-4/5}$ by choosing $\delta=0$. Note that even if our estimator involves conditioning on 2-dimensional variables $R_{i0},R_{i1}$, the convergence rate is the same as the standard 1-dimensional nonparametric kernel estimation, because we can undersmooth the bandwidth for $h_n$ and exploit the smoothing in the integration in the estimator; second, the linear expansion suggests the asymptotic normality of the estimator, and a fast multiplier bootstrap method for inference on the conditional total policy effect.

\begin{thm}\label{thm: root-n consistency of the ATT(r)}
	Under Assumptions \ref{ass:G3}-\ref{ass:smoothness}, when $h_n\asymp b_n\asymp n^{-1/4-\delta}$, for $\delta\in (0,1/12)$, the estimator of the counterfactual total policy effect has the following linear expansion
	\[
	\begin{split}
		&\quad \sqrt{n}{F}^{cf}(c^{cf})\left(\widehat{ATT}(c^{cf})-ATT(c^{cf})\right)\\
		&= \frac{1}{\alpha \sqrt{n}}\sum_{i:Z_i=1} \frac{\eta_{i1}f^{cf}(R_{i1})}{f_1(R_{i1})}F_{K^*}\left(\frac{R_{i1}-c^{cf}}{b_n}\right) \\
		&+\frac{1}{\alpha \sqrt{n}} \sum_{i:Z_i=1} m_1(R_{i1}+\Delta c) \left\{F_{K,i}-E\left[F_{K,i}|R_{i0}\right] \right\} \frac{f_0(R_{i0}+\Delta c)}{f_0 (R_{i0})}\\
		&+\frac{1}{\alpha \sqrt{n}}  \sum_{i:Z_i=1}  \left\{\int_{r\ge c^{cf}} m_1(r)f_{1|0}(r_1-\Delta c|R_{i0}-\Delta c) dr - E\left[\int_{r\ge c^{cf}} m_1(r)f_{1|0}(r_1-\Delta c|R_{i0}-\Delta c) \right] \right\}\\
		&-\frac{1}{(1-\alpha)\sqrt{n}}\sum_{i:Z_i=0} \frac{f_0(R_{i0})}{\tilde{f}_0(R_{i0})} \varepsilon_i \kappa(R_{i0}) -  \frac{1}{\alpha \sqrt{n}}\sum_{i:Z_i=1}\eta_{i0} \kappa(R_{i0})\\
		&-\frac{1}{\alpha\sqrt{n}}\sum_{i: Z_i=1} \left[\theta(R_{i0})(1-F_{1|0}(c^{cf}-\Delta c|R_{i0}-\Delta c))-E[\theta(R_{i0})(1-F_{1|0}(c^{cf}-\Delta c|R_{i0}-\Delta c)|Z_i=1]\right]\\
		&- \frac{1}{\alpha\sqrt{n}} 
		\sum_{i: Z_i = 1} 
		\pi(R_{i0})
		\left[
		F_K\!\left( \frac{R_{i1} - (c^{cf} - \Delta c)}{b_n} \right)
		- 
		\mathbb{E}\!\left\{
		F_K\!\left( \frac{R_{i1} - (c^{cf} - \Delta c)}{b_n} \right)
		\Bigm| R_{i0}
		\right\}
		\right]+o_p(1),
	\end{split}
	\]
	where $\eta_{i1}:=Y_{i1}-m_1(R_{i1})$, $\varepsilon_i=Y_{i1}-Y_{i0} - E[Y_{i1}-Y_{i0}|R_{i0},Z_i=0]$, $\kappa(r_0)=(1-F_{1|0}(c^{cf}-\Delta c|R_{i0}=r_0-\Delta c))$, $F_{1|0}(a,b)=Pr(R_{i1}\le a| R_{i0}=b,Z_i=1)$, $\tilde{f}_0(r_0)=f(R_{i0}=r_0|Z_i=0)$, $F_K(t)=\int_{t}^\infty K(u) du$ for kernel $K$, and $\pi(x) = 
	\frac{
		\theta(x + \Delta c)\, f_0(x + \Delta c)
	}{
		f_{0}(x)
	}
	$.
\end{thm}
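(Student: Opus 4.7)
The plan is to combine a ratio linearization for $\widehat{ATT}(c^{cf})=\widehat N/\widehat S$ with first-order expansions of each of the five nonparametric building blocks $\widehat m_1,\widehat g,\widehat m_0,\widehat f_{1|0},\widehat f_0$, and then collect the resulting scores into the seven summands in the statement. Writing $S=1-F^{cf}(c^{cf})$ and $N=\int_{c^{cf}}^{\infty}T(c^{cf},r)f^{cf}(r)\,dr$, I would start from
\[
\widehat{ATT}-ATT=\frac{\widehat N-N}{S}-\frac{N}{S^{2}}(\widehat S-S)+R_n,
\]
where $R_n$ is quadratic in the nonparametric errors. Under Assumption \ref{ass:smoothness} and the bandwidth choice $h_n\asymp b_n\asymp n^{-1/4-\delta}$ with $\delta\in(0,1/12)$, each component admits a uniform rate strictly faster than $n^{-1/4}$ (up to $\sqrt{\log n}$), so $R_n=o_p(n^{-1/2})$. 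I would then decompose $\widehat N-N$ and $\widehat S-S$ by perturbing one component at a time, holding the others at population values.

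For the numerator, linearizing $\widehat m_1$ inside $\int_{c^{cf}}^{\infty}(\widehat m_1(r)-m_1(r))f^{cf}(r)\,dr$ and swapping the order of integration converts the pointwise boundary kernel at $r$ into the integrated kernel $F_{K^{*}}((R_{i1}-c^{cf})/b_n)$, producing the first score; Lemma \ref{lem:m1-bdry-unif} supplies the uniform boundary approximation. Linearizing $\widehat f_{1|0}$ inside $\int_{c^{cf}}^{\infty}\int m_1(r)\widehat f_{1|0}(r-\Delta c\mid r_0-\Delta c)f_0(r_0)\,dr_0\,dr$, changing variables $r\mapsto R_{i1}+\Delta c$ after the kernel $K_{h_n}(R_{i0}-(r_0-\Delta c))$ concentrates $r_0$ at $R_{i0}+\Delta c$, produces the second score with its $m_1(R_{i1}+\Delta c)f_0(R_{i0}+\Delta c)/f_0(R_{i0})$ coefficient. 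Linearizing $\widehat f_0$ against the same integrand delivers the third score (a smooth functional of $R_{i0}$ minus its expectation). The kernel-regression expansions of $\widehat g$ on the $Z_i=0$ subsample and $\widehat m_0$ on the pre-treatment $Z_i=1$ subsample, each averaged against $\kappa(R_{i0})=1-F_{1|0}(c^{cf}-\Delta c\mid R_{i0}-\Delta c)$ obtained from integrating $f_{1|0}$ over $r\ge c^{cf}$, yield the fourth and fifth scores, with the $f_0/\tilde f_0$ ratio arising by Bayes inversion from the control to the treated design density.

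The correction $-\tfrac{N}{S^{2}}(\widehat S-S)$ contributes only through $\widehat f_{1|0}$ and $\widehat f_0$, since $\widehat S=1-\int_{c^{cf}}^{\infty}\int\widehat f_{1|0}\widehat f_0\,dr_0\,dr$. Linearizing $\widehat f_0$ produces the sixth score: a summand of the form $[\kappa(R_{i0})-E\kappa]$ weighted by $\theta(R_{i0})$ coming from the ratio coefficient. Linearizing $\widehat f_{1|0}$ and integrating over $r\ge c^{cf}$ converts the $K_{b_n}$ factor into $F_K((R_{i1}-(c^{cf}-\Delta c))/b_n)$, producing the seventh score with weight $\pi(R_{i0})$. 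The prefactors $1/\alpha$ and $1/(1-\alpha)$ arise because the subgroup sums scale as $n_1$ and $n_0$, converted to the common scale $n$ through Assumption \ref{ass: balanced sample}.

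The main obstacle is the interaction between the boundary kernel for $\widehat m_1$ and the outer integration. Showing that $\int_{c^{cf}}^{\infty}K^{\mathrm{bd}}_{r,b_n}((R_{i1}-r)/b_n)\,dr$ collapses to $F_{K^{*}}((R_{i1}-c^{cf})/b_n)$ uniformly requires careful treatment of the affine correction $a(r)+b(r)u$ on the boundary band of width $\tau b_n$, and this is precisely where the restriction $\delta<1/12$ appears, ensuring the residual boundary contribution is $o_p(n^{-1/2})$. A secondary bookkeeping difficulty is that $\widehat f_{1|0}$ is linearized twice---once inside the numerator (multiplied by $m_1$ after change of variable) and once inside the denominator (multiplied by $1$)---and these must combine without double-counting to reproduce exactly the coefficients $\pi(R_{i0})$ and $m_1(R_{i1}+\Delta c)f_0(R_{i0}+\Delta c)/f_0(R_{i0})$. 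Once all seven influence-function terms are assembled, the $o_p(1)$ remainder follows from standard U-statistic degeneracy and stochastic equicontinuity arguments applied to the product remainders of order $O_p(n^{-3/4+\delta}\log n)$.
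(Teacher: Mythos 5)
There is a concrete bookkeeping error in your decomposition that would prevent you from arriving at the stated expansion. You attribute the sixth and seventh influence terms (the centered $\theta(R_{i0})\bigl(1-F_{1|0}(c^{cf}-\Delta c\mid R_{i0}-\Delta c)\bigr)$ term and the $\pi(R_{i0})$-weighted $F_K$ term) to the denominator correction $-\tfrac{N}{S^2}(\widehat S-S)$. That cannot be right: the denominator $\widehat S$ does not involve $\widehat g+\widehat m_0$, so linearizing it yields terms whose weights are the constant $N/S^{2}$ (i.e.\ proportional to $ATT$) times $\kappa(R_{i0})$-type and $f_0(R_{i0}+\Delta c)/f_0(R_{i0})$-type factors; the weights $\theta(R_{i0})$ and $\pi(R_{i0})=\theta(R_{i0}+\Delta c)f_0(R_{i0}+\Delta c)/f_0(R_{i0})$ can only arise from the $\theta(r_0)$-weighted integrand of the \emph{second numerator integral} $\int_{r\ge c^{cf}}\int_{r_0}\widehat\theta(r_0)\,\widehat f_{1|0}\,\widehat f_0\,dr_0\,dr$. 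In the paper these two terms are exactly Steps 1 and 2 of Lemma \ref{lem: linearization for the ATT integrant} (the $\Psi_{1|0,i}$ and $\tilde\Xi^{(0)}_i$ contributions of $\widehat f_{1|0}$ and $\widehat f_0$ inside that integral), while your numerator accounting only linearizes $\widehat\theta=\widehat g+\widehat m_0$ there and thus leaves those density-error contributions unassigned. Carrying out your plan as written would therefore produce sixth and seventh scores with the wrong coefficients, omit the genuine $\theta$- and $\pi$-weighted terms, and introduce an $ATT$-proportional denominator term that does not appear in the theorem at all; the paper's own proof consists solely of the two numerator linearizations, Lemmas \ref{lem: linearization for the ATT integrant} and \ref{lem: linearization for m1 integration for ATT}, with no $-\tfrac{N}{S^2}(\widehat S-S)$ correction entering the stated expansion.

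Two smaller points. First, your claim that $\delta<1/12$ is needed to control the interaction between the boundary kernel of $\widehat m_1$ and the outer integration is not where the restriction comes from in the paper: it arises because the uniform product-linearization remainder $O_p\bigl(h_n^2+b_n^2+\log n/(nh_n\sqrt{b_n})\bigr)$ must be $o_p(n^{-1/2})$, and with $h_n\asymp b_n\asymp n^{-1/4-\delta}$ the term $\sqrt{n}\,\log n/(nh_n\sqrt{b_n})$ vanishes precisely when $\delta<1/12$ (see the first display of the proof of Lemma \ref{lem: linearization for the ATT integrant}). Second, your treatment of the first numerator integral (converting the pointwise kernel into $F_{K^*}$ via Lemma \ref{lem: change of variable integration } and splitting the $\widehat f^{cf}$ error into its $\widehat f_{1|0}$ and $\widehat f_0$ components as in Lemma \ref{lem:lin_fcf}) does match the paper's Lemma \ref{lem: linearization for m1 integration for ATT}, so the first three scores and the $\kappa$-weighted fourth and fifth scores are correctly sourced.
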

\begin{proof}
	Follows directly from Lemmas \ref{lem: linearization for the ATT integrant} and  \ref{lem: linearization for m1 integration for ATT}.
\end{proof}

We achieve $\sqrt{n}$ convergence for the $\widehat{ATT}(c^{cf})$ even if we use nonparametric density estimation in the process. This is because we use the undersmoothing in the bandwidth. The ratio $ \frac{f_0(r)}{\tilde{f}_0(r)}$ reflects the covariate $R_{i0}$ density shift across the two groups.

\begin{Corollary}\label{cor:att_clt_quad_compact}
	Let the conditions of Theorem \ref{thm: root-n consistency of the ATT(r)} and Assumption \ref{ass: balanced sample} hold, 
	assume a multivariate Lindeberg--Feller central limit theorem for the terms below,
	and suppose $\widehat{F}^{cf}(c^{cf}) \overset{p}{\to} F^{cf}(c^{cf})>0$.
	
	Define, for each observation $i$, the $7\times 1$ random vector
	{\footnotesize
		\[
		\bm{\xi}_i
		=
		\begin{pmatrix}
			\mathbf{1}\{Z_i=1\}
			\dfrac{\eta_{i1} f^{cf}(R_{i1})}{f_1(R_{i1})}
			F_{K^*}\!\left(\dfrac{R_{i1}-c^{cf}}{b_n}\right)
			\\[1.2em]
			\mathbf{1}\{Z_i=1\}
			m_1(R_{i1}+\Delta c)
			\Big\{
			F_{K,i}-\mathbb{E}[F_{K,i}\mid R_{i0}]
			\Big\}
			\dfrac{f_0(R_{i0}+\Delta c)}{f_0(R_{i0})}
			\\[1.2em]
			\mathbf{1}\{Z_i=1\}
			\Bigg[
			\int_{r\ge c^{cf}}
			m_1(r)\,
			f_{1|0}(r-\Delta c \mid R_{i0}-\Delta c)\,dr
			-
			\mathbb{E}\Big[
			\int_{r\ge c^{cf}}
			m_1(r)\,
			f_{1|0}(r-\Delta c \mid R_{i0}-\Delta c)\,dr
			\Big]
			\Bigg]
			\\[1.2em]
			\mathbf{1}\{Z_i=0\}
			\Bigg[
			-\frac{f_0(R_{i0})}{\tilde{f}_0(R_{i0})}
			\,\varepsilon_i \,\kappa(R_{i0})
			\Bigg]
			\\[1.2em]
			\mathbf{1}\{Z_i=1\}
			\Big[
			-\eta_{i0}\,\kappa(R_{i0})
			\Big]
			\\[1.2em]
			\mathbf{1}\{Z_i=1\}
			\Bigg[
			-\Big(
			\theta(R_{i0})
			\big(
			1-F_{1|0}(c^{cf}-\Delta c \mid R_{i0}-\Delta c)
			\big)
			-
			\mathbb{E}\big[
			\theta(R_{i0})
			(1-F_{1|0}(c^{cf}-\Delta c \mid R_{i0}-\Delta c))
			\mid Z_i=1
			\big]
			\Big)
			\Bigg]
			\\[1.2em]
			\mathbf{1}\{Z_i=1\}
			\Bigg[
			-\pi(R_{i0})
			\Big(
			F_K\!\left(
			\dfrac{R_{i1}-(c^{cf}-\Delta c)}{b_n}
			\right)
			-
			\mathbb{E}\!\left\{
			F_K\!\left(
			\dfrac{R_{i1}-(c^{cf}-\Delta c)}{b_n}
			\right)
			\Bigm| R_{i0}
			\right\}
			\Big)
			\Bigg]
		\end{pmatrix}.
		\]}
	
	Let $V=\mathbb{V}(\bm{\xi}_i)$ denote the $7\times 7$ covariance matrix of $\bm{\xi}_i$, and define
	$
	\bm{\alpha}
	=\frac{1}{{F}^{cf}(c^{cf})}	(1/\alpha ,
	1/\alpha ,
	1/\alpha,
	1/(1-\alpha) ,
	1/\alpha ,
	1/\alpha ,
	1/\alpha)'
	$, then the linear expansion in Theorem \ref{thm: root-n consistency of the ATT(r)} can be written as
	\[
	\sqrt{n}
	\big(\widehat{ATT}(c^{cf})-ATT(c^{cf})\big)
	=
	\sqrt{n}\;
	\bm{\alpha}' \,\bar{\bm{\xi}}_n
	+o_p(1),
	\qquad
	\bar{\bm{\xi}}_n
	:=\frac{1}{n}\sum_{i=1}^n \bm{\xi}_i.
	\]
	
	Suppose $\bm{\xi}_i$ satisfies the Lindeberg condition for CLT, then
	\[
	\sqrt{n}\,\big(\widehat{ATT}(c^{cf})-ATT(c^{cf})\big)
	\;\overset{d}{\longrightarrow}\;
	\mathcal{N}\!\left(
	0,\;\bm{\alpha}' V \bm{\alpha}
	\right).
	\]
\end{Corollary}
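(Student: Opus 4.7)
The plan is to recast the seven--term asymptotic linearization of Theorem \ref{thm: root-n consistency of the ATT(r)} as a single inner product $\sqrt{n}\,\bm{\alpha}'\bar{\bm{\xi}}_n$ of the vector sample mean, verify that each component of $\bm{\xi}_i$ is mean zero, and then conclude by the assumed multivariate Lindeberg--Feller CLT together with Slutsky's lemma. The division by $F^{cf}(c^{cf})$ is absorbed into the coefficient vector $\bm{\alpha}$, while the assumed consistency $\widehat{F}^{cf}(c^{cf})\overset{p}{\to}F^{cf}(c^{cf})>0$ is only needed if one prefers plugging in the estimated denominator in finite samples.

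The first step is purely algebraic: each of the seven summands on the right--hand side of Theorem \ref{thm: root-n consistency of the ATT(r)} has the form
\[
\frac{c_k}{\sqrt{n}}\sum_{i:Z_i=z_k} g_{k,n}(Y_i,R_{i0},R_{i1}),\qquad c_k\in\{1/\alpha,\,1/(1-\alpha)\},\quad z_k\in\{0,1\},
\]
which I would rewrite as $c_k\sqrt{n}\cdot n^{-1}\sum_{i=1}^{n}\mathbf{1}\{Z_i=z_k\}\,g_{k,n}(\cdot)$. Identifying $\xi_{i,k}:=\mathbf{1}\{Z_i=z_k\}\,g_{k,n}(\cdot)$ with the $k$th component of $\bm{\xi}_i$ in the corollary's definition, and absorbing the $1/F^{cf}(c^{cf})$ factor on the left--hand side into $\bm{\alpha}$, produces
\[
\sqrt{n}\bigl(\widehat{ATT}(c^{cf})-ATT(c^{cf})\bigr)=\sqrt{n}\,\bm{\alpha}'\bar{\bm{\xi}}_n + o_p(1).
\]

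The second step is to check $\mathbb{E}[\bm{\xi}_i]=\mathbf{0}$, which is needed for the CLT limit to be centered. Components driven by the nonparametric residuals $\eta_{i1}$, $\eta_{i0}$, and $\varepsilon_i$ are mean zero by the conditional--mean definition of $m_1(R_{i1})$, $m_0(R_{i0})$, and the $Z_i=0$ regression of $\Delta Y_i$ on $R_{i0}$. Components of the form $A_i-\mathbb{E}[A_i\mid R_{i0}]$ (rows~2 and~7) are mean zero by iterated expectations after multiplication by any $R_{i0}$--measurable weight. The row--6 demeaning is done conditionally on $Z_i=1$, which is exactly matched by the indicator $\mathbf{1}\{Z_i=1\}$ appearing in the same row, so the full component integrates to zero unconditionally.

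Finally, the assumed multivariate Lindeberg--Feller CLT for the triangular array $\{\bm{\xi}_i\}$ gives $\sqrt{n}\,\bar{\bm{\xi}}_n\overset{d}{\to}\mathcal{N}(\mathbf{0},V)$ with $V=\mathbb{V}(\bm{\xi}_i)$; the continuous mapping theorem then delivers $\sqrt{n}\,\bm{\alpha}'\bar{\bm{\xi}}_n\overset{d}{\to}\mathcal{N}(0,\bm{\alpha}'V\bm{\alpha})$; and Slutsky's lemma absorbs both the linearization remainder and (if desired) the denominator estimation into the normal limit. The principal obstacle is verifying the Lindeberg condition itself, since rows~2 and~7 contain $K_{b_n}$--scaled factors whose pointwise magnitude grows like $1/b_n$, creating a genuine triangular array with $n$--dependent variance structure. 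The natural route would be a truncation argument exploiting the boundedness of $K$ together with Assumption \ref{ass:G2} (densities bounded away from zero on compact support), but the corollary deliberately sidesteps this delicate step by taking Lindeberg as a hypothesis.
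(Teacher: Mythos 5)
Your proof is essentially the argument the paper implicitly relies on (it supplies no separate proof for this corollary): rewrite each of the seven group-restricted sums in Theorem \ref{thm: root-n consistency of the ATT(r)} as a full-sample average weighted by the appropriate indicator, absorb the constants $1/\alpha$, $1/(1-\alpha)$, and the left-hand factor $1/F^{cf}(c^{cf})$ into $\bm{\alpha}$, verify centering, and invoke the assumed multivariate CLT with Slutsky for the $o_p(1)$ remainder.

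Two corrections to your discussion are worth making for precision. First, rows~2 and~7 of $\bm{\xi}_i$ contain the kernel's \emph{integral} $F_K$, not the scaled density $K_{b_n}$; these factors are uniformly bounded (indeed $F_K\in[0,1]$ for a nonnegative second-order kernel) and there is no $1/b_n$ blow-up. The array is triangular through the $b_n$ inside $F_K$ and $F_{K^*}$, but with bounded entries and a stabilizing variance the Lindeberg condition is not delicate; this is presumably why the paper simply hypothesizes it rather than verifying it. Second, the centering $\mathbb{E}[F_{K,i}\mid R_{i0}]$ in row~2 is multiplied by $m_1(R_{i1}+\Delta c)$, which also depends on $R_{i1}$, so $\mathbb{E}[\xi_{i,2}]$ is not exactly zero but only $O(b_n^2)$ (inherited from the change-of-variable expansion in Lemma \ref{lem: linearization for m1 integration for ATT}); with $b_n\asymp n^{-1/4-\delta}$ this contributes $O(\sqrt{n}\,b_n^2)=o(1)$ after scaling, so it folds harmlessly into the remainder, but the mean-zero check for that row should be stated as asymptotic rather than exact.
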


A consistent estimator of $V$ can be found by plugging in consistent estimators to first estimate a sample $\hat{\bm{\xi}}_i$, and then find the covariance matrix of $\hat{\bm{\xi}}_i$. 

\subsection{Bootstrap Inference}
If we rely on Corollary \ref{cor:att_clt_quad_compact} to do inference on the $ATT$, then we need to estimate $\bm{\xi}_i$, which can be complicated to implement as we need to estimate multiple nonparametric objects. We now propose the bootstrap estimator which can be easier to implement at the cost of slightly more computational power.

\begin{prop}\label{prop: bootstrap validity}
	Let $b=1,2,...,B$ be the empirical bootstrap sample with replacement, and denote $(R_{i1}^{*,b},R_{i0}^{*,b},Y_{i1}^{*,b},Y_{i0}^{*,b},Z_i^{*,b})$ the bootstrap data for sample $b$.  Consider the balanced-group bootstrap, i.e., generate the bootstrap sample $b$ with exactly $n_0$ of $Z_i^{*,b}=0$ and $n_1$ of $Z_{i}^{*,b}=1$.  Suppose $\widehat{\bm{\xi}}_i$ is sup-norm consistent for ${\bm{\xi}}_i$ such that $\sup_{i} |\widehat{\bm{\xi}}_i-\bm{\xi}_i|\rightarrow_p 0$. Now let $ATT^{*,b}(c^{cf})$ be the bootstrapped counterfactual estimator of the total policy effects, and let $c^*_{\alpha/2}$ and $c^*_{1-\alpha/2}$ be the empirical $\alpha$ and $1-\alpha/2$ quantile of $(ATT^{*,b}(c^{cf})-\widehat{ATT}(c^{cf}))$, then the confidence interval 
	\[
	\left[\widehat{ATT}(c^{cf})-c^*_{1-\alpha/2},\, \widehat{ATT}(c^{cf})+c^*_{1-\alpha/2} \right]
	\]
	is a valid $\alpha$-confidence interval for the true counterfactual $ATT(c^{cf})$. 
\end{prop}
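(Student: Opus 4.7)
The plan is to mirror the asymptotic linear expansion of Theorem \ref{thm: root-n consistency of the ATT(r)} in the bootstrap world, and then appeal to a conditional central limit theorem for the balanced-group empirical bootstrap. First, I would re-run the same linearization argument used in Lemmas \ref{lem: linearization for the ATT integrant} and \ref{lem: linearization for m1 integration for ATT} inside the bootstrap sample to obtain, conditional on the data $\mathcal{D}_n$,
\[
\sqrt{n}\bigl(\widehat{ATT}^{*,b}(c^{cf})-\widehat{ATT}(c^{cf})\bigr)
= \sqrt{n}\,\bm{\alpha}'\bigl(\bar{\bm{\xi}}^{*,b}_n-\widehat{\bar{\bm{\xi}}}_n\bigr) + o_{p^{*}}(1),
\]
where $\widehat{\bar{\bm{\xi}}}_n=n^{-1}\sum_{i}\widehat{\bm{\xi}}_i$ and $\bar{\bm{\xi}}^{*,b}_n=n^{-1}\sum_{i}\widehat{\bm{\xi}}_i^{*,b}$. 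This step is justified because the kernel objects $\widehat{m}_0,\widehat{g},\widehat{m}_1,\widehat{f}_{1|0},\widehat{f}_0$ satisfy the same uniform convergence rates in the bootstrap sample as in the original sample under the undersmoothing $h_n\asymp b_n\asymp n^{-1/4-\delta}$, since the bootstrap resamples are drawn from a law whose densities and regression functions converge uniformly to the population ones.

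Next, I would establish the conditional CLT. Because the bootstrap is stratified to preserve $n_1$ and $n_0$, I can split $\bar{\bm{\xi}}^{*,b}_n-\widehat{\bar{\bm{\xi}}}_n$ into two independent stratum sums, each a mean of $n_g$ i.i.d.\ draws (given $\mathcal{D}_n$) from the empirical distribution of $\{\widehat{\bm{\xi}}_i:Z_i=g\}$. The sup-norm consistency $\sup_i\|\widehat{\bm{\xi}}_i-\bm{\xi}_i\|\to_p 0$ combined with $L^2$-consistency of the empirical second moments of $\widehat{\bm{\xi}}_i$ to the population moments of $\bm{\xi}_i$ allows me to verify the Lindeberg condition inside each stratum. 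Applying the conditional Lindeberg--Feller CLT stratum-by-stratum and combining via Assumption \ref{ass: balanced sample} yields
\[
\sqrt{n}\,\bm{\alpha}'\bigl(\bar{\bm{\xi}}^{*,b}_n-\widehat{\bar{\bm{\xi}}}_n\bigr)\,\big|\,\mathcal{D}_n
\;\overset{d}{\longrightarrow}\;\mathcal{N}\bigl(0,\bm{\alpha}'V\bm{\alpha}\bigr)
\]
in probability, which is precisely the asymptotic law of $\sqrt{n}(\widehat{ATT}(c^{cf})-ATT(c^{cf}))$ from Corollary \ref{cor:att_clt_quad_compact}.

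Given distributional consistency of the bootstrap, the empirical quantiles of $\widehat{ATT}^{*,b}(c^{cf})-\widehat{ATT}(c^{cf})$ converge in probability to the corresponding quantiles of $\widehat{ATT}(c^{cf})-ATT(c^{cf})$. Coverage validity of the interval in the proposition then follows from a standard Polya-type argument using the continuity of the limiting normal CDF. I would also note that since the limiting law is symmetric around zero, the symmetric-looking interval built from the single quantile $c^{*}_{1-\alpha/2}$ is asymptotically equivalent to the usual percentile interval using both $c^{*}_{\alpha/2}$ and $c^{*}_{1-\alpha/2}$.

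The main obstacle is the first step: controlling the bootstrap analog of the kernel linearization remainder uniformly in the bootstrap draws. The trouble is that the bootstrapped kernel estimators are ratios whose denominators (e.g.\ $\widehat{f}_0^{*,b}$, $\widehat{f}_1^{*,b}$) can locally deviate from their population counterparts at rate $(nh_n)^{-1/2}\sqrt{\log n}$, and these enter multiplicatively into several $o_p(1)$ terms in Theorem \ref{thm: root-n consistency of the ATT(r)}. The cleanest resolution is to bound these remainders using a bootstrap version of the Nadaraya--Watson uniform rate (e.g., via a covering-number argument on the kernel class and Bernstein's inequality conditional on $\mathcal{D}_n$), and then invoke the same cancellations in the linearization — specifically, the undersmoothed bandwidth choice $\delta\in(0,1/12)$ ensures that each remainder is $o_{p^{*}}(n^{-1/2})$ in the bootstrap sample exactly as in the original. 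Once this uniform control is in place, the rest of the argument is a routine conditional CLT.
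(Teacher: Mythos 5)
Your proposal follows essentially the same route as the paper, with some differences in organization and level of explicitness. Both approaches (i) re-run the linearization of Theorem~\ref{thm: root-n consistency of the ATT(r)} in the bootstrap world to produce an expansion of $\sqrt{n}(\widehat{ATT}^{*,b}-\widehat{ATT})$ driven by the estimated influence vector $\widehat{\bm\xi}_i$ times centered bootstrap weights, (ii) use the sup-norm consistency $\sup_i\|\widehat{\bm\xi}_i-\bm\xi_i\|\to_p 0$ to replace $\widehat{\bm\xi}_i$ by $\bm\xi_i$ with a negligible conditional-variance remainder, and (iii) apply a conditional CLT to match the bootstrap limit law to that of Corollary~\ref{cor:att_clt_quad_compact}. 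The paper packages steps (ii)--(iii) in a self-contained Lemma~\ref{lem: bootstrap consistency lemma } on multiplier bootstraps with estimated influence functions, whereas you develop the conditional CLT and quantile-convergence (Polya) arguments inline.

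One point where your treatment is actually more careful than the paper's: Lemma~\ref{lem: bootstrap consistency lemma } is stated for a single multinomial with $\sum_i W_i^*=n$, whereas the proposition uses a balanced-group bootstrap with two independent stratum-specific multinomials satisfying $\sum_i W_{i,z}^{*}=n_z$. Applying the lemma as stated to the balanced-group design requires a stratum-by-stratum argument (one conditional CLT per group, then independence and Assumption~\ref{ass: balanced sample} to combine), which is precisely what you describe explicitly. Your additional remark that the symmetric interval built from $c^*_{1-\alpha/2}$ alone is justified by the symmetry of the normal limit is correct and worth stating, since the confidence interval in the proposition is written with a single bootstrap quantile.

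The one place where neither you nor the paper gives a complete argument --- and which you correctly flag as the ``main obstacle'' --- is the first step: showing that the bootstrap estimator, whose kernel components are re-computed from the resampled data, nonetheless admits the expansion $\frac{1}{\sqrt{n}}\sum_i \bm\alpha'\widehat{\bm\xi}_i(W_i^*-1)+o_{p^*}(1)$ with the influence terms evaluated at original-sample kernel estimates. The paper handles this with a ``following the same logic'' appeal; your plan to control the bootstrap-specific remainders via covering-number bounds and Bernstein's inequality conditional on the data, exploiting the undersmoothing $\delta\in(0,1/12)$ exactly as in Theorem~\ref{thm: root-n consistency of the ATT(r)}, is a reasonable and standard way to make this rigorous. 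Overall you have reconstructed the intended proof; the remaining work is the same technical work the paper leaves implicit.
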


\section{Simulation Study}\label{section: Simulation}

We simulate two groups of units, indexed by $Z_i \in \{0,1\}$, each observed over two periods $t=0,1$.
For the control group ($Z_i=0$), we observe the undistorted potential running variable 
$R_{it}^0$ at both periods, while for the treated group ($Z_i=1$), we observe 
$R_{i0}^0$ at $t=0$ and the distorted running variable $R_{i1}^1$ at $t=1$ due to policy distortion of the running variable.

\paragraph{Running variable dynamics.}
\begin{align*}
	R_{i0}^0 &\sim \text{Truncated Normal}(50,\,20^2;\,[20,80]), & Z_i=0, \\
	R_{i1}^0 &= R_{i0}^0 + N(1,\,4), & Z_i=0, \\
	R_{i0}^0 &\sim \text{Truncated Normal}(60,\,15^2;\,[20,80]), & Z_i=1, \\
	R_{i1}^1 &= R_{i0}^0 + N(2,\,1), & Z_i=1,
\end{align*}
where the mean shift in $R_{i1}^1$ reflects endogenous distortion when the treatment is potentially available. The simulation setting of $R_{it}$ is consistent with the translation invariant selection Assumption \ref{assumption: selection behavior}.

\paragraph{Potential outcome functions.}
For the untreated potential outcome $Y_{it}^0(R_{it}^0)$, the conditional mean functions are defined as follows.

For the control group ($Z_i=0$):
\begin{align*}
	E[Y_{i0}^0 \mid R_{i0}^0 = r,\, Z_i=0] &= 5\log(r+1) + 15\exp\!\left(-\frac{(r-50)^2}{200}\right),\\
	E[Y_{i1}^0 \mid R_{i1}^0 = r,\, Z_i=0] &= 5\log(r+1) + 15\exp\!\left(-\frac{(r-50)^2}{200}\right)+2\left(3\cos^2\!\left(\frac{\pi}{60}(r-50)\right) + 1 \right).
\end{align*}

For the treated group ($Z_i=1$), the untreated potential outcomes in the first period and its counterfactual path in the second period satisfy:
\begin{align*}
	E[Y_{i0}^0 \mid R_{i0}^0 = r,\, Z_i=1] &= 6\log(r+1) + 11\exp\!\left(-\frac{(r-50)^2}{200}\right),\\
	E[Y_{i1}^0 \mid R_{i1}^0 = r,\, Z_i=1] &= 6\log(r+1) + 11\exp\!\left(-\frac{(r-50)^2}{200}\right)
	+2\left(3\cos^2\!\left(\frac{\pi}{60}(r-50)\right) + 1 \right).
\end{align*}
The conditional trend function for the untreated potential outcome is given by $2\left(3\cos^2\!\left(\frac{\pi}{60}(r-50)\right) + 1 \right)$. The unobserved potential outcome $Y_{i1}^0$ for the $Z_i=1$ group is explicitly simulated so that we can take the difference $Y_{i1}^1-Y_{i1}^0|Z_i=1$ to calculate the total policy effect for individual $i$. 

The realized post-treatment outcomes for the $Z_i=1$ group in period 1 are
\begin{align*}
	E[Y_{i1}^1 \mid R_{i1}^1 = r,\, Z_i=1] &= 6.5\log(r+1) + 13\exp\!\left(-\frac{(r-50)^2}{200}\right)
	+ 7\cos^2\!\left(\frac{\pi}{60}(r-60)\right) + 3.
\end{align*}

\paragraph{Cutoff and counterfactual cutoff choices.} For the $Z_i=1$ group, we use $c=60$ as the actual policy cutoff so that for $R_{it}\ge 60$, we observe $Y_{i1}^1$ as $Y_{i1}$ for the $Z_i=1$ group. The counterfactual cutoff is chosen at $c^{cf}=63$.

\subsection{Performance}

\begin{table}[H]
	\centering
	\begin{tabular}{|c|cc|cc|cc|}
		\cline{1-7}
		\multicolumn{1}{|l|}{}                & \multicolumn{2}{c|}{$n^{-0.28}\sigma_R$}                             & \multicolumn{2}{c|}{$n^{-0.30}\sigma_R$}                             & \multicolumn{2}{c|}{$n^{-0.32}\sigma_R$}                             \\ \cline{2-7} 
		& \multicolumn{1}{c|}{\textit{original}}      & \textit{bias-reduced}  & \multicolumn{1}{c|}{\textit{original}}      & \textit{bias-reduced}  & \multicolumn{1}{c|}{\textit{original}}      & \textit{bias-reduced}  \\\cline{1-7}
		\multirow{2}{*}{$\frac{1}{\sqrt{2}}$} & \multicolumn{1}{c|}{\multirow{2}{*}{5.980}} & \multirow{3}{*}{6.240} & \multicolumn{1}{c|}{\multirow{2}{*}{6.041}} & \multirow{3}{*}{6.235} & \multicolumn{1}{c|}{\multirow{2}{*}{6.087}} & \multirow{3}{*}{6.233} \\
		& \multicolumn{1}{c|}{}                       &                        & \multicolumn{1}{c|}{}                       &                        & \multicolumn{1}{c|}{}                       &                        \\ \cline{1-2} \cline{4-4} \cline{6-6}
		\multirow{2}{*}{$1$}                  & \multicolumn{1}{c|}{\multirow{2}{*}{5.720}} &                        & \multicolumn{1}{c|}{\multirow{2}{*}{5.847}} &                        & \multicolumn{1}{c|}{\multirow{2}{*}{5.941}} &                        \\ \cline{3-3} \cline{5-5} \cline{7-7} 
		& \multicolumn{1}{c|}{}                       & \multirow{3}{*}{6.240} & \multicolumn{1}{c|}{}                       & \multirow{3}{*}{6.245} & \multicolumn{1}{c|}{}                       & \multirow{3}{*}{6.241} \\ \cline{1-2} \cline{4-4} \cline{6-6}
		\multirow{2}{*}{$\sqrt{2}$}           & \multicolumn{1}{c|}{\multirow{2}{*}{5.200}} &                        & \multicolumn{1}{c|}{\multirow{2}{*}{5.449}} &                        & \multicolumn{1}{c|}{\multirow{2}{*}{5.641}} &                        \\
		& \multicolumn{1}{c|}{}                       &                        & \multicolumn{1}{c|}{}                       &                        & \multicolumn{1}{c|}{}                       &                        \\ \cline{1-7}
	\end{tabular}
	\caption{Bias of Original Estimator and Bias Reduction for $n=1000$.}
	\label{table: bias for n=1000}
\end{table}

\begin{table}[H]
	\centering
	\begin{tabular}{|c|cc|cc|cc|}
		\cline{1-7}
		\multicolumn{1}{|l|}{}                & \multicolumn{2}{c|}{$n^{-0.28}\sigma_R$}                             & \multicolumn{2}{c|}{$n^{-0.30}\sigma_R$}                             & \multicolumn{2}{c|}{$n^{-0.32}\sigma_R$}                             \\ \cline{2-7} 
		& \multicolumn{1}{c|}{\textit{original}}      & \textit{bias-reduced}  & \multicolumn{1}{c|}{\textit{original}}      & \textit{bias-reduced}  & \multicolumn{1}{c|}{\textit{original}}      & \textit{bias-reduced}  \\\cline{1-7}
		\multirow{2}{*}{$\frac{1}{\sqrt{2}}$} 
		& \multicolumn{1}{c|}{\multirow{2}{*}{6.053}} & \multirow{3}{*}{6.209} 
		& \multicolumn{1}{c|}{\multirow{2}{*}{6.093}} & \multirow{3}{*}{6.208} 
		& \multicolumn{1}{c|}{\multirow{2}{*}{6.124}} & \multirow{3}{*}{6.208} \\
		& \multicolumn{1}{c|}{}                       &                        
		& \multicolumn{1}{c|}{}                       &                        
		& \multicolumn{1}{c|}{}                       &                        \\ \cline{1-2} \cline{4-4} \cline{6-6}
		\multirow{2}{*}{$1$}                  
		& \multicolumn{1}{c|}{\multirow{2}{*}{5.896}} &                        
		& \multicolumn{1}{c|}{\multirow{2}{*}{5.979}} &                        
		& \multicolumn{1}{c|}{\multirow{2}{*}{6.039}} &                        \\ \cline{3-3} \cline{5-5} \cline{7-7} 
		& \multicolumn{1}{c|}{}                       & \multirow{3}{*}{6.220} 
		& \multicolumn{1}{c|}{}                       & \multirow{3}{*}{6.214} 
		& \multicolumn{1}{c|}{}                       & \multirow{3}{*}{6.210} \\ \cline{1-2} \cline{4-4} \cline{6-6}
		\multirow{2}{*}{$\sqrt{2}$}           
		& \multicolumn{1}{c|}{\multirow{2}{*}{5.572}} &                        
		& \multicolumn{1}{c|}{\multirow{2}{*}{5.744}} &                        
		& \multicolumn{1}{c|}{\multirow{2}{*}{5.868}} &                        \\
		& \multicolumn{1}{c|}{}                       &                        
		& \multicolumn{1}{c|}{}                       &                        
		& \multicolumn{1}{c|}{}                       &                        \\ \cline{1-7}
	\end{tabular}
	\caption{Bias of Original Estimator and Bias Reduction for $n=2000$.}
	\label{table: bias for n=2000}
\end{table}

\paragraph{Performance of the Original Estimator.}
Each cell under the \textit{original} columns in Tables~\ref{table: bias for n=1000} and~\ref{table: bias for n=2000} corresponds to the mean of ATT estimator obtained using a conventional estimator with bandwidth
\[
h = \text{(row constant)} \times n^{-x}\sigma_R,
\]
where the exponent $x \in \{0.28, 0.30, 0.32\}$ is indicated by the column header, and the row constant is given by the leftmost row label ($1/\sqrt{2}$, $1$, or $\sqrt{2}$).
For instance, in the $n=1000$ table, the value $5.980$ under the $n^{-0.28}\sigma_R$ column and the $1/\sqrt{2}$ row represents the estimate obtained using the bandwidth $h = \frac{1}{\sqrt{2}} \times 1000^{-0.28}\sigma_R$. The true total policy effect is  $\text{ATT}^{true} = 6.195$. In both $n=1000$ and $n=2000$ cases, all \textit{original} estimates lie below the true value, indicating negative bias. The bias of the estimators reduces with the choice of bandwidth, but is still significant when we use the smallest bandwidth $n^{-0.32}\sigma_R/\sqrt{2}$. This indicates that the asymptotic bias $\sqrt{nh^2}$ is not fully washed away in the finite sample.

\paragraph{Bias-Reduction Method.}
The \textit{bias-reduced} columns implement a two-scale bias-correction scheme of the form
\[
\widehat{ATT}_{\text{BR}}(h)
= 2\widehat{ATT}(h) - \widehat{ATT}(\sqrt{2}\,h),
\]
where $\widehat{ATT}(h)$ denotes the original estimator using bandwidth $h$, and we suppress the dependence on the counterfactual cutoff $c^{cf}$ in the following notation.
This linear combination removes the leading-order bias term $O(\sqrt{n}h^2)$ in the estimator and leaves us with the higher-order bias $O(\sqrt{n}h^4)$ \citep{jones1995simple}.

Empirically, the bias-reduced values in both tables are substantially closer to the true ATT ($6.195$) than the original estimates.
For example, in the $n=1000$ table under $n^{-0.28}\sigma_R$ and $1/\sqrt{2}$, the original estimate is $5.980$ (bias $\approx -0.215$) while the bias-reduced estimate is $6.240$ (bias $\approx 0.045$).
The same improvement holds across other bandwidths and constants, and the performance further improves for $n=2000$.
Overall, the results confirm that the two-scale correction effectively mitigates the leading bias term, producing estimates with smaller absolute bias across different bandwidths and sample sizes.

As a result, we recommend using the $\widehat{ATT}_{\text{BR}}(h)$ in relatively small samples to avoid bias in the estimator.

\paragraph{Bootstrap Adjustment.} Since we use the bias-reduced estimator $\widehat{ATT}_{\text{BR}}(h)$, we correspondingly change the bootstrap procedure and let 
\[
\widehat{ATT}^{*,b}_{BR}(h)=2\widehat{ATT}^{*,b}_{BR}(h)-\widehat{ATT}^{*,b}_{BR}(\sqrt{2}h),
\]
where $\widehat{ATT}^{*,b}_{BR}(h)$ is estimated from the $b$-th bootstrap sample, and construct the confidence interval using the $\alpha/2$ and $1-\alpha/2$ quantile of $2\widehat{ATT}_{BR}(h)-\widehat{ATT}^{*,b}_{BR}(h)$ as the confidence interval. Alternatively, we can use the t-bootstrap to calculate the variance of the bootstrap estimator $\sigma_{ATT,BR}^*$ and construct the confidence interval as $[\widehat{ATT}_{BR}(h)\pm 1.96\sigma_{ATT,BR}^*]$.

The validity of the bootstrap is ensured by looking at the linear expansion in Theorem \ref{thm: root-n consistency of the ATT(r)} and Proposition \ref{prop: bootstrap validity}. Let $\xi_i(h)$ denote the leading influence term in Theorem \ref{thm: root-n consistency of the ATT(r)} under bandwidth of $h$, then, fixing and suppressing the counterfactual cutoff notation $c^{cf}$, we have 
\[
\begin{split}
	\sqrt{n}\left(\widehat{ATT}_{BR}(h)-ATT^{true}\right) &=\frac{1}{n}\sum_{i} 2\xi_i(h)-\xi(\sqrt{2}h),\\
	\sqrt{n}\left({ATT}^{*,b}_{BR}(h)-\widehat{ATT}_{BR}(h)\right) &=\frac{1}{n}\sum_{i} \left[ 2\hat{\xi}_i(h)-\hat{\xi}_i(\sqrt{2}h)\right](W_{i}^{*,b}-1),\\
\end{split}
\]
where $W_{i}^{*,b}$ is the resampling weights for observation $i$. The validity of the bootstrap is established as $\left[ 2\hat{\xi}_i(h)-\hat{\xi}_i(\sqrt{2}h)\right]$ mimics the variance structure of $2\xi_i(h)-\xi(\sqrt{2}h)$. 

The bootstrap method here does not require us to take into account the additional estimation variation in correcting the bias term. This is crucially because we use the undersmoothing and the additional variance introduced by correcting for the bias is of order $\sqrt{n}h^2\rightarrow 0$, which is washed away in the limit and does not matter when we use the bootstrap method. 

\begin{table}[H]
	\centering
	\caption{Coverage probabilities of the 95\% confidence intervals, n=2000 }
	\begin{tabular}{lccc}
		\cline{1-4}
		bandwidth& $ n^{-0.28}\sigma_R/\sqrt{2}$ 
		& $ n^{-0.30}\sigma_R/\sqrt{2}$ 
		& $ n^{-0.32}\sigma_R/\sqrt{2}$ \\
		\cline{1-4}
		Empirical bootstrap & 0.925 & 0.935 & 0.935 \\
		$t$-Bootstrap       & 0.960 & 0.925 & 0.930 \\
		Oracle Bias removed  Empirical Bootstrap  &0.940 &0.945 &0.940 \\
		\cline{1-4}
	\end{tabular}
	\label{table: bootstrap validity, n=2000}
\end{table}
As Table \ref{table: bootstrap validity, n=2000} shows, both empirical and t-bootstrap methods have some undercoverage problem except for the $t$-bootstrap in the $n^{-0.28}\sigma_R/\sqrt{2}$ case. Undercoverage issue is pervasive in the nonparametric inference problem \citep{calonico2014robust,armstrong2020simple}. This is because, even if we use the bias-reduced estimator, we still have the stochastic error of order $O_p(h^2+b_n^2+\log n/(nh_n\sqrt{b_n}))$, which influences the coverage probability in the finite sample size. To see how this bias can influence the coverage, we present the ``Oracle Bias removed  Empirical Bootstrap" in Table \ref{table: bootstrap validity, n=2000}, in which we oracularly remove the bias in the original estimator. Further removing the bias makes the coverage probability even closer to the nominal coverage probability of 95\%. 

\section{Empirical Illustration: Fiscal Rules in Italian Municipalities} \label{section: Empirical}

To apply our framework, we consider the empirical setting from \cite{grembi2016fiscal}. Their study examines the effect of fiscal rules imposed by the Italian central government on local municipalities. Beginning in 1999, the Domestic Stability Pact (DSP) required all municipalities to limit the growth of their fiscal gap—defined as the difference between total expenditure net of debt service and total revenue net of transfers. In 2001, the Italian government relaxed these rules for municipalities with populations below $5{,}000$, while the rules remained binding for those above this threshold. The key policy variable thus depends on whether a municipality’s population size lies above or below this administratively determined cutoff. 

In our notation, we denote the running variable by $R_{it}$, representing the population size of municipality $i$ at time $t$. The treatment indicator is defined as
\[
D_{it} = \mathbbm{1}(R_{it} \le 5000)\times \mathbbm{1}(t \geq 2001),
\]
which equals one if the municipality is subject to the fiscal rule in year $t$ and zero otherwise. The outcome variable, $Y_{it}$, represents the fiscal balance of the municipality, measured by the per-capita deficit (total expenditure minus total revenue) as in \cite{grembi2016fiscal}. The panel spans $t = 1997, \dots, 2004$, covering both pre- and post-policy periods. For our purposes, we define the pre-policy period $(t=1999,2000)$ as the control group with $Z_i=0$, which will be used to establish the conditional trend, and the early post-policy period $(t=2000,2001)$ as the policy-influenced group with $Z_i=1$. This two-period framing allows us to align the \cite{grembi2016fiscal} policy variation with the structure of our baseline model in Section \ref{section: model}.

\cite{grembi2016fiscal} also notes that a discontinuity in mayoral wages occurs at the same $5{,}000$ population threshold, which could contaminate identification. However, in our setup, this concern is mitigated by the conditional parallel trend assumption \ref{assumption: conditional parallel trend assumption in running variable}: because the mayor’s wage is a deterministic function of the running variable $R_{it}$, its effect is fully absorbed once we condition on $R_{it}$. Thus, any wage-related differences across municipalities do not bias the identification of the fiscal-rule effect.

Finally, the population variable $R_{it}$ may itself be distorted by policy incentives. While mayors cannot directly choose the population, residents may migrate toward municipalities with more relaxed fiscal constraints due to enhanced local services. In addition, fiscal rules can influence demographic changes through birth and death rates over time. Consequently, policy effects may manifest not only near the cutoff but also through broader shifts in the population distribution. Therefore, when evaluating the total policy effect, it is crucial to take into account the indirect effect via the population variation.

\subsection{Results}
The results of the estimation are shown in Table \ref{tab:cf_estimates}. Recall that the policy is imposed targeting to constrain local municipalities from fiscal expansion, so a negative number indicates that the policy is successful in reducing the deficit.

\begin{table}[htbp]
	\centering
	\begin{tabular}{ccc}
		\cline{1-3}
		Counterfactual value & Point estimate & Confidence interval \\
		\cline{1-3}
		5000 & -7.37 & [-17.06, 5.09] \\
		4900 & -7.21 & [-17.81, 5.91] \\
		4800 & -5.83 & [-16.98, 7.36] \\
		4700 & -3.78 & [-16.33, 14.06] \\
		4600 & 0.45 & [-15.56, 27.04] \\
		4500 & 7.18 & [-14.89, 42.87] \\
		4400 & 16.13 & [-8.22, 70.09] \\
		4300 & 27.45 & [-8.29, 102.01] \\
		4200 & 40.97 & [-1.81, 136.56] \\
		4100 & 59.36 & [-8.65, 195.12] \\
		4000 & 90.08 & [11.15, 294.59] \\
		\cline{1-3}
	\end{tabular}
	\caption{Point estimates and confidence intervals at counterfactual values}
	\label{tab:cf_estimates}
\end{table}

The results show that the slash-down of the deficit is decreasing as we decrease the counterfactual cutoff to the 4,000 population. After the cutoff is moved to 4,600, the total effect of policy will be positive, which shows that the deficit cutdowns are mostly from the large population towns. The result is quite different from the \cite{grembi2016fiscal} as they show that towns around the policy cutoff have increased deficits due to the policy relaxation. Of course, the objects identified in  \cite{grembi2016fiscal} are the local direct effect of treatment on the deficit while our results also incorporate the change in the population due to the potential of treatment. The confidence intervals are wide and cover zero most of the time. This is probably because of the high variance in the response of municipal fiscal conditions to the policy and the relatively small sample size.

\appendix

\section{Proofs}

\subsection{Proofs of Identification Results}
\subsubsection{Proof of Theorem \ref{theorem: identify conditional total treatment effect}}
\begin{proof}
	By definition of the total local treatment effect:
	\[
	T(c^{cf},r)\equiv E[Y_{i1}^1(R^1_{i1}(c^{cf}))-Y_{i1}^0(R_{i1}^0)|Z_i=1,R^{1}_{i1}(c^{cf})=r].
	\]
	The first part, by the potential outcome framework \eqref{eq: potential outcome for Y}, can be written as 
	\[
	\begin{split}
		E[Y_{i1}^1(R^{1}_{i1}(c^{cf}))|Z_i=1,R^{1}_{i1}(c^{cf})=r]&= 	E[Y_{i1}^1(R_{i1})|Z_i=1,R_{i1}=r]\\
		&=_{i.d.}E[Y_{i1}|Z_i=1,R_{i1}=r]
	\end{split}
	\]
	because $Y^1_{i1}(R_{i1})=Y_{i1}$ for the $Z_i=1$ group.
	
	For the second term, we have 
	\[
	\begin{split}
		&\quad E[Y_{i1}^0(R_{i1}^0)|Z_i=1,R^{1}_{i1}(c^{cf})=r]\\
		&=_{(a)}\int_{r_0} E[Y_{i1}^0(R_{i1}^0)|Z_i=1,R^{1}_{i1}(c^{cf})=r, R_{i0}^0=r_0] dF(R_{i0}^0=r_0 | Z_i=1, R^{1}_{i1}(c^{cf})=r)\\
		&=_{(b)}\int_{r_0} E[Y_{i1}^0(R_{i1}^0)|Z_i=1,R_{i0}^0=r_0] dF(R_{i0}^0=r_0 | Z_i=1, R^{1}_{i1}(c^{cf})=r)\\
		&=_{(c)} \int_{r_0} E[Y_{i1}^0(R_{i1}^0)|Z_i=1,R_{i0}^0=r_0]  dF(R_{i0}^0=r_0|R^{1}_{i1}(c^{cf})=r)
	\end{split}\]
	where $(a)$ follows from the law of iterative expectation, $(b)$ follows from the conditional exogeneity of the potential outcome (Assumption \ref{assumption: conditional exogeneity of the potential outcome }), $(c)$ follows from the Lemma \ref{lem: counterfactual distribution of R_{it} in Z_i=1}.
\end{proof}

\subsection{Useful Lemmas for Deriving Asymptotic Result}

\begin{lem}[Second-order expansion for $1/\hat g(x)$] \label{lemma: expansion for 1/g(x)}
	Let $g:\mathcal X\to\mathbb R$ and an estimator $\hat g:\mathcal X\to\mathbb R$ satisfy the linearization
	\[
	{\ \hat g(x)\;=\;g(x)\;+\;\frac{1}{n}\sum_{i=1}^n \phi_i(x)\;+\;\mathrm{bias}_n(x)\ }
	\]
	for $x\in\mathcal X$. Suppose that $g(x)$ is uniformly bounded away from zero, and for $r_n\rightarrow 0 $ and $h_n\rightarrow 0$,
	\[
	\sup_x \left| \frac{1}{n}\sum_{i=1}^n \phi_i(x) \right| = o_p( r_n),\quad \sup_x \left| bias(x) \right| = O( h^2_n),
	\]
	then
	\[
	\sup_x \left|\frac{1}{\hat g(x)}
	\;-\;\frac{1}{g(x)}
	\;-\;\frac{1}{g(x)^2}\Big(\frac{1}{n}\sum_{i=1}^n \phi_i(x)+\mathrm{bias}_n(x)\Big)
	\;\right|= O_p( r_n^2+h_n^4 ).
	\]
\end{lem}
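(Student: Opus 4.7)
The plan is to reduce the statement to a purely algebraic identity and then apply the uniform bounds that are assumed on the linearization error. First I would write the exact identity
\[
\frac{1}{\hat g(x)}-\frac{1}{g(x)}
=-\frac{\hat g(x)-g(x)}{g(x)\,\hat g(x)},
\]
and then, by adding and subtracting a $g^{-2}$ factor, re-express this as
\[
\frac{1}{\hat g(x)}-\frac{1}{g(x)}+\frac{\hat g(x)-g(x)}{g(x)^{2}}
=\frac{(\hat g(x)-g(x))^{2}}{g(x)^{2}\,\hat g(x)}.
\]
Using the given decomposition $\hat g(x)-g(x)=n^{-1}\sum_i\phi_i(x)+\mathrm{bias}_n(x)$, the left-hand side is exactly the quantity whose supremum we must bound (up to the sign on the linear correction term, which does not affect the absolute value conclusion).

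Next I would control the right-hand side uniformly over $x$. The numerator satisfies
\[
\sup_x \big|\hat g(x)-g(x)\big|^{2}
\;\le\;
2\sup_x\Big|n^{-1}\textstyle\sum_i\phi_i(x)\Big|^{2}
+2\sup_x|\mathrm{bias}_n(x)|^{2}
=o_p(r_n^{2})+O(h_n^{4}),
\]
by the two uniform rate hypotheses and a simple inequality $(a+b)^2\le 2a^2+2b^2$. For the denominator, I would use that $g$ is uniformly bounded away from zero, so there exists $\underline{g}>0$ with $\inf_x g(x)\ge\underline{g}$. Since $\sup_x|\hat g(x)-g(x)|=o_p(r_n)+O(h_n^{2})=o_p(1)$, the event $\{\inf_x\hat g(x)\ge\underline{g}/2\}$ has probability approaching one, and on this event $g(x)^{2}\hat g(x)\ge \underline{g}^{3}/2$ uniformly in $x$.

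Combining the two pieces on this high-probability event gives
\[
\sup_x\left|\frac{(\hat g(x)-g(x))^{2}}{g(x)^{2}\,\hat g(x)}\right|
\;\le\;\frac{2}{\underline{g}^{3}}\sup_x\big|\hat g(x)-g(x)\big|^{2}
\;=\;O_p(r_n^{2}+h_n^{4}),
\]
which yields the claim (noting that, strictly, $o_p(r_n)^2 = o_p(r_n^2) = O_p(r_n^2)$, so the stated $O_p$ bound absorbs the $o_p$ rate). The main obstacle, though a mild one, is the uniform lower bound on $\hat g$: one must verify that the $o_p$ rate on the stochastic part together with the $O(h_n^{2})$ deterministic bias yields $\sup_x|\hat g(x)-g(x)|=o_p(1)$, so that $\hat g$ inherits a uniform positive lower bound from $g$ with probability tending to one. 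Once that is in place, everything else is a one-line algebraic manipulation.
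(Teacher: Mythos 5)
Your proof is correct and reaches the same decomposition the paper uses, with one cosmetic improvement: where the paper invokes a second-order Taylor expansion with a Lagrange remainder term $\tfrac{1}{\widetilde{g(x)}^3}(\hat g - g)^2$ at an intermediate point $\widetilde{g(x)}$, you derive the exact algebraic identity $\tfrac{1}{\hat g}-\tfrac1g+\tfrac{\hat g-g}{g^2}=\tfrac{(\hat g-g)^2}{g^2\hat g}$, which avoids the intermediate point altogether. Both arguments then need the same uniform high-probability lower bound on the denominator, which you justify in more detail than the paper does. Aside from this minor difference, the two proofs are the same.

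One thing you should not gloss over, though: the sign issue you parenthetically wave away actually matters. The lemma statement as printed has a \emph{minus} sign in front of $\tfrac{1}{g(x)^2}\big(\tfrac1n\sum_i\phi_i(x)+\mathrm{bias}_n(x)\big)$, but the correct expansion (which is what your algebra and the paper's own Taylor expansion both produce) has a \emph{plus} sign there, since $\tfrac{d}{dt}\tfrac1t=-\tfrac{1}{t^2}$. If the minus sign in the statement were taken literally, the quantity inside the absolute value would equal $-2\tfrac{\hat g-g}{g^2}+\tfrac{(\hat g-g)^2}{g^2\hat g}$, which is only $O_p(r_n+h_n^2)$ and the lemma would be false. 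So the statement has a sign typo, your proof establishes the correct version, and your remark that the sign ``does not affect the absolute value conclusion'' is wrong as stated — you should instead flag the typo explicitly and prove the version with $+\tfrac{1}{g(x)^2}(\cdots)$, which is what gets used downstream.
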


\begin{proof}
	Directly by the taylor exapnsion \[
	\frac{1}{\hat g(x)}
	\;=\;\frac{1}{g(x)}
	\;-\;\frac{1}{g(x)^2}\Big(\frac{1}{n}\sum_{i=1}^n \phi_i(x)+\mathrm{bias}_n(x)\Big)
	\;+\;\frac{1}{\widetilde{g(x)}^3}\Big(\frac{1}{n}\sum_{i=1}^n \phi_i(x)+\mathrm{bias}_n(x)\Big)^{\!2},
	\]
	where $\widetilde{g(x)}$ lies between $g(x)$ and $\hat{g}(x)$, and $\widetilde{g(x)}$ with probability approaching one bounded away from zero. The remainder term $\frac{1}{\widetilde{g(x)}^3}\Big(\frac{1}{n}\sum_{i=1}^n \phi_i(x)+\mathrm{bias}_n(x)\Big)^{\!2}$ is of order $O_p( (r_n+h_n^2)^2)$, wich leads to the result in the lemma.
\end{proof}

\begin{lem}[Uniform linearization and product expansion for $\hat g_1(x)$ and $\hat g_2(y)$]\label{lem:prod-linearization}
	Let $g_1(x),g_2(y)$ be uniformly bounded real-valued targets and let $\hat g_1(x),\hat g_2(y)$ be estimators based on (possibly different) samples
	of sizes $n_1,n_2$ such that $n_1=\alpha_1 n$, and $n_2=\alpha_2 n$. Suppose we have the linear expansions for $d=1,2$:
	\begin{equation}\label{eq:g1-lin}
		\sup_{x}\left|\hat g_d(x)-g_d(x) -\frac{1}{n_1}\sum_{i=1}^{n_d}\phi_{di}\right| = O_p(r_{n,d}).
	\end{equation}
	
	Moreover, the influence function term has the stochasitc order $\sup_x \left| \frac{1}{n_d}\sum_{i=1}^{n_d}\phi_{di} \right|=o_p(\tilde{r}_{n,d})$ for $d=1,2$. Then the product admits the uniform linearization:
	
		\begin{equation}\label{eq:prod-main2}
		{\;
			\sup_x\left|\hat g_1\hat g_2 - g_1 g_2
			-\left[
			g_2  \frac{1}{n_1}\sum_{i=1}^{n_1}\phi_{1i}
			\;+\;
			g_1 \frac{1}{n_2}\sum_{i=1}^{n_2}\phi_{2i} +\left(\frac{1}{n_1}\sum_{i=1}^{n_1}\phi_{1i}\right)\left(\frac{1}{n_2}\sum_{i=1}^{n_2}\phi_{2i}\right) \right]\right|
			\;=\;O_p( r_{n,1}+r_{n,2}).
			\;}
	\end{equation}
	If we only care about the leading terms, we can also write
	\begin{equation}\label{eq:prod-main}
		{\;
			\sup_x\left|\hat g_1\hat g_2 - g_1 g_2
			-\left[
			g_2  \frac{1}{n_1}\sum_{i=1}^{n_1}\phi_{1i}
			\;+\;
			g_1 \frac{1}{n_2}\sum_{i=1}^{n_2}\phi_{2i}\right]\right|
			\;=\;O_p(\tilde{r}_{n,1}\tilde{r}_{n,2}+ r_{n,1}+r_{n,2}).
			\;}
	\end{equation}
\end{lem}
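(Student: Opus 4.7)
The plan is to prove both display formulas by directly expanding the product $\hat g_1 \hat g_2$ using the assumed linearizations and then controlling each of the resulting cross terms uniformly in $x$. This is a routine but careful expansion argument; the only real content is in tracking which terms are leading and which can be absorbed into the remainders $O_p(r_{n,1}+r_{n,2})$ or $O_p(\tilde r_{n,1}\tilde r_{n,2}+r_{n,1}+r_{n,2})$.

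First, for each $d=1,2$, I would write the pointwise decomposition
\[
\hat g_d(x) \;=\; g_d(x) \;+\; \bar\phi_d(x) \;+\; R_{n,d}(x),
\qquad
\bar\phi_d(x):=\frac{1}{n_d}\sum_{i=1}^{n_d}\phi_{di}(x),
\]
where by hypothesis $\sup_x |R_{n,d}(x)|=O_p(r_{n,d})$ and $\sup_x|\bar\phi_d(x)|=o_p(\tilde r_{n,d})$. Multiplying out gives nine terms, six of which have been subtracted on the left-hand side of \eqref{eq:prod-main2}, leaving the residual
\[
\hat g_1\hat g_2 - g_1 g_2 - \bigl[g_2\bar\phi_1+g_1\bar\phi_2+\bar\phi_1\bar\phi_2\bigr]
\;=\; g_1 R_{n,2} + g_2 R_{n,1} + \bar\phi_1 R_{n,2} + \bar\phi_2 R_{n,1} + R_{n,1}R_{n,2}.
\]

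Next I would bound each of these five terms uniformly in $x$. Using that $g_1,g_2$ are uniformly bounded, the first two terms contribute $O_p(r_{n,1}+r_{n,2})$. The cross terms satisfy
\[
\sup_x|\bar\phi_1 R_{n,2}| = o_p(\tilde r_{n,1})\cdot O_p(r_{n,2}) = o_p(r_{n,2}),
\quad
\sup_x|\bar\phi_2 R_{n,1}| = o_p(r_{n,1}),
\]
since $\tilde r_{n,d}\to 0$, and $\sup_x|R_{n,1}R_{n,2}|=O_p(r_{n,1}r_{n,2})=o_p(r_{n,1}+r_{n,2})$. Collecting these gives the bound in \eqref{eq:prod-main2}.

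For \eqref{eq:prod-main}, the only change is that the product $\bar\phi_1\bar\phi_2$ is moved to the right-hand side as an error term; by the product of the $o_p$ rates for the influence averages it contributes $o_p(\tilde r_{n,1}\tilde r_{n,2})$, which combined with the bound from \eqref{eq:prod-main2} yields the stated order $O_p(\tilde r_{n,1}\tilde r_{n,2}+r_{n,1}+r_{n,2})$.

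There is no real obstacle here — the argument is just algebraic expansion followed by uniform stochastic-order bookkeeping — but the one subtlety worth flagging is that the uniform statements in \eqref{eq:g1-lin} are genuinely needed: the $\sup_x$ must distribute across each of the five residual terms, which requires both $\sup_x|R_{n,d}|$ and $\sup_x|\bar\phi_d|$ to be controlled simultaneously (not just pointwise). Since both are assumed in the statement, the argument goes through immediately.
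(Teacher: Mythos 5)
Your proof is correct and takes essentially the same route as the paper, which simply asserts that the result "holds by directly writing out $\hat g_1 \hat g_2$"; you supply the explicit expansion and the uniform bookkeeping of the five residual terms that the paper leaves implicit.
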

\begin{proof}
	Equation \eqref{eq:prod-main} holds by directly write out the $\hat{g}_1\hat{g}_2$.
\end{proof}

\begin{lem}[Uniform linearization of $\widehat{\theta}(r_0)$]\label{lem:theta-unif}
	Let $\widehat{\theta}(r_0):=\widehat{g}(r_0)+\widehat{m}_0(r_0)$ with
	$\varepsilon_i:=\Delta Y_i-g(R_{i0})$ (on $\{Z_i=0\}$), $\eta_i:=Y_{i0}-m_0(R_{i0})$ (on $\{Z_i=1\}$).
	
	Then, 
	\[
	\begin{split}
		&\quad \sup_{r_0}\Bigg|\widehat{\theta}(r_0)-\theta(r_0)
		-\left[
		\frac{1}{n_0 h_n \tilde{f}_0(r_0)}\!\sum_{i:Z_i=0}\!K\!\Big(\frac{R_{i0}-r_0}{h_n}\Big)\varepsilon_i
		+
		\frac{1}{n_1 h_n f_0(r_0)}\!\sum_{i:Z_i=1}\!K\!\Big(\frac{R_{i0}-r_0}{h_n}\Big)\eta_{i0} \right]\Bigg|\\
		&= O_p\left((h_n)^2 + {\frac{\log n}{n h_n}}  \right),
	\end{split}
	\]
	where $\theta(r_0):=g(r_0)+m_0(r_0)$, and $\tilde{f}_0(r_0)$ is the density of $R_{i0}$ conditional on $Z_i=0$ group.
\end{lem}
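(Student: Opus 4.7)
The plan is to treat $\widehat{g}(r_0)$ and $\widehat{m}_0(r_0)$ as two independent Nadaraya--Watson estimators on the disjoint subsamples $\{Z_i=0\}$ and $\{Z_i=1\}$, derive a uniform linearization for each, and then add them. The two analyses are structurally identical, so I would describe only $\widehat{m}_0$ in detail and invoke the analogue for $\widehat{g}$.

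For $\widehat{m}_0(r_0)$, I would rewrite the estimator as a ratio
\[
\widehat{m}_0(r_0)=\frac{\widehat{N}_1(r_0)}{\widehat{D}_1(r_0)},\qquad
\widehat{N}_1(r_0):=\frac{1}{n_1 h_n}\sum_{Z_i=1} K\!\big((R_{i0}-r_0)/h_n\big)\,Y_{i0},\ \ \widehat{D}_1(r_0):=\frac{1}{n_1 h_n}\sum_{Z_i=1} K\!\big((R_{i0}-r_0)/h_n\big),
\]
so that $\widehat{D}_1(r_0)\to f_0(r_0)$ and $\widehat{N}_1(r_0)\to m_0(r_0)f_0(r_0)$ (the common factor $1/(n_1h_n)$ cancels in the ratio and matches the paper's unnormalized convention). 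Standard uniform kernel-convergence results on the compact support of $R_{i0}$, obtained by combining a Pollard/VC maximal inequality for the class $\{K((\cdot-r_0)/h_n):r_0\in\mathcal{R}\}$ with a second-order Taylor expansion under Assumptions \ref{ass:G3} and \ref{ass:smoothness}, give
\[
\sup_{r_0}\big|\widehat{D}_1(r_0)-f_0(r_0)\big|+\sup_{r_0}\big|\widehat{N}_1(r_0)-m_0(r_0)f_0(r_0)\big|=O_p\!\Big(\sqrt{\log n/(nh_n)}\Big)+O(h_n^2).
\]
Since Assumption \ref{ass:G2} keeps $f_0$ bounded away from zero, Lemma \ref{lemma: expansion for 1/g(x)} applies to $1/\widehat{D}_1$ and Lemma \ref{lem:prod-linearization} then yields, uniformly in $r_0$,
\[
\widehat{m}_0(r_0)-m_0(r_0)=\frac{1}{f_0(r_0)}\Big(\widehat{N}_1(r_0)-m_0(r_0)\widehat{D}_1(r_0)\Big)+O_p\!\big(h_n^2+\log n/(nh_n)\big).
\]
Finally I would split $Y_{i0}-m_0(r_0)=\eta_{i0}+(m_0(R_{i0})-m_0(r_0))$: the contribution of the second piece has mean $O(h_n^2)$ by symmetry of $K$ and a Taylor expansion of $m_0$, and uniform stochastic fluctuation $O_p(\sqrt{h_n\log n/n})\le O_p(h_n^2+\log n/(nh_n))$ by AM--GM, so it is absorbed into the remainder and what remains is exactly the claimed influence representation with residual $\eta_{i0}$.

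The analogue on $\{Z_i=0\}$ with outcome $\Delta Y_i$, residual $\varepsilon_i$, and design density $\tilde f_0$ delivers the $\widehat{g}$ piece; under Assumption \ref{ass: balanced sample}, $n_0,n_1\asymp n$, so the two influence terms are of comparable order and adding them produces the stated expansion for $\widehat\theta=\widehat{g}+\widehat{m}_0$. The main obstacle is obtaining the uniform $\sqrt{\log n/(nh_n)}$ rate for the two kernel averages: once in hand, the $\log n/(nh_n)$ part of the remainder is automatic as the square of the linear stochastic error feeding the quadratic Taylor remainder in Lemma \ref{lemma: expansion for 1/g(x)}, while the $h_n^2$ part combines the usual smoothing bias with the deterministic error from re-centering $m_0(R_{i0})$ at $m_0(r_0)$. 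This uniform rate in turn uses only standard empirical-process tools (symmetrization plus chaining, or Talagrand's inequality), since translated kernels form a VC-subgraph class and $K$, $Y_{i0}$, $\Delta Y_i$ are bounded on the compact design support under Assumptions \ref{ass:G2} and \ref{ass:smoothness}.
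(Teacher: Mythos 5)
Your proposal is correct and follows essentially the same route as the paper. The paper's own proof simply cites the standard Nadaraya--Watson linearization (Section 3.3.1 of Ullah and Pagan) and remarks that the $\log n$ factor comes from demanding the expansion hold uniformly in $r_0$ and that balanced samples let one replace $n_0,n_1$ by $n$; you have merely made explicit the standard steps that citation suppresses — the ratio decomposition $\widehat N/\widehat D$, the uniform $\sqrt{\log n/(nh_n)}$ rate for both kernel averages, the application of Lemmas \ref{lemma: expansion for 1/g(x)} and \ref{lem:prod-linearization} to push the remainder to $O_p(h_n^2+\log n/(nh_n))$, and the recentering split $Y_{i0}-m_0(r_0)=\eta_{i0}+(m_0(R_{i0})-m_0(r_0))$ with its $O(h_n^2)$ bias and AM--GM-absorbed fluctuation. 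No gap; this is a fully worked-out version of the paper's one-line argument.
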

\begin{proof}
	The linearization of the Nadaraya-Watson Estimator for conditional moment Estimation is standard and can be find in the textbook, see Section 3.3.1 of \cite{ullah1999nonparametric}. By Assumption \ref{ass: balanced sample}, we replace the stochastic order related to $n_1$ and $n_0$ by $n$.
	 The stochastic remainder $O_p\left((h_n)^2 + {\frac{\log n}{n h_n}}  \right)$ is slightly different from the stochastic order in \cite{ullah1999nonparametric} with an additional $\log n$ term because we need $r_0$-uniform linearization, which requires an additional $\log n$ term. 
\end{proof}

\begin{lem}[Linearization of the plug-in counterfactual marginal density]\label{lem:lin_fcf}
	Let the counterfactual density of $R_{i1}^1(c^{cf})$ be
	\[
	f^{cf}(r_1):=\int f_{1\mid 0}(r_1-\Delta c\mid r_0-\Delta c)\,f_0(r_0)\,dr_0.
	\]
	Then, uniformly in $r_1$ over compact sets,
	\begin{align*}
		\widehat f^{cf}(r_1)-f^{cf}(r_1)
		&=\frac{1}{n_1}\sum_{i:Z_i=1}\mu_i(r_1)
		\;+\;O(h_n^2+b_n^2)
		\;+\;o_p\!\Big((n_1 b_n h_n/{\log n_1})^{-1}+(n_1 h_n/{\log n_1})^{-1}\Big),
	\end{align*}
	where
	\[
	\begin{split}
		\mu_i(r_1)
		&=\frac{\Big[K_{b_n}(R_{i1}-(r_1-\Delta c))
			-E\!\left[K_{b_n}(R_{i1}-(r_1-\Delta c))\mid R_{i0}\right]\Big]\,f_0(R_{i0}+\Delta c)}{b_n f_{0}(R_{i0})}\\
		&\quad+\Big[f_{1\mid 0}(r_1-\Delta c\mid R_{i0}-\Delta c)
		- \int f_0(r_0)\,f_{1\mid 0}(r_1-\Delta c\mid r_0-\Delta c)\,dr_0\Big].
	\end{split}
	\]
\end{lem}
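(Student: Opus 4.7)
The plan is to view $\widehat{f}^{cf}(r_1)$ as the outer integral of the product of two nonparametric building blocks $\widehat{f}_{1\mid 0}$ and $\widehat{f}_0$, apply the ratio-expansion and product-expansion tools of Lemmas~\ref{lemma: expansion for 1/g(x)} and~\ref{lem:prod-linearization} to the integrand, and then reduce the resulting double integral to a single sum over $i$ by changing variables so that the outer $h_n$-kernel is absorbed into a Taylor expansion of a smooth weight function.

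First, I would rewrite the conditional-density estimator as $\widehat{f}_{1\mid 0}=\widehat p/\widehat f_0$, where $\widehat p$ is the bivariate kernel density estimator. Lemma~\ref{lemma: expansion for 1/g(x)} applied to $1/\widehat f_0$ (valid because $f_0$ is bounded away from zero on its compact support by Assumption~\ref{ass:G2}) combined with the standard uniform rates $\sup|\widehat p-p|=O_p(\sqrt{\log n/(nb_nh_n)})+O(h_n^2+b_n^2)$ and $\sup|\widehat f_0-f_0|=O_p(\sqrt{\log n/(nh_n)})+O(h_n^2)$ gives
\[
\widehat f_{1\mid 0}(a\mid s)-f_{1\mid 0}(a\mid s)=\frac{\widehat p(a,s)-p(a,s)}{f_0(s)}-\frac{f_{1\mid 0}(a\mid s)[\widehat f_0(s)-f_0(s)]}{f_0(s)}+R_1(a,s),
\]
with $\sup|R_1|$ of squared stochastic order. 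Lemma~\ref{lem:prod-linearization} applied to the integrand of $\widehat f^{cf}(r_1)$ then yields the decomposition
\[
\widehat f^{cf}(r_1)-f^{cf}(r_1)=\int f_0(r_0)[\widehat f_{1\mid 0}-f_{1\mid 0}]\,dr_0+\int f_{1\mid 0}[\widehat f_0(r_0)-f_0(r_0)]\,dr_0+\mathrm{Rem}(r_1),
\]
where $\mathrm{Rem}(r_1)$ collects the product of two uniformly $o_p(1)$ terms and, after integration, has the $o_p$-rate advertised in the lemma.

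Next, I would substitute the kernel representations into each piece, interchange the finite sum with the outer $r_0$-integral, and in each inner integral change variable to $u=(R_{i0}-r_0+\Delta c)/h_n$ for the first piece and $u=(R_{i0}-r_0)/h_n$ for the second. A second-order Taylor expansion of the smooth weights $s\mapsto f_0(s+\Delta c)/f_0(s)$ and $s\mapsto f_{1\mid 0}(r_1-\Delta c\mid s)$, together with the symmetry and order-2 property of $K$, produces bias terms of size $O(h_n^2)$; an analogous expansion in the $R_{i1}$-direction contributes $O(b_n^2)$. The denominator-correction term in the first piece then combines with the joint-density term because
\[
\frac{E[K_{b_n}(R_{i1}-(r_1-\Delta c))\mid R_{i0}]}{b_n}=f_{1\mid 0}(r_1-\Delta c\mid R_{i0})+O(b_n^2),
\]
so their difference collapses exactly to the first bracket of $\mu_i(r_1)$, while the second piece directly produces the centered term $f_{1\mid 0}(r_1-\Delta c\mid R_{i0}-\Delta c)-f^{cf}(r_1)$ after subtracting its mean.

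\textbf{Main obstacle.} The delicate step is controlling all Taylor and ratio remainders uniformly in $r_1$: every change of variable leaves behind a weight proportional to $1/f_0$, so the lower bound in Assumption~\ref{ass:G2} is essential, and the squared errors from the ratio expansion in Step~1 must be integrated against a bounded weight while preserving the squared stochastic rate $O_p(\log n/(nb_nh_n)+\log n/(nh_n))$. Assumption~\ref{ass:smoothness} is needed to bound the second derivatives that produce the $O(h_n^2+b_n^2)$ bias, and compactness of the supports lets us convert pointwise kernel bounds into uniform ones. Everything else is a routine application of the smoothing lemmas already developed in the appendix.
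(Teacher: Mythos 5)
Your proposal is correct and follows essentially the same route as the paper: the paper's $\zeta_i(r_1)$ is exactly the two-term product-plus-ratio linearization you obtain from Lemmas~\ref{lemma: expansion for 1/g(x)} and~\ref{lem:prod-linearization} (the paper simply asserts it as ``standard'' and writes it down directly), and the subsequent change of variables $u_1=(R_{i0}-r_0+\Delta c)/h_n$, $u_2=(R_{i0}-r_0)/h_n$ plus second-order Taylor and the recentering via $\mathcal{E}=E[K_{b_n}(R_{i1}-(r_1-\Delta c))\mid R_{i0}]$ are identical to the paper's steps $(i)$ and the line after. Your remark that $\mathcal{E}/b_n=f_{1|0}(r_1-\Delta c\mid R_{i0})+O(b_n^2)$ is in fact the correct statement (the paper's corresponding line has $R_{i0}-\Delta c$ and drops the $b_n$ normalization, an apparent typo), so no gap.
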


\begin{proof}
	By standard linearization for the marginal density $\widehat f_0(r_0)$ and the conditional density 
	$\widehat f_{1\mid 0}(r_1-\Delta c\mid r_0-\Delta c)$, we have that uniformly over the compact support of $r_1$,
	\begin{align*}
		\widehat f^{cf}(r_1)-f^{cf}(r_1)
		&=\frac{1}{n_1}\sum_{i:Z_i=1}\zeta_i(r_1)
		\;+\;O(h_n^2+b_n^2)
		\;+\;o_p\!\Big((n_1 b_n h_n/{\log n_1})^{-1}+(n_1 h_n/{\log n_1})^{-1}\Big),
	\end{align*}
	where the influence function is
	\[
	\begin{split}
		\zeta_i(r_1)
		&:=\underbrace{\int 
			\frac{K_{h_n}(R_{i0}-(r_0-\Delta c))}{h_nb_nf_{R_0\mid Z=1}(r_0-\Delta c)}
			\Big[K_{b_n}(R_{i1}-(r_1-\Delta c))
			-f_{1\mid 0}(r_1-\Delta c\mid r_0-\Delta c)\Big]f_0(r_0)\,dr_0}_{\text{conditional density component}}\\
		&\quad+\underbrace{\int 
			\frac{\big[K_{h_n}(R_{i0}-r_0)-f_0(r_0)\big]}{h_n}
			\,f_{1\mid 0}(r_1-\Delta c\mid r_0-\Delta c)\,dr_0}_{\text{marginal }f_0\text{ component}}.
	\end{split}
	\]
	
	For the first integral, let $u_1=(R_{i0}-(r_0-\Delta c))/h_n$ and for the second, let $u_2=(R_{i0}-r_0)/h_n$. 
	After the change of variables and a second-order Taylor expansion in $u_1h_n$ and $u_2h_n$, we get
	{\footnotesize
		\[
		\begin{split}
			\zeta_i(r_1)
			&=\int K(u_1)
			\frac{f_0(R_{i0}+\Delta c-u_1h_n)}{b_nf_{R_0\mid Z=1}(R_{i0}-u_1h_n)}
			\Big[K_{b_n}(R_{i1}-(r_1-\Delta c))
			-f_{1\mid 0}(r_1-\Delta c\mid R_{i0}-u_1h_n)\Big]\,du_1\\
			&\quad+\int K(u_2)
			f_{1\mid 0}(r_1-\Delta c\mid R_{i0}-\Delta c-u_2h_n)\,du_2
			-\int f_0(r_0)f_{1\mid 0}(r_1-\Delta c\mid r_0-\Delta c)\,dr_0\\
			&=_{(i)}\frac{\big[K_{b_n}(R_{i1}-(r_1-\Delta c))
				-f_{1\mid 0}(r_1-\Delta c\mid R_{i0})\big]\,f_0(R_{i0}+\Delta c)}{ b_nf_{R_0\mid Z=1}(R_{i0})}\\
			&\quad+\Big[f_{1\mid 0}(r_1-\Delta c\mid R_{i0}-\Delta c)
			- \int f_0(r_0)f_{1\mid 0}(r_1-\Delta c\mid r_0-\Delta c)\,dr_0\Big]
			+\tilde O(h_n^2)+\tilde o_p(h_n^2)\\
			&=\frac{\big[K_{b_n}(R_{i1}-(r_1-\Delta c))-\tilde{\mathcal E}\big]f_0(R_{i0}+\Delta c)}{b_nf_{R_0\mid Z=1}(R_{i0})}
			+\Big[f_{1\mid 0}(r_1-\Delta c\mid R_{i0}-\Delta c)
			- \int f_0(r_0)f_{1\mid 0}(r_1-\Delta c\mid r_0-\Delta c)\,dr_0\Big]\\
			&\quad+\tilde O(h_n^2+b_n^2)+\tilde o_p(h_n^2+b_n^2),
		\end{split}
		\]
	}
	where
	\[
	\tilde{\mathcal E}:=
	E\!\left[K_{b_n}(R_{i1}-(r_1-\Delta c))\mid R_{i0}\right].
	\]
	In step $(i)$ we use the second-order Taylor expansion in the local drift terms $u_1h_n$ and $u_2h_n$, and in the last line we note that
	\(\tilde{\mathcal E}-f_{1\mid 0}(r_1-\Delta c\mid R_{i0}-\Delta c)
	=\tilde O(b_n^2)+\tilde o_p(b_n^2)\).
	This yields the stated form of $\mu_i(r_1)$ and completes the proof.
\end{proof}

\begin{lem}[Uniform linearization of $\hat f_{1|0}(r-\Delta c\mid r_0-\Delta c)\,\hat f_0(r_0)/\hat f^{cf}(r)$]\label{lem:ratio-unif-linearization}
	Set $y:=r-\Delta c$, $x:=r_0-\Delta c$.
	Define
	\[
	\hat\phi(r,r_0):=\hat f_{1|0}(y\mid x)\,\frac{\hat f_0(r_0)}{\hat f^{cf}(r)},\qquad
	\phi(r,r_0):= f_{1|0}(y\mid x)\,\frac{ f_0(r_0)}{ f^{cf}(r)}.
	\]
	Define the influence terms
	\[
	\Psi_{1|0,i}(y,x)
	:= 
	\frac{K\!\left(\dfrac{R_{i0}-x}{h_n}\right)}{h_n\,f_{0}(x)}
	\left[
	\frac{K\!\left(\dfrac{R_{i1}-y}{b_n}\right)}{b_n}
	- f_{1|0}(y\mid x)
	\right], 
	\]
	\[
	\Xi^{(0)}_{i}(r_0)
	:=
	\frac{ K\!\big(\frac{R_{i0}-r_0}{h_n}\big) }{h_n}
	-\mathbb E\!\left[\frac{ K\!\big(\frac{R_{i0}-r_0}{h_n}\big) }{h_n}\right],
	\]
	
	Then, uniformly over $(r,r_0)\in\mathcal I\times\mathcal I_0$,
	\begin{equation}
		{
			\begin{aligned}
				\sup_{r,r_0}\Bigg|	\hat\phi(r,r_0)-\phi(r,r_0)
				&-\bigg[
				\frac{f_0(r_0)}{f^{cf}(r)}\cdot \frac{1}{n_1}\sum_{i:Z_i=1}\Psi_{1|0,i}(y,x)
				\;+\;
				\frac{f_{1|0}(y\mid x)}{f^{cf}(r)}\cdot \frac{1}{n_1}\sum_{i:Z_i=1}\Xi^{(0)}_{i}(r_0)
				\\
				&\quad
				-\frac{f_{1|0}(y\mid x)f_0(r_0)}{f^{cf}(r)^2}\cdot \frac{1}{n_1}\sum_{i:Z_i=1}\mu_{i}(r)\bigg]\Bigg|
				\;\\
				&=O_p\left((h_n)^2+ (b_n)^2 + \frac{\log n_1}{ h_n n_1 \sqrt{b_n} }\right).
		\end{aligned}}
	\end{equation}
\end{lem}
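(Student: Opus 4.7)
The plan is to treat $\hat\phi$ as a smooth function of three plug-in estimators and expand it by the delta method. Write $A:=f_{1|0}(y\mid x)$, $B:=f_0(r_0)$, $C:=f^{cf}(r)$, with their estimators denoted by hats, and substitute the three linearizations: a Nadaraya--Watson-type expansion for $\hat f_{1|0}$ with influence function $\Psi_{1|0,i}(y,x)$, a standard kernel-density expansion for $\hat f_0$ with influence function $\Xi^{(0)}_i(r_0)$, and the integrated linearization from Lemma \ref{lem:lin_fcf} for $\hat f^{cf}$ with influence function $\mu_i(r)$. The three coefficients $B/C$, $A/C$, $-AB/C^2$ in the target statement are then exactly the partial derivatives of the map $(A,B,C)\mapsto AB/C$ evaluated at the truth.

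The execution proceeds in three steps. First, apply Lemma \ref{lemma: expansion for 1/g(x)} to $1/\hat C$, which is legitimate because $f^{cf}$ is bounded away from zero on the compact set $\mathcal I$ by Assumption \ref{ass:G2}; this yields
\[
\frac{1}{\hat C(r)}=\frac{1}{C(r)}-\frac{\hat C(r)-C(r)}{C(r)^2}+O_p\!\left(\sup_r(\hat C-C)^2\right).
\]
Second, invoke Lemma \ref{lem:prod-linearization} on $\hat A\hat B$ to obtain
\[
\hat A\hat B-AB=B(\hat A-A)+A(\hat B-B)+(\hat A-A)(\hat B-B).
\]
Third, combine the two to get
\[
\hat\phi-\phi=\frac{B}{C}(\hat A-A)+\frac{A}{C}(\hat B-B)-\frac{AB}{C^2}(\hat C-C)+\mathcal R,
\]
and plug in the three linear influence-function expansions; the leading three terms match the stated display after multiplying through by $1/n_1\sum_{i:Z_i=1}$.

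All that remains is the bookkeeping for $\mathcal R$. The uniform rates are $\|\hat f_{1|0}-f_{1|0}\|_\infty=O_p(\sqrt{\log n_1/(n_1 h_n b_n)})+O(h_n^2+b_n^2)$, $\|\hat f_0-f_0\|_\infty=O_p(\sqrt{\log n_1/(n_1 h_n)})+O(h_n^2)$, and, by Lemma \ref{lem:lin_fcf}, the faster integrated-1D rate $\|\hat f^{cf}-f^{cf}\|_\infty=O_p(\sqrt{\log n_1/(n_1 b_n)})+O(h_n^2+b_n^2)$. The binding cross-product is $(\hat A-A)(\hat B-B)$, which is uniformly $O_p(\log n_1/(n_1 h_n\sqrt{b_n}))$ and delivers exactly the stochastic remainder appearing in the statement. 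All other cross terms — those involving $(\hat C-C)$, the Taylor remainder from $1/\hat C$, and products of a stochastic term with the $O(h_n^2+b_n^2)$ bias — are strictly smaller order under the bandwidth regime relevant to Theorem \ref{thm: root-n consistency of the ATT(r)} and are absorbed into $O(h_n^2+b_n^2)$.

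The main obstacle is uniformity rather than pointwise linearization. The $\Psi_{1|0,i}$, $\Xi^{(0)}_i$, and $\mu_i$ influence functions are kernel objects whose sample averages must be controlled \emph{uniformly} in $(r,r_0)\in\mathcal I\times\mathcal I_0$; this is where the $\log n_1$ factors enter, via standard VC/bracketing arguments for uniform convergence of kernel averages, and where one must carefully exploit that $f^{cf}$ is uniformly bounded below so that dividing by $\hat C$ is safe on an event of probability approaching one. Given these uniform rates, the remainder accounting is routine.
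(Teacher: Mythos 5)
Your proposal is correct and takes essentially the same route as the paper: both start from the three component linearizations (using Lemma \ref{lemma: expansion for 1/g(x)} for $1/\hat f^{cf}$), record uniform stochastic orders for the kernel averages, and assemble the result via Lemma \ref{lem:prod-linearization} (your explicit delta-method framing of $(A,B,C)\mapsto AB/C$ is just a repackaging of applying the product lemma twice). One small bookkeeping slip: the cross term $(\hat A-A)(\hat C-C)$ is \emph{not} strictly smaller than $(\hat A-A)(\hat B-B)$ — it is $O_p(\log n_1/(n_1 b_n\sqrt{h_n}))$, which is the same order when $h_n\asymp b_n$ — but since both are absorbed into the stated $O_p(\log n_1/(n_1 h_n\sqrt{b_n}))$ remainder, this does not affect the conclusion.
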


\begin{proof}
	We first note that we have the uniform linear expansion
	\[
	\sup_{y,x}\left|\hat f_{1|0}(y\mid x)- f_{1|0}(y\mid x)-  \frac{1}{n_1}\sum_{i:Z_i=1}\Psi_{1|0,i}(y,x)\right|= O_p\left((h_n)^2+ (b_n)^2 + \frac{\log n_1}{ {h_n n_1} }\right).
	\]
	Note that the stochastic order $ \frac{\log n_1}{ h_n n_1} $ is incurred by envolking Lemma \ref{lemma: expansion for 1/g(x)} for the denominator in the conditional expectation estimation. For the other two terms, we have 
	\[
	\sup_{r_0} \left| \hat{f}_0(r_0)-f_0(r_0)-\frac{1}{n_1}\sum_{i:Z_i=1}\Xi^{(0)}_{i}(r_0)\right|=O\left((h_n)^2\right),
	\]
	and Lemma \ref{lemma: expansion for 1/g(x)} delivers:
	\[
	\sup_{r}\left| \frac{1}{\hat{f}^{cf}(r)}- \frac{1}{f^{cf}(r)}+ \frac{1}{(f^{cf}(r))^2}  \frac{1}{n_1}\sum_{i:Z_i=1}\mu_{i}(r) \right|= O_p\left((h_n)^2+ \frac{\log n_1}{h_nn_1}\right).
	\]
	Then, we apply the uniform convergence rate in \cite{gine2002rates} to get
	\[
	\begin{split}
		\sup_{y,x}\left|\frac{1}{n_1}\sum_{i:Z_i=1}\Psi_{1|0,i}(y,x)\right|&=o_p\left(\sqrt{\frac{\log n_1}{ n_1 b_{n}h_n}}\right),\\
		\sup_{r_0} \left| \frac{1}{n_1}\sum_{i:Z_i=1}\Xi^{(0)}_{i}(r_0)\right|&=o_p\left(\sqrt{\frac{\log n_1}{ n_1 h_n}}\right),\\
		\sup_{r}\left| \frac{1}{n_1}\sum_{i:Z_i=1}\mu_{i}(r) \right|&= o_p\left(\sqrt{\frac{\log n_1}{ n_1 b_n}}\right).
	\end{split}
	\]
	Finally, we apply Lemma \ref{lem:prod-linearization} to two times to finish the proof.
\end{proof}

\begin{lem}[Uniform linearization of $\hat\theta(r_0)\,\hat\phi(r,r_0)-\theta(r_0)\,\phi(r,r_0)$]\label{lem:theta-phi-product}
	Let $\hat\theta(r_0):=\hat g(r_0)+\hat m_0(r_0)$ and
	\[
	\hat\phi(r,r_0):=\hat f_{1|0}(r-\Delta c\,\mid\,r_0-\Delta c)\,\frac{\hat f_0(r_0)}{\hat f^{cf}(r)},
	\qquad
	\phi(r,r_0):= f_{1|0}(r-\Delta c\,\mid\,r_0-\Delta c)\,\frac{ f_0(r_0)}{ f^{cf}(r)},
	\]
	with $\theta(r_0):=g(r_0)+m_0(r_0)$. We have, uniformly over
	$(r,r_0)\in I\times I_0$,
	\[
	\sup_{r,r_0}\left| \hat\theta(r_0)\,\hat\phi(r,r_0)-\theta(r_0)\,\phi(r,r_0) - ( \phi(r,r_0)\,\mathsf{S}_\theta(r_0)
	\;+\; \theta(r_0)\,\mathsf{S}_\phi(r,r_0)) \right|= O_p\left((h_n)^2+ (b_n)^2 +\frac{\log n}{ n h_n \sqrt{b_n } }\right).
	\]
	where the leading stochastic terms are
	\[
	\mathsf{S}_\theta(r_0)
	=
	\frac{1}{n_0\,h_n\,\tilde{f}_0(r_0)}
	\sum_{i:Z_i=0}
	K\!\Big(\frac{R_{i0}-r_0}{h_n}\Big)\,\varepsilon_i
	\;+\;
	\frac{1}{n_1\,h_n\,f_0(r_0)}
	\sum_{i:Z_i=1}
	K\!\Big(\frac{R_{i0}-r_0}{h_n}\Big)\,\eta_i,
	\]
	\[
	\begin{aligned}
		\mathsf{S}_\phi(r,r_0)
		&=
		\frac{f_0(r_0)}{f^{cf}(r)}\cdot \frac{1}{n_1}\sum_{i:Z_i=1}\Psi_{1|0,i}\!\big(r-\Delta c,\,r_0-\Delta c\big)
		\;+\;
		\frac{f_{1|0}(r-\Delta c\,|\,r_0-\Delta c)}{f^{cf}(r)}\cdot \frac{1}{n_1}\sum_{i:Z_i=1}\Xi^{(0)}_i(r_0)
		\\[-0.25em]
		&\hspace{6em}
		-\;
		\frac{f_{1|0}(r-\Delta c\,|\,r_0-\Delta c)\,f_0(r_0)}{f^{cf}(r)^2}\cdot \frac{1}{n_1}\sum_{i:Z_i=1}\kappa_i(r).
	\end{aligned}
	\]
\end{lem}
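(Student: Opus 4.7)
The plan is to treat this as a straightforward product linearization by invoking Lemma~\ref{lem:prod-linearization} with $\hat g_1(r_0)=\hat\theta(r_0)$ and $\hat g_2(r,r_0)=\hat\phi(r,r_0)$, since the two factor-by-factor expansions have already been established in Lemmas~\ref{lem:theta-unif} and \ref{lem:ratio-unif-linearization}. First, I would identify the inputs to the product lemma: from Lemma~\ref{lem:theta-unif}, the linearization remainder for $\hat\theta$ is $r_{n,1}=O_p(h_n^2+\log n/(nh_n))$ with leading influence-function average $\mathsf{S}_\theta(r_0)$; from Lemma~\ref{lem:ratio-unif-linearization}, the remainder for $\hat\phi$ is $r_{n,2}=O_p(h_n^2+b_n^2+\log n/(nh_n\sqrt{b_n}))$ with leading term $\mathsf{S}_\phi(r,r_0)$.

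Next, I would bound the order of the two stochastic linear terms uniformly. The cited uniform convergence result of \cite{gine2002rates} (already invoked inside the proof of Lemma~\ref{lem:ratio-unif-linearization}) yields $\sup_{r_0}|\mathsf{S}_\theta(r_0)|=o_p(\sqrt{\log n/(nh_n)})$ and, combining the three components making up $\mathsf{S}_\phi$, $\sup_{r,r_0}|\mathsf{S}_\phi(r,r_0)|=o_p(\sqrt{\log n/(nh_n b_n)})$, the worst of the three being the conditional-density influence function $\Psi_{1|0,i}$. Multiplying these two rates gives the cross-product bound
\[
\sup_{r,r_0}|\mathsf{S}_\theta(r_0)\,\mathsf{S}_\phi(r,r_0)|
\;=\;o_p\!\left(\frac{\log n}{n h_n\sqrt{b_n}}\right),
\]
which is already absorbed into the stated remainder.

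Then I would apply formula \eqref{eq:prod-main} of Lemma~\ref{lem:prod-linearization} with these inputs. Expanding
\[
\hat\theta\,\hat\phi-\theta\,\phi
=\phi\,(\hat\theta-\theta)+\theta\,(\hat\phi-\phi)+(\hat\theta-\theta)(\hat\phi-\phi),
\]
replacing $(\hat\theta-\theta)$ by $\mathsf{S}_\theta(r_0)$ plus its remainder and $(\hat\phi-\phi)$ by $\mathsf{S}_\phi(r,r_0)$ plus its remainder, and using that $\phi$, $\theta$ are uniformly bounded on the compact set $I\times I_0$ (by Assumptions~\ref{ass:G2} and \ref{ass:smoothness}), we collect the four contributions: the two leading terms $\phi\,\mathsf{S}_\theta$ and $\theta\,\mathsf{S}_\phi$; the bias/variance remainders of order $r_{n,1}+r_{n,2}=O_p(h_n^2+b_n^2+\log n/(nh_n\sqrt{b_n}))$; and the cross-product just bounded above. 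The dominant remainder is exactly the rate claimed in the lemma.

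The only mildly delicate step is verifying the uniform size of $\mathsf{S}_\phi(r,r_0)$, since it is a sum of three terms with different bandwidth scalings ($\Psi_{1|0,i}$ is a product kernel in $(b_n,h_n)$, while $\Xi^{(0)}_i$ and $\mu_i$ involve only one of the two), but this was already done inside the proof of Lemma~\ref{lem:ratio-unif-linearization}, so the main obstacle here is essentially bookkeeping and checking that every cross term falls below the stated remainder order.
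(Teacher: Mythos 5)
Your proposal is correct and follows exactly the route the paper intends (the paper states Lemma~\ref{lem:theta-phi-product} without an explicit proof, treating it as a routine consequence of the preceding factor-by-factor linearizations via the product Lemma~\ref{lem:prod-linearization}). You correctly feed the remainders $r_{n,1}=O_p(h_n^2+\log n/(nh_n))$ from Lemma~\ref{lem:theta-unif} and $r_{n,2}=O_p(h_n^2+b_n^2+\log n/(nh_n\sqrt{b_n}))$ from Lemma~\ref{lem:ratio-unif-linearization} into \eqref{eq:prod-main}, and you bound the cross term by $o_p\!\big(\sqrt{\log n/(nh_n)}\big)\cdot o_p\!\big(\sqrt{\log n/(nh_n b_n)}\big)=o_p\!\big(\log n/(nh_n\sqrt{b_n})\big)$, which is indeed absorbed into the stated remainder. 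One cosmetic note: the paper's statement writes $\kappa_i(r)$ for the influence term of $1/\hat f^{cf}$ in $\mathsf{S}_\phi$, but (as is clear from Lemma~\ref{lem:ratio-unif-linearization}) this should read $\mu_i(r)$; your reasoning implicitly uses the correct $\mu_i$.
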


\begin{lem}[Change of variables and second–order expansion]\label{lem: change of variable integration }
	Let $\{(R_i,\varepsilon_i)\}_{i=1}^n$ be i.i.d.\ with $\mathbb E[\varepsilon_i\mid R_i]=0$ and $\mathbb E[|\varepsilon_i|^{2+\delta}]<\infty$ for some $\delta>0$. Let $w:\mathbb R\to\mathbb R$ be twice continuously differentiable with bounded $w''$. 
	Define
	\[
	T_n \;:=\; \int_{c^{cf}}^\infty \frac{1}{n h_n}\sum_{i=1}^n K\!\left(\frac{R_i - r}{h_n}\right)\varepsilon_i\, w(r)\,dr,
	\]
	where $h_n\to 0$ as $n\to\infty$. 
	Then
	\[
	T_n = \frac{1}{n}\sum_{i=1}^n \varepsilon_i w(R_i) F_K\left(\frac{R_i-c^{cf}}{h_n}\right)
	+ O_p(h_n^2),
	\]
	where $F_K(t)=\int_{-\infty}^t K(u) du$. In particular, if $n^{\alpha}\,h_n^2 \to 0$, then
	\[
	{n}^{\alpha}\,T_n \;\Rightarrow\; \mathcal N\!\left(0,\; \mathrm{Var}\left(w(R_i)F_K\left(\frac{R_i-c^{cf}}{h_n}\right)\varepsilon_i\right)n^{\alpha-1/2}\right).
	\]
\end{lem}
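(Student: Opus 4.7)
The plan is to interchange the finite sum with the integral and, for each term, apply the change of variables $u=(R_i-r)/h_n$. Under this substitution, $r=c^{cf}$ maps to $u=(R_i-c^{cf})/h_n$ and $r\to\infty$ maps to $u\to -\infty$, so the inner integral becomes
\[
\int_{c^{cf}}^{\infty}\frac{1}{h_n}K\!\left(\frac{R_i-r}{h_n}\right)w(r)\,dr
=\int_{-\infty}^{(R_i-c^{cf})/h_n}K(u)\,w(R_i-u h_n)\,du.
\]
A second-order Taylor expansion of $w$ around $R_i$,
\[
w(R_i-u h_n)=w(R_i)-u h_n\,w'(R_i)+\tfrac{1}{2}(u h_n)^2 w''(\tilde R_{i,u}),
\]
then splits the integrand into three pieces. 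Integration against $K(u)$ on $(-\infty,(R_i-c^{cf})/h_n)$ yields the announced leading term $w(R_i)F_K((R_i-c^{cf})/h_n)$, a first-order correction $-h_n w'(R_i)\,G_1((R_i-c^{cf})/h_n)$ with $G_1(t):=\int_{-\infty}^{t}uK(u)\,du$, and a second-order remainder bounded pointwise by $\tfrac{1}{2}h_n^2\|w''\|_\infty\int u^2 K(u)\,du$.

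The first-order correction is the delicate piece. Because $K$ is symmetric and compactly supported, $G_1$ vanishes outside a bounded interval, so $G_1((R_i-c^{cf})/h_n)$ is nonzero only when $R_i$ lies within an $O(h_n)$ neighborhood of $c^{cf}$. Combined with $E[\varepsilon_i\mid R_i]=0$, the sample average $n^{-1}\sum_i\varepsilon_i w'(R_i)G_1(\cdot)$ has mean zero and variance bounded by $C\,\Pr(|R_i-c^{cf}|\le C' h_n)/n=O(h_n/n)$ (using boundedness of $w'$ and of $E[\varepsilon_i^2\mid R_i]$ from the $(2+\delta)$-moment assumption together with Assumption \ref{ass:G2}), so the whole first-order contribution is $O_p(h_n\cdot\sqrt{h_n/n})=O_p(h_n^{3/2}/\sqrt n)=o_p(h_n^2)$. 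The second-order remainder, once weighted by $\varepsilon_i$, is a mean-zero average whose variance is $O(h_n^4/n)$ by the same conditioning argument, giving $O_p(h_n^2/\sqrt n)$. Collecting the three pieces yields the claimed expansion.

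For the CLT, the leading term $n^{-1}\sum_i\psi_{n,i}$ with $\psi_{n,i}:=\varepsilon_i w(R_i)F_K((R_i-c^{cf})/h_n)$ is a triangular array of i.i.d.\ mean-zero summands. Since $F_K$ and $w$ are bounded on the (compact) support of $R_i$ and $E|\varepsilon_i|^{2+\delta}<\infty$, Lyapunov's condition is verified and the classical triangular-array CLT delivers the stated limit; the remainder $O_p(h_n^2)$ is negligible once $n^{\alpha}h_n^2\to 0$. The main obstacle is precisely the first-order boundary correction: away from $c^{cf}$ the symmetry $\int uK(u)\,du=0$ kills $G_1$, but inside the $O(h_n)$ boundary band the partial first moment is nonzero, and only the combination of the mean-zero property of $\varepsilon_i$ with the narrow width of that band lets us absorb this contribution into the $O_p(h_n^2)$ error. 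Every other step is a routine Taylor argument paired with standard i.i.d.\ variance bounds.
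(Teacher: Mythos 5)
Your proof takes essentially the same route as the paper's: interchange sum and integral, change variables with $u=(R_i-r)/h_n$, Taylor-expand $w$ to second order around $R_i$, and observe that the first-order term contributes only on an $O(h_n)$-wide band around $c^{cf}$ because the partial first moment $\int^t u K(u)\,du$ of a symmetric compactly-supported kernel vanishes outside a bounded interval. Two small points of comparison are worth noting. First, your change of variables is written correctly: as $r$ ranges over $[c^{cf},\infty)$, $u$ ranges over $(-\infty,(R_i-c^{cf})/h_n]$, which matches the lemma's definition $F_K(t)=\int_{-\infty}^{t}K$; the paper's own proof inadvertently writes the image region as $\{u\ge (R_i-c^{cf})/h_n\}$, an orientation slip that your version repairs. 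Second, for the first-order boundary term the paper argues by a crude count (the number $n'$ of observations in the band is $O_p(nh_n)$ so the term is $h_n\cdot O_p(h_n)\cdot O_p(1)=O_p(h_n^2)$), whereas you compute a variance bound $O(h_n/n)$ from the mean-zero property of $\varepsilon_i$ and the band width, yielding $O_p(h_n^{3/2}/\sqrt{n})=o_p(h_n^2)$; your bound is tighter, though both suffice for the stated $O_p(h_n^2)$ remainder. Both arguments tacitly rely on $K$ being compactly supported, a bounded design density of $R_i$, and a bounded conditional second moment $E[\varepsilon_i^2\mid R_i]$ near $c^{cf}$, none of which is stated in the lemma's hypotheses; you flag this more explicitly than the paper, but the reliance is shared.
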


\begin{proof}
	We first interchange the integral and summation to write
	\[
	T_n = \frac{1}{n}\sum_{i=1}^n \varepsilon_i \int_{r\ge c^{cf}} \frac{1}{h_n} K\!\left(\frac{R_i - r}{h_n}\right) w(r)\,dr.
	\]
	Make the change of variable $u = (R_i - r)/h_n$, so $r = R_i - h_n u$ and $dr = -h_n du$. 
	Then
	\[
	\int_{r\ge c^{cf}} \frac{1}{h_n} K\!\left(\frac{R_i - r}{h_n}\right) w(r)\,dr 
	= \int_{u\ge (R_{i}-c^{cf})/h_n} K(u)\, w(R_i - h_n u)\,du.
	\]
	Hence
	\[
	T_n = \frac{1}{n}\sum_{i=1}^n \varepsilon_i \int_{u\ge (R_{i}-c^{cf})/h_n} K(u)\, w(R_i - h_n u)\,du.
	\]
	
	Next, expand $w(R_i - h_n u)$ in a second–order Taylor expansion around $R_i$:
	\[
	w(R_i - h_n u) 
	= w(R_i) - h_n u\,w'(R_i) + \tfrac{1}{2} h_n^2 u^2\, w''(\xi_{i,u}),
	\]
	for some $\xi_{i,u}$ between $R_i$ and $R_i - h_n u$. 
	
	Now, we can write 
	\[
	\begin{split}
		T_n &= \frac{1}{n}\sum_{i=1}^n \varepsilon_i \int_{u\ge (R_{i}-c^{cf})/h_n} K(u)\, [ w(R_i) - h_n u\,w'(R_i) + \tfrac{1}{2} h_n^2 u^2\, w''(\xi_{i,u})]\,du\\
		&=\frac{1}{n}\sum_{i=1}^n \varepsilon_i w(R_i) F_K\left(\frac{R_i-c^{cf}}{h_n}\right) \\
		&-  \frac{1}{n}\sum_{i=1}^n \varepsilon_i \int_{u\ge (R_{i}-c^{cf})/h_n} K(u)\,   h_n u\,w'(R_i) \,du +O_p(h_n^2)
	\end{split}
	\]
	Note that for $R_i\notin [c^{cf}-h_n,c^{cf}+h_n]$, we know that $ \int_{u\ge (R_{i}-c^{cf})/h_n} K(u) u du=0$, so we can have 
	\[
	 \frac{1}{n}\sum_{i=1}^n \varepsilon_i \int_{u\ge (R_{i}-c^{cf})/h_n} K(u)\,   h_n u\,w'(R_i) \,du =\frac{h_n}{n}\sum_{i=1}^n \varepsilon_i F_K\left(\frac{R_i-c^{cf}}{h_n}\right)\mathbbm{1}(R_{i}\in  [c^{cf}-h_n,c^{cf}+h_n]).
	\]
	Let $n'=\sum_{i}\mathbbm{1}(R_{i}\in  [c^{cf}-h_n,c^{cf}+h_n]$. Since $nh_n\rightarrow \infty$, $n'\rightarrow \infty$ with probability approaching 1, then we have 
	\[
	\frac{1}{n'}\sum_{i=1}^n \varepsilon_i F_K\left(\frac{R_i-c^{cf}}{h_n}\right)\mathbbm{1}(R_{i}\in  [c^{cf}-h_n,c^{cf}+h_n])= O_p(1),
	\]
	and $n'/n\rightarrow_p f^{cf}(c^{cf})h_n$, so we can show that $ \frac{1}{n}\sum_{i=1}^n \varepsilon_i \int_{u\ge (R_{i}-c^{cf})/h_n} K(u)\,   h_n u\,w'(R_i) \,du=O_p(h_n^2).$

	Hence
		\[
	T_n = \frac{1}{n}\sum_{i=1}^n \varepsilon_i w(R_i) F_K\left(\frac{R_i-c^{cf}}{h_n}\right)
	+ O_p(h_n^2).
	\]
	
	For the limiting distribution, note that $\mathbb E[\varepsilon_i w(R_i)] = 0$ by mean independence. 
	Because $\mathbb E[\varepsilon_i^2 w(R_i)^2]<\infty$, the Lindeberg–Feller CLT gives
	\[
	\sqrt{n}\,\frac{1}{n}\sum_{i=1}^n \varepsilon_i w(R_i) F_K\left(\frac{R_i-c^{cf}}{h_n}\right)
	\;\Rightarrow\; \mathcal N(0,V),
	\qquad 
	V=\mathrm{Var}\left(w(R_i)\varepsilon_i F_K\left(\frac{R_i-c^{cf}}{h_n}\right)\right).
	\]
	The second–order term is $O_p(h_n^2)$ and the remainder $o_p(h_n^2)$, both $o_p(n^{-1/2})$ if $n^{\alpha}\,h_n^2\to0$. 
	Therefore, by Slutsky’s theorem,
	\[
	n^{\alpha}\,T_n \;\Rightarrow\; \mathcal N(0,V*n^{\alpha-1/2}),
	\]
	as claimed.
\end{proof}

\begin{lem}\label{lemma: linearization for the second term in T}
	For $h_n=b_n=n^{-1/5-\delta}$ and $\delta>0$ is a small number, then we have 
	{\footnotesize
		\[
		\begin{split}
			&\quad \sqrt{nb_n} \int_{r_0}  \hat\theta(r_0)\,\hat\phi(r,r_0)-\theta(r_0)\,\phi(r,r_0) dr_0\\
			&= \frac{1}{\alpha\sqrt{nb_n}}\frac{\int_{r_0} f_{1|0}(r-\Delta c|r_0-\Delta c)f_0(r_0)dr_0}{f^{cf}(r)^2} \sum_{i:Z_i=1}  \frac{\Big[K_{b_n}(R_{i1}-(r-\Delta c))-E\left[K_{b_n}(R_{i1}-(r-\Delta c))|R_{i0}-\Delta c\right]\Big]\,f_0(R_{i0}+\Delta c )}{f_{0|1}(R_{i0})}  \\
			&+ \frac{1}{\alpha\sqrt{nb_n}}\sum_{Z_i=1} \theta(R_{i0}+\Delta c)\frac{f_0(R_{i0}+\Delta c)}{f^{cf}(r)f_{0|1}(R_{i0})}  \left[K\left(\frac{R_{i1}-r+\Delta c}{b_n}\right)-E\left[K\left(\frac{R_{i1}-r+\Delta c}{b_n}\right)\mid R_{i0}\right]\right] \\
			&+O_p(n^{-2.5\delta}+n^{2/5-1/2-\delta/2} +   n^{-0.3+\delta}\log n  +n^{-\delta/2})
		\end{split}
		\]}
\end{lem}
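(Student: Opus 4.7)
My plan is to start from the uniform product linearization in Lemma \ref{lem:theta-phi-product}, integrate both sides in $r_0$, multiply by $\sqrt{nb_n}$, and determine which influence contributions survive as $O_p(1)$ terms and which are swept into the stated remainder. The starting point is
\[
\hat\theta(r_0)\hat\phi(r,r_0)-\theta(r_0)\phi(r,r_0)
=\phi(r,r_0)\,\mathsf{S}_\theta(r_0)+\theta(r_0)\,\mathsf{S}_\phi(r,r_0)+R(r,r_0),
\]
with uniform remainder $R(r,r_0)=O_p(h_n^2+b_n^2+\log n/(nh_n\sqrt{b_n}))$. Integrating over the compact support of $r_0$ and multiplying by $\sqrt{nb_n}$ under $h_n\asymp b_n\asymp n^{-1/5-\delta}$ directly converts this residual into the $O_p(n^{-2.5\delta}+n^{-0.3+\delta}\log n)$ pieces of the stated remainder.

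Next I would show that $\sqrt{nb_n}\int \phi(r,r_0)\,\mathsf{S}_\theta(r_0)\,dr_0$ is swept into the remainder. Substituting $u=(R_{i0}-r_0)/h_n$ in each kernel-sum inside $\mathsf{S}_\theta$ and Taylor-expanding the smooth factors $\phi(r,\cdot)/\tilde f_0(\cdot)$ and $\phi(r,\cdot)/f_0(\cdot)$ around $R_{i0}$ localizes the integral and produces the IID centered sum
\[
\tfrac{1}{n_0}\!\sum_{Z_i=0}\tfrac{\phi(r,R_{i0})\varepsilon_i}{\tilde f_0(R_{i0})}
+\tfrac{1}{n_1}\!\sum_{Z_i=1}\tfrac{\phi(r,R_{i0})\eta_{i0}}{f_0(R_{i0})}+O_p(h_n^2/\sqrt n),
\]
which is $O_p(1/\sqrt n)$. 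Scaling by $\sqrt{nb_n}$ yields $O_p(\sqrt{b_n})=O_p(n^{-1/10-\delta/2})$, dominated by $n^{-\delta/2}$.

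I would then decompose $\int \theta(r_0)\,\mathsf{S}_\phi(r,r_0)\,dr_0$ into the three components of $\mathsf{S}_\phi$ displayed in Lemma \ref{lem:ratio-unif-linearization}. For the $\Psi_{1|0,i}$ component, the change of variables $u_0=(R_{i0}-(r_0-\Delta c))/h_n$ localizes the $h_n$-kernel; Taylor-expanding $\theta(r_0)f_0(r_0)/f_0(r_0-\Delta c)$ about $r_0=R_{i0}+\Delta c$ and using $f_{1|0}(r-\Delta c\mid R_{i0}-\Delta c)=b_n^{-1}\mathbb E[K_{b_n}(R_{i1}-(r-\Delta c))\mid R_{i0}]+O(b_n^2)$ rewrites the deterministic subtraction inside $\Psi$ as a conditional $b_n$-kernel expectation; this gives precisely the second displayed term of the lemma. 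For the $\Xi^{(0)}_i$ component, the same change of variables leaves an IID centered sum of bounded smooth functions of $R_{i0}$, which is $O_p(1/\sqrt{n_1})$ and absorbed after scaling. For the $\mu_i$ component, I would pull $\mu_i(r)$ outside the $r_0$-integral (it does not depend on $r_0$) and split $\mu_i=\mu_i^{(1)}+\mu_i^{(2)}$ as in Lemma \ref{lem:lin_fcf}: the smooth piece $\mu_i^{(2)}$ is again an $O_p(1/\sqrt{n_1})$ IID sum absorbed into the remainder, while the kernel piece $\mu_i^{(1)}$ produces the first displayed term with coefficient $\int f_{1|0}(r-\Delta c\mid r_0-\Delta c)f_0(r_0)\,dr_0/f^{cf}(r)^2$.

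The main obstacle I anticipate is the careful bookkeeping of the Taylor-expansion remainders introduced at every change of variables, since they interact with kernel sums whose per-observation variance scales like $1/b_n$. One must verify that $\sqrt{nb_n}(h_n^2+b_n^2)=O(n^{-2.5\delta})$, that $\sqrt{nb_n}\,\log n/(nh_n\sqrt{b_n})=O(n^{-0.3+\delta}\log n)$, and that every absorbed IID smooth-component sum indeed contributes at most $O_p(\sqrt{b_n})$ after scaling. Assumption \ref{ass: balanced sample} pins down $n_1/n\to\alpha$, so the $1/n_1$ prefactors in the linearization convert cleanly into the $1/(\alpha\sqrt{nb_n})$ coefficients of the two displayed leading terms, completing the proof.
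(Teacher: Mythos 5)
Your proposal follows the paper's proof essentially step for step: it starts from the uniform product linearization in Lemma \ref{lem:theta-phi-product}, scales by $\sqrt{nb_n}$ to absorb the uniform remainder, absorbs $\int\phi\,\mathsf{S}_\theta\,dr_0$ via the change-of-variable argument (which is exactly what Lemma \ref{lem: change of variable integration } encapsulates), and then splits $\int\theta\,\mathsf{S}_\phi\,dr_0$ into the $\Psi_{1|0,i}$, $\Xi^{(0)}_i$, and $\mu_i$ components, retaining the kernel-centered pieces of the first and third as the two leading terms while sending the smooth IID sums and the $\Xi^{(0)}_i$ contribution to the remainder. This is the same decomposition, the same localization-and-Taylor-expansion arguments, and the same bookkeeping as the paper.
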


\begin{proof}
	First, we have $ \sqrt{nb_n} =n^{2/5-\delta/2}$. Then, by Lemma \ref{lem:theta-phi-product}, it is easy to see that 
	\[\begin{split}
		&\quad	\sqrt{nb_n}\int_{r_0}\left| \hat\theta(r_0)\,\hat\phi(r,r_0)-\theta(r_0)\,\phi(r,r_0) - ( \phi(r,r_0)\,\mathsf{S}_\theta(r_0)
		\;+\; \theta(r_0)\,\mathsf{S}_\phi(r,r_0))dr_0 \right|\\
		&= O_p\left(n^{2/5-\delta/2}(h_n)^2+ n^{2/5-\delta/2}(b_n)^2 +n^{2/5-\delta/2}\frac{\log n}{ n h_n \sqrt{b_n } }\right)\\
		&= O_p(n^{-2.5\delta} +  n^{-0.3+\delta} \log n ).
	\end{split}
	\]
	
	Then using Lemma \ref{lem: change of variable integration }, we can derive 
	\[
	\sqrt{nb_n}\int_{r_0} \phi(r,r_0) S_\theta(r_0) dr_0=O_p(n^{2/5-1/2-\delta/2}).
	\]
	For the $\int_{r_0} S_{\phi}(r,r_0)dr_0$, we use the change of variable $(R_{i0}-r_0)/h_n= u$
	\[
	\begin{split}
		&\quad 	\int_{r_0}
		\frac{f_{1|0}(r-\Delta c\,|\,r_0-\Delta c)}{f^{cf}(r)}\cdot \frac{1}{n_1}\sum_{i:Z_i=1}\Xi^{(0)}_i(r_0)\\
		&=_{(1)}	\int_{u}
		\frac{1}{n_1}\sum_{i:Z_i=1} \frac{f_{1|0}(r-\Delta c\,|\,R_{i0}-uh_n-\Delta c)}{f^{cf}(r)}\cdot K(u) du - \int_{r_0} 	\frac{f_{1|0}(r-\Delta c\,|\,r_0-\Delta c)}{f^{cf}(r)}\cdot f_0(r_0) + O(h_n^2) dr_0\\
		&=_{(2)}\frac{1}{n_1}\sum_{i:Z_i=1}	\frac{f_{1|0}(r-\Delta c\,|\,R_{i0}-\Delta c)}{f^{cf}(r)}-\int_{r_0} 	\frac{f_{1|0}(r-\Delta c\,|\,r_0-\Delta c)}{f^{cf}(r)}\cdot  f_0(r_0) dr_0 +O(h_n^2)\\
		&= O_p(1/\sqrt{n} ) +O(h_n^2),
	\end{split}
	\]
	where in the $(1)$ we use the change of variable $(R_{i0}-r_0)/h_n= u$, and the fact that $E\left[K\left(\frac{R_{i0}-r_0}{h_n}\right)\right]-f_0(r_0) = O(h_n^2)$; in the $(2)$ we use the linear expansion of $uh_n$ to get another bias term of order $O(h_n^2)$, which is absorbed in the remainder term, and in the last step we use  the WLLN to get the $O_p(1/\sqrt{n})$ for the leading term. So for this term we have $\sqrt{nb_n}\int_{r_0} S_{\phi}(r,r_0)dr_0 = O_p(\sqrt{nb_n}h_n^2 + \sqrt{b_n}) =O_p(n^{-\delta/2})$.

	For the $\mu_i(r)$ term in $\int_{r_0} S_{\phi}(r,r_0)dr_0$, we notice that 
	\[
	\int_{r_0} 	\frac{f_{1|0}(r-\Delta c\,|\,r_0-\Delta c)\,f_0(r_0)}{f^{cf}(r)^2} \mu_i(r) dr_0 =\frac{\mu_i(r)}{f^{cf}(r)^2}\int_{r_0} f_{1|0}(r-\Delta c|r_0-\Delta c)f_0(r_0)dr_0 
	\]
	where we recall the definition of $\mu_i(r)$ here:
	\[
	\begin{split}
		\mu_i(r)
		&=  \frac{\Big[K_{b_n}(R_{i1}-(r-\Delta c))-E\left[K_{b_n}(R_{i1}-(r-\Delta c))|R_{i0}\right]\Big]\,f_0(R_{i0}+\Delta c )}{b_nf_{0|1}(R_{i0})} \\
		&+\left[f_{1|0}(r-\Delta c\mid R_{i0}-\Delta c)- \int_{r_0} f_0(r_0)f_{1|0}(r-\Delta c|r_0-\Delta c) dr_0\right].
	\end{split}
	\]
	We note that for the second summand of $\mu_i(r)$, we have 
	\[
	\frac{1}{n_1} \sum_{Z_i=1} \left[f_{1|0}(r_1-\Delta c\mid R_{i0}-\Delta c)- \int_{r_0} f_0(r_0)f_{1|0}(r_1-\Delta c|r_0-\Delta c) dr_0\right] =O_p(1/\sqrt{n_1}),
	\]
	so 
	{\footnotesize
		\[
		\begin{split}
			&\quad \frac{\sqrt{nb_n}}{n_1}\sum_{i:Z_i=1}\frac{\mu_i(r)}{f^{cf}(r)^2}\int_{r_0} f_{1|0}(r-\Delta c|r_0-\Delta c)f_0(r_0)dr_0\\& = \frac{1}{\alpha\sqrt{nb_n}}\frac{\int_{r_0} f_{1|0}(r-\Delta c|r_0-\Delta c)f_0(r_0)dr_0}{f^{cf}(r)^2} \sum_{i:Z_i=1}  \frac{\Big[K_{b_n}(R_{i1}-(r-\Delta c))-E\left[K_{b_n}(R_{i1}-(r-\Delta c))|R_{i0}\right]\Big]\,f_0(R_{i0}+\Delta c )}{f_{0|1}(R_{i0})} 
		\end{split}
		\]
	}

	So it remains to derive the linear expansion of 
	\[
	I_n(r)\equiv \int_{r_0}\left[\frac{\theta(r_0)f_0(r_0)}{f^{cf}(r)}\cdot \frac{1}{n_1}\sum_{i:Z_i=1}\Psi_{1|0,i}\!\big(r-\Delta c,\,r_0-\Delta c\big)\right] dr_0.
	\]
	First, recall that $x=r_0-\Delta c$ and $y=r-\Delta c$, we can write
	{\footnotesize
		\[
		\Psi_{1|0,i}(y,x)
		:= 
		\frac{K\!\left(\dfrac{R_{i0}-x}{h_n}\right)}{h_n\,f_{0}(x)}
		\left[
		\frac{K\!\left(\dfrac{R_{i1}-y}{b_n}\right)}{b_n}
		- \frac{E\left[{K\!\left(\dfrac{R_{i1}-y}{b_n}\right)}\mid R_{i0}=x\right]}{b_n}+\underbrace{\frac{E\left[{K\!\left(\dfrac{R_{i1}-y}{b_n}\right)}\mid R_{i0}=x\right]}{b_n}-  f_{1|0}(y\mid x)}_{bias}
		\right], 
		\]}
	where the bias term is of order $O(b_n^2)$, uniformly over $(x,y)$ values, by standard linearization argument and the symmetry of kernel function. Therefore, we can write 
	
		{\footnotesize
		\[
		\begin{split}
			n_1 I_n(r)&= \int_{r_0}\frac{\theta(r_0)f_0(r_0)}{n_1f^{cf}(r)}\sum_{i:Z_i=1}\frac{K\!\left(\dfrac{R_{i0}-r_0+\Delta c}{h_n}\right)}{h_n\,f_{0|1}(r_0-\Delta c)}
			\left[
			\frac{K\!\left(\dfrac{R_{i1}-r+\Delta c}{b_n}\right)}{b_n}
			-\underbrace{\frac{E\left[{K\!\left(\dfrac{R_{i1}-r+\Delta c}{b_n}\right)}\mid R_{i0}=r_0-\Delta c\right]}{b_n}}_{:=\Gamma(r_0-\Delta c)}
			\right] dr_0 +O(b_n^2)\\
			&=_{(i)}\frac{1}{f^{cf}(r)}
			\frac{K\!\left(\dfrac{R_{i1}-r+\Delta c}{b_n}\right)}{b_n} \times \Bigg\{ \sum_{i:Z_i=1} \int_{u}\frac{\theta(R_{i0}+\Delta c-uh_n)f_0(R_{i0}+\Delta c-uh_n)}{n_1}\frac{K(u)}{f_{0|1}(R_{i0}-uh_n)} du \Bigg\}+O(b_n^2)\\
			&- \sum_{i:Z_i=1} \int_{u}\frac{\theta(R_{i0}+\Delta c-uh_n)f_0(R_{i0}+\Delta c-uh_n)}{n_1}\frac{K(u)}{f_{0|1}(R_{i0}-uh_n)} \Gamma(R_{i0}-uh_n)du  \\
			&=_{(ii)}\frac{1}{f^{cf}(r)}\times\sum_{i:Z_i=1} \int_{u}\frac{\theta(R_{i0}+\Delta c)f_0(R_{i0}+\Delta c)}{n_1}\frac{K(u)}{f_{0|1}(R_{i0})} du \times \left\{ \frac{K\!\left(\dfrac{R_{i1}-r+\Delta c}{b_n}\right)}{b_n} - \Gamma(R_{i0})  \right\} +O(h_n^2).
		\end{split}
		\]
	}
	Rewrite the equation we can get 
	
	{\footnotesize
	\[
	\begin{split}
		I_n(r)= \frac{1}{n_1b_n}\sum_{Z_i=1} \theta(R_{i0}+\Delta c)\frac{f_0(R_{i0}+\Delta c)}{f^{cf}(r)f_{0|1}(R_{i0})}  \left[K\left(\frac{R_{i1}-r+\Delta c}{b_n}\right)-E\left[K\left(\frac{R_{i1}-r+\Delta c}{b_n}\right)\mid R_{i0}\right]\right] +O(h_n^2+b_n^2),
	\end{split}
	\]
}
In  step $(i)$ we use the change of variable $u=\dfrac{R_{i0}-r_0+\Delta c}{h_n}$ and Fubini's theorem, in step $(ii)$ we use the local expansion of the $h_n$. Then multiply $\sqrt{nb_n}$ on both sides of $I_n(r)$, and use $\sqrt{nb_n}h_n^2\asymp \sqrt{nb_n}b_n^2\asymp n^{-2.5\delta}$ to get the result in the lemma. 
\end{proof}

\begin{lem}[Uniform linearization of the boundary-kernel estimator $\widehat m_1(r)$]\label{lem:m1-bdry-unif}
	Let $\widehat m_1(r)$ be the boundary-aware estimator in (3.5) with bandwidth $b_n$,
	and define $m_1(r):=\mathbb E[Y_{i1}\mid R_{i1}=r,Z_i=1]$, $f_1(r):=f_{R_{i1}\mid Z=1}(r)$, and
	$\eta_{i1}:=Y_{i1}-m_1(R_{i1})$ on $\{Z=1\}$. For $r\in\mathcal I\subseteq[c,\bar r]$ compact, write
	\[
	K_r^*(u)=
	\begin{cases}
		K(u), & r-c\ge \tau b_n,\\
		K^{\mathrm{bd}}_{r,b_n}(u), & c\le r-c< \tau b_n,
	\end{cases}
	\quad
	\kappa_2(r):=\int u^2 K_r^*(u)\,du,\qquad R(K_r^*):=\int (K_r^*(u))^2\,du.
	\]
	Then,
	\[
	\sup_{r\in\mathcal I}\Big|
	\widehat m_1(r)-m_1(r)
	-\frac{1}{n_1 b_n f_1(r)}
	\sum_{i:Z_i=1} K_r^*\!\Big(\frac{R_{i1}-r}{b_n}\Big)\,\eta_{i1}
	- B_1^{\mathrm{bd}}(r)
	\Big|
	=O_p\!\Big(\sqrt{\tfrac{\log n_1}{n_1 b_n}}\Big)
	+o\!\big((b_n)^2\big),
	\]
	with local-constant bias
	\[
	B_1^{\mathrm{bd}}(r)=O\left((b_n)^2\right).
	\]
\end{lem}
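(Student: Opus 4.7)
I would begin by writing the boundary-aware Nadaraya--Watson estimator as a ratio
$\widehat m_1(r)=\widehat N(r)/\widehat D(r)$, where
$\widehat D(r):=(n_1 b_n)^{-1}\sum_{i:Z_i=1} K_r^*\!\bigl((R_{i1}-r)/b_n\bigr)$
and $\widehat N(r)$ is the analogous sum weighted by $Y_{i1}$. Substituting $Y_{i1}=m_1(R_{i1})+\eta_{i1}$ yields the exact identity
\[
\widehat m_1(r)-m_1(r)=\frac{\widehat B(r)+\widehat V(r)}{\widehat D(r)},
\]
with $\widehat B(r):=(n_1 b_n)^{-1}\sum_{i:Z_i=1} K_r^*\!\bigl((R_{i1}-r)/b_n\bigr)\bigl(m_1(R_{i1})-m_1(r)\bigr)$ and mean-zero piece $\widehat V(r):=(n_1 b_n)^{-1}\sum_{i:Z_i=1} K_r^*\!\bigl((R_{i1}-r)/b_n\bigr)\eta_{i1}$. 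The endgame is to prove $\sup_r|\widehat D(r)-f_1(r)|=O_p(\sqrt{\log n_1/(n_1 b_n)}+b_n^2)$, that $\widehat B(r)=f_1(r)\,B_1^{\mathrm{bd}}(r)+o(b_n^2)+O_p(\sqrt{\log n_1/(n_1 b_n)})$, and that the advertised influence function is $\widehat V(r)/f_1(r)$. Lemma \ref{lemma: expansion for 1/g(x)} then replaces $1/\widehat D(r)$ by $1/f_1(r)$ at a cost absorbed in the claimed remainder.

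\textbf{Bias analysis and the boundary kernel.} The crux is that the boundary kernel is engineered precisely so that $\int K_r^*(u)\,du=1$ and $\int u\,K_r^*(u)\,du=0$ uniformly in $r\in\mathcal I$. For $r-c\ge \tau b_n$ this is automatic by symmetry of $K$. For $c\le r-c<\tau b_n$, plugging $K_r^*(u)=(a(r)+b(r)u)K(u)\mathbbm{1}\{u\ge (c-r)/b_n\}$ into the definitions of $A_0,A_1,A_2,a,b$ gives $aA_0+bA_1=(A_2 A_0-A_1^2)/(A_0 A_2-A_1^2)=1$ and $aA_1+bA_2=(A_1A_2-A_1A_2)/(A_0 A_2-A_1^2)=0$. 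Therefore, after the change of variable $u=(R_{i1}-r)/b_n$ and a second-order Taylor expansion of $s\mapsto m_1(r+s)f_1(r+s)$ at $s=0$, the $O(b_n)$ term vanishes and $\mathbb E[\widehat B(r)]=\tfrac12 m_1''(r)f_1(r)\kappa_2(r)\,b_n^2+o(b_n^2)$ uniformly, so
\[
B_1^{\mathrm{bd}}(r)=\tfrac12 m_1''(r)\kappa_2(r)\,b_n^2=O(b_n^2)
\]
under Assumption \ref{ass:smoothness}.

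\textbf{Stochastic rate.} For the centered pieces $\widehat B(r)-\mathbb E[\widehat B(r)]$ and $\widehat V(r)$, together with the denominator fluctuation $\widehat D(r)-f_1(r)$, I would invoke a Gin\'e--Guillou/Einmahl--Mason style uniform bound on kernel-weighted empirical processes of the form already used in Lemma \ref{lem:ratio-unif-linearization}, each delivering an $O_p(\sqrt{\log n_1/(n_1 b_n)})$ rate. Assembling: $\widehat m_1-m_1=\bigl[\widehat B+\widehat V\bigr]/f_1+\text{remainder}$, where the remainder is the product of $1/\widehat D-1/f_1$ (size $O_p(\sqrt{\log n_1/(n_1 b_n)}+b_n^2)$) with $\widehat B+\widehat V$ (size $O_p(\sqrt{\log n_1/(n_1 b_n)}+b_n^2)$), which is comfortably of smaller order than the advertised bound.

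\textbf{Main obstacle.} The delicate step is the uniform treatment of the family $\{K_r^*\}_{r\in\mathcal I}$, since the coefficients $a(r),b(r)$ depend on $r$ through the variable truncation point $(c-r)/b_n$, and a naive entropy bound could pick up extra $b_n$ factors. I would dispose of this by showing that on $r\in[c,c+\tau b_n]$ each $A_j(r)$ is continuously differentiable and $A_0(r)A_2(r)-A_1(r)^2$ is bounded away from zero (a Cauchy--Schwarz strict inequality for $(1,u)$ against $K$ on any truncated half-line), so $a(r),b(r)$ and hence $K_r^*$ are Lipschitz in $r$ with constants independent of $b_n$. This supplies the $\log(1/b_n)$-scale covering number required by the uniform empirical-process bound, after which the three remainder pieces combine into the stated rate.
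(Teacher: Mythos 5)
Your decomposition $\widehat m_1=\widehat N/\widehat D$ and the check that the boundary kernel is calibrated so that $\int K_r^*=1$, $\int uK_r^*=0$ is the right strategy, and it is essentially the argument the paper invokes by reference (the paper's proof is a citation to Fan and Gijbels, Sections 3.2--3.4, plus a remark that the $\log n_1$ factor comes from uniformity). Your proof supplies the details the paper leaves implicit, so it is a useful alternative. Two corrections are worth making.

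First, your explicit bias constant is wrong (though the claimed order $O(b_n^2)$ is still right, which is all the lemma asserts). You are Taylor-expanding $s\mapsto(m_1(r+s)-m_1(r))f_1(r+s)$, and the second-order coefficient is $\tfrac12\big[m_1''(r)f_1(r)+2m_1'(r)f_1'(r)\big]$, not $\tfrac12 m_1''(r)f_1(r)$. Dividing by $f_1(r)$ yields the Nadaraya--Watson-type bias
\[
B_1^{\mathrm{bd}}(r)=\tfrac12\Big(m_1''(r)+\tfrac{2m_1'(r)f_1'(r)}{f_1(r)}\Big)\kappa_2(r)\,b_n^2+o(b_n^2).
\]
Dropping the cross term $m_1'f_1'/f_1$ gives the local-\emph{linear} bias, but the paper's boundary estimator uses a fixed (population-moment) modified kernel in a ratio form, so the cross term does not cancel. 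This does not affect the conclusion since the lemma only records $B_1^{\mathrm{bd}}(r)=O(b_n^2)$.

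Second, the ``main obstacle'' step needs to be rephrased: your claim that $a(r),b(r)$ are Lipschitz in $r$ with constants independent of $b_n$ is false, because $A_j'(r)=K\!\big((c-r)/b_n\big)\big((c-r)/b_n\big)^j/b_n=O(1/b_n)$. The correct reparametrization is $\rho:=(c-r)/b_n\in[-\tau,0]$, which is a compact interval whose length does not depend on $b_n$: the coefficients $a,b$ depend on $r$ only through $\rho$, and they are smooth in $\rho$ with $A_0(\rho)A_2(\rho)-A_1(\rho)^2$ bounded away from zero by strict Cauchy--Schwarz on any truncated half-line, exactly as you observe. With the family indexed by $(r,\rho)$ (or equivalently by $\rho$ alone for the boundary strip) and the shifted-rescaled kernel handled by the usual VC/Euclidean-class argument, the Gin\'e--Guillou-type uniform rate $O_p(\sqrt{\log n_1/(n_1 b_n)})$ goes through and the rest of your assembly, including the cross-product remainder bound, is correct.
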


\begin{proof}
	The linearization of conditional mean under the boundary kernel estimation is standard in the literature and can be found in the textbook, see Section 3.2-3.4 of \cite{FanGijbels1996}. The stochastic order term has a slightly different order $\tfrac{\log n_1}{n_1 b_n}$ with an additional $\log n_1$ term because of the uniform linearization.
\end{proof}

\begin{lem}[Uniform linearization of $\,\hat f_{1|0}(r-\Delta c \mid r_0-\Delta c)\,\hat f_0(r_0)$]
	Fix compact sets $I,I_0\subset\mathbb R$, let $r\in I$, $r_0\in I_0$, and set $y:=r-\Delta c$, $x:=r_0-\Delta c$.
	Define
	\[
	\widehat{\psi}(r,r_0):=\hat f_{1|0}(y\mid x)\,\hat f_0(r_0),\qquad
	\psi(r,r_0):= f_{1|0}(y\mid x)\,f_0(r_0).
	\]
	Then, uniformly over $(r,r_0)\in I\times I_0$,
	\begin{align*}
		\sup_{r,r_0}\ \Bigg|
		&\widehat{\psi}(r,r_0)-\psi(r,r_0)\\
		&	-\Bigg[
		f_0(r_0)\cdot\frac{1}{n_1}\sum_{i:Z_i=1}\Psi_{1|0,i}(y,x)
		+ f_{1|0}(y\mid x)\cdot\frac{1}{n_1}\sum_{i:Z_i=1}\tilde \Xi_i^{(0)}(r_0) \Bigg]\Bigg|
		\\
		&=O_p\!\left(h_n^{2}+b_n^{2}+\frac{\log n}{nh_n\sqrt{b_n}}
		\right),
	\end{align*}
	where 
		\[
	\tilde\Xi^{(0)}_{i}(r_0)
	:=
	\frac{ K\!\big(\frac{R_{i0}-r_0}{h_n}\big) }{h_n}
	-f_0(r_0),
	\]
\end{lem}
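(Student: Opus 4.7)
The plan is to reduce the claim to a direct application of the product-linearization result in Lemma~\ref{lem:prod-linearization}, after first assembling uniform linearizations of the two factors $\hat f_{1|0}(y\mid x)$ and $\hat f_0(r_0)$ separately. No new estimation is required; all ingredients are already in place from the earlier lemmas.

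First, I would handle the marginal-density factor. By the definition of $\tilde\Xi^{(0)}_i(r_0)$, the identity $\hat f_0(r_0)-f_0(r_0)=\tfrac{1}{n_1}\sum_{i:Z_i=1}\tilde\Xi^{(0)}_i(r_0)$ holds exactly (the expectation of each summand is the kernel bias $O(h_n^2)$, which is already absorbed into the empirical average and does not appear as a separate remainder). The Giné--Guillou uniform bound invoked in the proof of Lemma~\ref{lem:ratio-unif-linearization} then yields $\sup_{r_0\in I_0}|\tfrac{1}{n_1}\sum\tilde\Xi^{(0)}_i(r_0)|=O_p(h_n^2+\sqrt{\log n/(nh_n)})$. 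Next, I would import the uniform linearization of $\hat f_{1|0}(y\mid x)$ that is already written down inside the proof of Lemma~\ref{lem:ratio-unif-linearization}, namely $\sup_{y,x}|\hat f_{1|0}(y\mid x)-f_{1|0}(y\mid x)-\tfrac{1}{n_1}\sum\Psi_{1|0,i}(y,x)|=O_p(h_n^2+b_n^2+\log n/(nh_n))$, together with the sup-norm rate $\sup_{y,x}|\tfrac{1}{n_1}\sum\Psi_{1|0,i}(y,x)|=O_p(\sqrt{\log n/(nb_nh_n)})$ on the influence average.

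Third, I would apply Lemma~\ref{lem:prod-linearization} with $g_1=f_{1|0}(y\mid x)$, $g_2=f_0(r_0)$, $\phi_{1i}=\Psi_{1|0,i}(y,x)$, $\phi_{2i}=\tilde\Xi^{(0)}_i(r_0)$, linearization remainders $r_{n,1}=h_n^2+b_n^2+\log n/(nh_n)$ and $r_{n,2}=0$, and influence sup-rates $\tilde r_{n,1}=\sqrt{\log n/(nb_nh_n)}$, $\tilde r_{n,2}=\sqrt{\log n/(nh_n)}$. The leading term is then exactly the bracketed expression in the statement, while the residual is of order $O_p(r_{n,1}+r_{n,2}+\tilde r_{n,1}\tilde r_{n,2})=O_p(h_n^2+b_n^2+\log n/(nh_n\sqrt{b_n}))$, since the cross-product rate $\log n/(nh_n\sqrt{b_n})$ dominates $\log n/(nh_n)$ whenever $b_n\to 0$.

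The main obstacle, such as it is, is bookkeeping rather than analysis: one must verify that the deterministic prefactors $f_0(r_0)$ and $f_{1|0}(y\mid x)$ are uniformly bounded over $I\times I_0$ (which follows from Assumptions~\ref{ass:G2}--\ref{ass:smoothness}), so that multiplying the two influence averages by them does not inflate the stochastic orders, and that the cross-product term $\tilde r_{n,1}\tilde r_{n,2}$ is indeed binding among the three pieces of the remainder. No additional concentration inequalities are needed beyond those already used in Lemmas~\ref{lem:theta-unif}--\ref{lem:ratio-unif-linearization}, and the uniformity in $(r,r_0)$ transfers automatically from the factor-wise uniform linearizations.
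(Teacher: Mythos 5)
The paper does not supply an explicit proof of this lemma, but your argument is correct and mirrors exactly the strategy the paper itself uses in the proof of the closely related Lemma~\ref{lem:ratio-unif-linearization} (which handles the ratio $\hat f_{1|0}\hat f_0/\hat f^{cf}$): assemble factor-wise uniform linearizations and then invoke the product-expansion Lemma~\ref{lem:prod-linearization}. Your observation that $\hat f_0(r_0)-f_0(r_0)=\tfrac{1}{n_1}\sum_{i:Z_i=1}\tilde\Xi^{(0)}_i(r_0)$ holds exactly (so $r_{n,2}=0$ and the kernel bias lives inside the influence average rather than in a separate remainder) correctly accounts for why this lemma uses $\tilde\Xi^{(0)}_i$ rather than the centered $\Xi^{(0)}_i$ appearing in Lemma~\ref{lem:ratio-unif-linearization}, and the bookkeeping on $r_{n,1}$, $\tilde r_{n,1}$, $\tilde r_{n,2}$ and their cross-product is right: the dominant contribution $\sqrt{\log n/(n b_n h_n)}\cdot\sqrt{\log n/(n h_n)}=\log n/(n h_n\sqrt{b_n})$ absorbs $\log n/(n h_n)$ since $b_n\to 0$, while the piece $h_n^2\sqrt{\log n/(n b_n h_n)}$ is swallowed by $O_p(h_n^2)$. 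The only nit is that $\tilde r_{n,2}$ should technically include the $h_n^2$ bias of $\tilde\Xi^{(0)}_i$, i.e.\ $\tilde r_{n,2}=h_n^2+\sqrt{\log n/(nh_n)}$, but as you implicitly use, the extra cross term it creates is $h_n^2\sqrt{\log n/(nb_nh_n)}=o(h_n^2)$ and so does not change the stated rate.
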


\begin{lem}[Uniform linearization of $\ \hat\theta(r_0)\,\hat\psi(r,r_0)$]
	Fix compact sets $I,I_0\subset\mathbb R$, let $r\in I$, $r_0\in I_0$, and set $y:=r-\Delta c$, $x:=r_0-\Delta c$.
	Then, uniformly over $(r,r_0)\in I\times I_0$,
	{
		\begin{align*}
				&\sup_{r,r_0}\Bigg|
				\hat\theta(r_0)\hat\psi(r,r_0)-\theta(r_0)\psi(r,r_0)-\Bigg[
				\psi(r,r_0)\mathsf{S}_{\theta}(r_0)+ \theta(r_0)\!\left\{\frac{f_0(r_0)}{n_1}\sum_{i:Z_i=1}	\Psi_{1|0,i}(y,x) +\frac{f_{1|0}(y,x)}{n_1} \sum_{i:Z_i=1}\tilde \Xi_i^{(0)}(r_0)
				 \right\}
				\Bigg]\Bigg|\\
				&\qquad=O_p\!\left(
				h_n^{2}+b_n^{2}+\frac{\log n}{nh_n\sqrt{b_n}}
				\right).
		\end{align*}}
\end{lem}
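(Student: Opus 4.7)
The strategy is to treat this as a direct product linearization of two factors that have already been linearized in earlier lemmas, then invoke Lemma \ref{lem:prod-linearization} to combine them and bookkeep the remainder. The first factor $\hat\theta(r_0)$ admits the uniform linearization from Lemma \ref{lem:theta-unif} with influence term $\mathsf{S}_\theta(r_0)$ and uniform remainder of order $O_p\bigl(h_n^2+\log n/(nh_n)\bigr)$. The second factor $\hat\psi(r,r_0)$ admits the uniform linearization from the preceding lemma with influence term
\[
\mathsf{S}_\psi(r,r_0):=f_0(r_0)\cdot\frac{1}{n_1}\sum_{i:Z_i=1}\Psi_{1|0,i}(y,x)
+ f_{1|0}(y\mid x)\cdot\frac{1}{n_1}\sum_{i:Z_i=1}\tilde\Xi_i^{(0)}(r_0),
\]
with uniform remainder of order $O_p\bigl(h_n^2+b_n^2+\log n/(nh_n\sqrt{b_n})\bigr)$.

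The first step is to record these two linearizations with their uniform rates and uniform stochastic orders. In particular, I will need the sup-norm rates for the individual influence averages. By the standard uniform-in-bandwidth results of \cite{gine2002rates}, $\sup_{r_0}|\tfrac1{n_1}\sum_i\tilde\Xi_i^{(0)}(r_0)|=o_p(\sqrt{\log n/(nh_n)})$ and $\sup_{y,x}|\tfrac1{n_1}\sum_i\Psi_{1|0,i}(y,x)|=o_p(\sqrt{\log n/(nh_n b_n)})$, while the $\mathsf{S}_\theta$ average is $o_p(\sqrt{\log n/(nh_n)})$. Also $\theta$, $\psi$, $f_0$, $f_{1|0}$ are uniformly bounded on the compact domain by Assumption \ref{ass:G2} and the smoothness in Assumption \ref{ass:smoothness}.

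The second step is to apply Lemma \ref{lem:prod-linearization} with $g_1=\theta$, $\hat g_1=\hat\theta$, $g_2=\psi$, $\hat g_2=\hat\psi$. The lemma yields, uniformly,
\[
\hat\theta\hat\psi-\theta\psi = \psi\,\mathsf{S}_\theta + \theta\,\mathsf{S}_\psi + O_p(r_{n,1}+r_{n,2}),
\]
where $r_{n,1}=h_n^2+\log n/(nh_n)$ and $r_{n,2}=h_n^2+b_n^2+\log n/(nh_n\sqrt{b_n})$. The cross term $\mathsf{S}_\theta\cdot\mathsf{S}_\psi$ is absorbed into the remainder because its uniform order is $o_p(\log n/(nh_n\sqrt{b_n}))$, which is dominated by $r_{n,2}$.

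The third step is to rewrite $\theta\cdot \mathsf{S}_\psi$ in the displayed form stated in the lemma by distributing $\theta(r_0)$ over the two summands in $\mathsf{S}_\psi$, and to collect the dominating rate as $O_p(h_n^2+b_n^2+\log n/(nh_n\sqrt{b_n}))$. The only real obstacle is the cross-product bookkeeping: one must confirm that, uniformly over $(r,r_0)\in I\times I_0$, the product of the two remainder terms and the product of the two (non-vanishing) influence averages are both dominated by $r_{n,2}$. Given the uniform envelope on $\theta$ and $\psi$ and the convergence rates cited above, this comparison is routine, and the claim follows.
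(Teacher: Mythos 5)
Your proof is correct and takes what is clearly the intended route: the paper states this lemma without proof, and it is designed to follow immediately from Lemma \ref{lem:theta-unif}, the preceding lemma on $\hat\psi=\hat f_{1|0}\hat f_0$, and the product linearization in Lemma \ref{lem:prod-linearization}. Your bookkeeping is also right — the cross-product of the two influence averages is $o_p\!\bigl(\sqrt{\log n/(nh_n)}\cdot\sqrt{\log n/(nh_nb_n)}\bigr)=o_p\!\bigl(\log n/(nh_n\sqrt{b_n})\bigr)$, which is exactly of the order of the last term in the stated remainder, and the $\log n/(nh_n)$ contribution from $r_{n,1}$ is absorbed by $\log n/(nh_n\sqrt{b_n})$ since $b_n\to 0$, so the final rate $O_p\!\bigl(h_n^2+b_n^2+\log n/(nh_n\sqrt{b_n})\bigr)$ matches.
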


\begin{lem}\label{lem: linearization for the ATT integrant}
	Recall that $\tilde{f}_0(r_0)$ is the density of $R_{i0}$ conditional on the $Z_i=0$ group. Let $F_{K,i}$ denote the random variable 
	$$F_{K,i}:=F_K\!\left( \frac{R_{i1} - (c^{cf} - \Delta c)}{b_n} \right)=\int^{u\ge\frac{R_{i1} - (c^{cf} - \Delta c)}{b_n} }_{-\infty} K(u) du.$$

	 For $h_n=b_n=n^{-1/4-\delta}$ and $\delta>0$ is a small number, then we have 
	\[
	\begin{split}
		&\quad \sqrt{n} \int_{r=c^{cf}}^{\infty}  \int_{r_0} \hat\theta(r_0)\,\hat\psi(r,r_0)-\theta(r_0)\,\psi(r,r_0) dr_0 dr\\
		&= \frac{1}{(1-\alpha)\sqrt{n}}\sum_{i:Z_i=0} \frac{f_0(R_{i0})}{\tilde{f}_0(R_{i0})} \varepsilon_i \kappa(R_{i0})+  \frac{1}{\alpha \sqrt{n}}\sum_{i:Z_i=1}\eta_{i0} \kappa(R_{i0})\\
		&+\frac{1}{\alpha\sqrt{n}}\sum_{i: Z_i=1}\theta(R_{i0})(1-F_{1|0}(c^{cf}-\Delta c|R_{i0}-\Delta c)-E[\theta(R_{i0})(1-F_{1|0}(c^{cf}-\Delta c|R_{i0}-\Delta c)|Z_i=1]\\
		&+ \frac{1}{\alpha\sqrt{n}} 
		\sum_{i: Z_i = 1} 
		\pi(R_{i0})
		\left[
		F_{K,i}
		- 
		\mathbb{E}\!\left\{
		F_{K,i}
		\Bigm| R_{i0}
		\right\}
		\right]+o_p(1),
	\end{split}
	\]
	where $\kappa(r_0)=(1-F_{1|0}(c^{cf}-\Delta c|R_{i0}=r_0-\Delta c))$, $F_{1|0}(a,b)=Pr(R_{i1}\le a| R_{i0}=b,Z_i=1)$, and $\pi(x) = 
	\frac{
		\theta(x + \Delta c)\, f_0(x + \Delta c)
	}{
		f_{0}(x)
	}
	$.
\end{lem}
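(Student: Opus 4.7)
The plan is to invoke the immediately preceding lemma on the uniform linearization of $\hat\theta(r_0)\hat\psi(r,r_0)-\theta(r_0)\psi(r,r_0)$, integrate the three leading influence components over $\{r\ge c^{cf}\}\times\mathcal I_0$, and check that the uniform remainder $O_p(h_n^2+b_n^2+\log n/(nh_n\sqrt{b_n}))$ absorbed into $o_p(1)$ after multiplication by $\sqrt{n}$. With $h_n=b_n=n^{-1/4-\delta}$ and $\delta\in(0,1/12)$, the remainder scales as $n^{-2\delta}+\log n/n^{1/8-3\delta/2}\to 0$, so no sharper estimate is required and everything reduces to term-by-term integration.

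First I would handle the ``trend'' piece $\int_{r\ge c^{cf}}\int_{r_0}\psi(r,r_0)\,\mathsf{S}_\theta(r_0)\,dr_0\,dr$. Fubini on $r$ gives the survival factor $\int_{c^{cf}}^\infty f_{1|0}(r-\Delta c\mid r_0-\Delta c)\,dr=\kappa(r_0)$, leaving $\int_{r_0}f_0(r_0)\kappa(r_0)\mathsf S_\theta(r_0)\,dr_0$. Substituting $\mathsf S_\theta$ from Lemma \ref{lem:theta-unif}, changing variables $u=(R_{i0}-r_0)/h_n$, and performing a two-term Taylor expansion of the smooth kernel weight $f_0(r_0)\kappa(r_0)/\tilde f_0(r_0)$ (respectively $\kappa(r_0)$) around $R_{i0}$ gives the first two summands of the lemma, with a leftover bias of order $O(h_n^2)$ that contributes $O_p(\sqrt n h_n^2)=o_p(1)$ under the chosen bandwidth.

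Next I would address the two pieces coming from $\hat\psi$. For the $\hat f_0$ component $\int\int\theta(r_0)f_{1|0}(y\mid x)\tilde\Xi^{(0)}_i(r_0)\,dr_0\,dr$, Fubini on $r$ again produces the $\kappa(r_0)$ factor and then the change $u=(R_{i0}-r_0)/h_n$ plus a local Taylor expansion yields the centered sum $\theta(R_{i0})\kappa(R_{i0})-E[\theta(R_{i0})\kappa(R_{i0})\mid Z_i=1]$, which matches the third lemma term. For the $\hat f_{1|0}$ component $\int\int\theta(r_0)f_0(r_0)\Psi_{1|0,i}(y,x)\,dr_0\,dr$, the structure of $\Psi_{1|0,i}$ forces a double change of variables (in $x$ via $u=(R_{i0}-x)/h_n$ and, crucially, in $y$ via the reflection $v=(y-R_{i1})/b_n$ with the symmetry of $K$). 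Integrating the kernel $K((R_{i1}-y)/b_n)/b_n$ over $y\ge c^{cf}-\Delta c$ yields precisely $F_{K,i}$, and integrating its conditional mean yields $E[F_{K,i}\mid R_{i0}]$. A Taylor expansion in $uh_n$ of the smooth weight $\theta(x+\Delta c)f_0(x+\Delta c)/f_0(x)=\pi(x)$ around $R_{i0}$ produces the final influence term $\pi(R_{i0})\{F_{K,i}-E[F_{K,i}\mid R_{i0}]\}$ with an $O(h_n^2)$ bias absorbed into $o_p(1)$.

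The main technical obstacle sits in Step~3: the integration in $y$ against the threshold $c^{cf}-\Delta c$ is not removable by symmetry alone, and the Taylor expansion around $R_{i0}$ introduces a boundary artifact whenever $R_{i1}$ lies within $O(b_n)$ of $c^{cf}-\Delta c$, exactly in the spirit of the localization argument used in Lemma \ref{lem: change of variable integration }. One must bound this boundary contribution by the density $f^{cf}(c^{cf})b_n$ of observations in the boundary strip, so that the apparent $O_p(\sqrt n b_n)$ term is in fact $O_p(\sqrt n b_n^{3/2})=o_p(1)$. A secondary obstacle is bookkeeping: the decomposition from the preceding lemma produces cross products and centering terms, and one must verify that every ``non-leading'' piece of $\mathsf S_\phi$ either cancels against its conditional-mean centering or multiplies against $\sqrt n$-negligible bias. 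Once these are controlled, the four leading sums assemble directly, with $n_0=(1-\alpha)n$ and $n_1=\alpha n$ from Assumption \ref{ass: balanced sample} giving the scaling factors $1/((1-\alpha)\sqrt n)$ and $1/(\alpha\sqrt n)$ in the final expression.
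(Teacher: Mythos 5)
Your overall plan matches the paper's proof: apply the preceding lemma to linearize $\hat\theta\hat\psi-\theta\psi$ into the three influence components $\psi\,\mathsf{S}_\theta$, $\theta f_{1|0}\tilde\Xi^{(0)}_i$, and $\theta f_0\Psi_{1|0,i}$; integrate each piece term-by-term over $r\ge c^{cf}$ and $r_0$ (Fubini and the change of variable $u=(R_{i0}-x)/h_n$ produce the $\kappa(R_{i0})$ factors, the centered $\theta\kappa$ sum, and the $\pi(R_{i0})\{F_{K,i}-E[F_{K,i}\mid R_{i0}]\}$ sum respectively); and observe that the uniform remainder times $\sqrt n$ vanishes under $h_n=b_n=n^{-1/4-\delta}$. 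That is exactly the three-step decomposition the paper uses (their Steps~3, 2, and 1 respectively).

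However, your ``main technical obstacle'' paragraph identifies a boundary artifact that does not actually arise. In the $\Psi_{1|0,i}$ piece the $y$-integration over $y\ge c^{cf}-\Delta c$ is carried out \emph{exactly}, producing $F_{K,i}=F_K((R_{i1}-(c^{cf}-\Delta c))/b_n)$ with no approximation, and the only Taylor expansion is in the $x$-variable (i.e., in $u h_n$), applied to the smooth weights $\pi(\cdot)$ and $\tau(\cdot)=E[F_{K,i}\mid R_{i0}=\cdot]$. Since $\tau$ is an integral of the bounded function $F_K$ against the twice-differentiable conditional density $f_{1|0}(\cdot\mid x)$, its derivatives are uniformly bounded and the $O(h_n^2)$ bias is genuine, whether or not $R_{i1}$ sits near $c^{cf}-\Delta c$. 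There is nothing to localize, and in fact the bound you propose would not save you: with $b_n=n^{-1/4-\delta}$ one has $\sqrt n\,b_n^{3/2}=n^{1/8-3\delta/2}\to\infty$ for $\delta<1/12$, so ``$O_p(\sqrt n b_n^{3/2})=o_p(1)$'' is false. The localization argument from Lemma~\ref{lem: change of variable integration } is needed only where a Taylor expansion in the integration variable meets a truncated integration range — that occurs in $\mathsf{S}_\theta$ (where the paper invokes that lemma with $c^{cf}=-\infty$, making the issue moot) but not in the $\Psi_{1|0,i}$ step.
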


\begin{proof}
	First, by choosing  $h_n=b_n=n^{-1/4-\delta}$, we have 
	\[
	\sqrt{n}O_p\!\left(h_n^{2}+b_n^{2}+\frac{\log n}{nh_n\sqrt{b_n}}\right)= O_p(n^{-2\delta}+\frac{\log n}{\sqrt{n^{1/4-3\delta}}})=o_p(1).
	\]
	So it suffice to consider the leading terms in Lemma \ref{lem: linearization for the ATT integrant}.
	
	\noindent\textbf{Step 1: For the influence term with $\sum_{i:{Z_i=1}} \Psi_{1|0,i}(y,x)$}.
	
	For the linear influence term $\sum_{i:{Z_i=1}} \Psi_{1|0,i}(y,x)$, we cannot directly apply Lemma \ref{lem: change of variable integration }, therefore, we need to derive the integration expression. Let 
	\[
	\tilde{I}_n\equiv \int_{r=c^{cf}}^\infty \int_{r_0}\left[\theta(r_0)f_0(r_0)\cdot \frac{1}{n_1}\sum_{i:Z_i=1}\Psi_{1|0,i}\!\big(r-\Delta c,\,r_0-\Delta c\big)\right] dr_0 dr.
	\]
	We first recall that for $y=r-\Delta c$ and $x=r_0-\Delta c$, 
	\[
	\begin{split}
			\Psi_{1|0,i}(y,x)
		&:= 
		\frac{K\!\left(\dfrac{R_{i0}-x}{h_n}\right)}{h_n\,f_{0}(x)}
		\left[
		\frac{K\!\left(\dfrac{R_{i1}-y}{b_n}\right)}{b_n}-  f_{1|0}(y\mid x)
		\right]\\
		&=\frac{K\!\left(\dfrac{R_{i0}-x}{h_n}\right)}{h_n\,f_{0}(x)}
		\left[
		\frac{K\!\left(\dfrac{R_{i1}-y}{b_n}\right)}{b_n}
		- \frac{E\left[{K\!\left(\dfrac{R_{i1}-y}{b_n}\right)}\mid R_{i0}=x\right]}{b_n}+O(b_n^2),
		\right]
	\end{split}
	\]
	where the remainder ther $O(b_n^2)$ is uniform over all $x$ value. Then we evaluate the integration over $r$:
	 {\footnotesize
\[
\int_{r=c^{cf}}^{\infty} 
\Psi_{1|0,i}(r-\Delta c, x)\,dr
=
\frac{K\!\left(\dfrac{R_{i0}-x}{h_n}\right)}{h_n f_{0}(x)}
\left[
F_K\!\left(\dfrac{R_{i1}-(c^{cf}-\Delta c)}{b_n}\right)
-
\mathbb{E}\!\left\{
F_K\!\left(\dfrac{R_{i1}-(c^{cf}-\Delta c)}{b_n}\right)
\Bigm| R_{i0}=x
\right\}
\right]
+ O(b_n^2).
\]
}
Then, define $\tau(x)=\mathbb{E}\!\left\{
F_K\!\left(\dfrac{R_{i1}-(c^{cf}-\Delta c)}{b_n}\right)
\Bigm| R_{i0}=x
\right\}$, we than take the integration over $r_0$, so that 
{\footnotesize
\[
\int \theta(r_0) f_0(r_0) 
\frac{K\!\left(\dfrac{R_{i0}-x}{h_n}\right)}{h_n f_{0}(x)}
\left[ F_K(\dfrac{R_{i1}-(c^{cf}-\Delta c)}{b_n}) - \tau(x) \right] dr_0
=
\int \pi(x)\, K(\frac{R_{i0}-x}{h_n})\, \left[ F_K(\dfrac{R_{i1}-(c^{cf}-\Delta c)}{b_n}) - \tau(x) \right] dx,
\]}
where $\pi(x) = 
\frac{
	\theta(x + \Delta c)\, f_0(x + \Delta c)
}{
	f_{0}(x)
}.
$
Next, use the change of variable, that $u=(R_{i0}-x)/h_n$, and use the second order linear expansion for $uh_n$, we can have 
\[
\begin{split}
&\quad \int \pi(x)\, K(\frac{R_{i0}-x}{h_n})\, \left[ F_K(\dfrac{R_{i1}-(c^{cf}-\Delta c)}{b_n}) - \tau(x) \right] dx\\
&= \frac{1}{n_1} 
\sum_{i: Z_i = 1} 
\pi(R_{i0})
\left[
F_K\!\left( \frac{R_{i1} - (c^{cf} - \Delta c)}{b_n} \right)
- 
\mathbb{E}\!\left\{
F_K\!\left( \frac{R_{i1} - (c^{cf} - \Delta c)}{b_n} \right)
\Bigm| R_{i0}
\right\}
\right]
+ O(h_n^2).
\end{split}
\]
So we conclude that
	\[
\tilde{I}_n=\frac{1}{n_1} 
\sum_{i: Z_i = 1} 
\pi(R_{i0})
\left[
F_{K,i}
- 
\mathbb{E}\!\left\{
F_{K,i}
\Bigm| R_{i0}
\right\}
\right]
+ O(h_n^2+b_n^2).
\]

\noindent \textbf{Step 2: For the influence term with $\sum_{i:Z_i=1}\tilde \Xi^{(0)}_i (r_0)$. }

Let
{\footnotesize
	\[
	\begin{split}
			 \breve{I}_n&=\int_{r=c^{cf}}^\infty \int_{r_0} \frac{f_{1|0}(y|x)\theta(r_0)}{n_1}\sum_{i:Z_i=1}\tilde \Xi^{(0)}_i (r_0) dr_0 dr\\
		&= \frac{1}{n_1}\sum_{i: Z_i=1}\int_{r_0}\theta(r_0) \left[\frac{ K\!\big(\frac{R_{i0}-r_0}{h_n}\big) }{h_n}-f_0(r_0)\right] (1-F_{1|0}(c^{cf}-\Delta c|R_{i0}=r_0-\Delta c))dr_0\\
		&=\frac{1}{n_1}\sum_{i: Z_i=1}\left\{\int_{r_0}\frac{\theta(r_0)}{h_n}K\!\big(\frac{R_{i0}-r_0}{h_n}\big)(1-F_{1|0}(c^{cf}-\Delta c|R_{i0}=r_0-\Delta c)dr_0 -E[\theta(R_{i0})(1-F_{1|0}(c^{cf}-\Delta c|R_{i0}-\Delta c))|Z_i=1]\right\}\\
		&= \frac{1}{n_1}\sum_{i: Z_i=1} \theta(R_{i0})(1-F_{1|0}(c^{cf}-\Delta c|R_{i0}-\Delta c)-E[\theta(R_{i0})(1-F_{1|0}(c^{cf}-\Delta c|R_{i0}-\Delta c)|Z_i=1]+O(h_n^2),
	\end{split}
	\]}
	where in the last step of derivation we use the change of variable $u=(R_{i0}-r_0)/h_n$ and use a second order linearization for the $uh_n$ term. 
	
	\noindent\textbf{Step 3: For theinfluence term with $\mathsf{S}_\theta(r_0)$}.
	
	First define $\kappa(r_0)=(1-F_{1|0}(c^{cf}-\Delta c|R_{i0}=r_0-\Delta c))$, the remaining leading terms in integral expansion are
	\[
	\begin{split}
		&\quad \int_{r\ge c^{cf}}\int_{r_0} \psi(r,r_0)\mathsf{S}_{\theta}(r_0) dr_0 dr\\
		&=\int_{r_0}\int_{r=c^{cf}}^\infty f_{1|0}(r-\Delta c| r_0- \Delta c) f_0(r_0)\mathsf{S}_{\theta}(r_0) dr dr_0\\
		&=\int_{r_0}(1-F_{1|0}(c^{cf}-\Delta c|R_{i0}=r_0-\Delta c)) f_0(r_0)\mathsf{S}_{\theta}(r_0) dr_0 \\
		&=\int_{r_0} f_0(r_0) \kappa(r_0)	\left[\frac{1}{n_0\,h_n\,\tilde{f}_0(r_0)}
		\sum_{i:Z_i=0}
		K\!\Big(\frac{R_{i0}-r_0}{h_n}\Big)\,\varepsilon_i
		\;+\;
		\frac{1}{n_1\,h_n\,f_0(r_0)}
		\sum_{i:Z_i=1}
		K\!\Big(\frac{R_{i0}-r_0}{h_n}\Big)\,\eta_i\right] dr_0\\
		&=\frac{1}{n_0}\sum_{i:Z_i=0} \varepsilon_i\kappa(R_{i0})\frac{f_{0}(R_{i0})}{\tilde{f}_0(R_{i0})} +  \sum_{i:Z_i=0}\frac{1}{n_1} \eta_i\kappa(R_{i0}) +O(h_n^2),
	\end{split}
	\]
	where the last equality holds by Lemma \ref{lem: change of variable integration } with $c^{cf}=-\infty$, and $\tilde{f}_0=f(R_{i0}|Z_i=0)$.

\end{proof}

\begin{lem}\label{lem: linearization for m1 integration for ATT}
	Using the same notation of $F_{K,i}$, we have the following linearization 
	\[
	\begin{split}
		&\quad \sqrt{n}\int_{r=c^{cf}}^{\infty} \left[ \widehat{m}_1(r) \widehat{f}^{cf}(r) dr - \int_{r=c^{cf}}^\infty {m}_1(r) {f}^{cf}(r) \right]dr\\
		&=\frac{1}{\alpha \sqrt{n}}\sum_{i:Z_i=1} \frac{\eta_{i1}f^{cf}(R_{i1})}{f_1(R_{i1})}F_{K^*}\left(\frac{R_{i1}-c^{cf}}{b_n}\right) \\
		&+\frac{1}{\alpha \sqrt{n}} \sum_{i:Z_i=1} m_1(R_{i1}+\Delta c) \left\{F_{K,i}-E\left[F_{K,i}|R_{i0}\right] \right\} \frac{f_0(R_{i0}+\Delta c)}{f_0 (R_{i0})}\\
		&+\frac{1}{\alpha \sqrt{n}} \sum_{i:Z_i=1}  \left\{\int_{r\ge c^{cf}} m_1(r)f_{1|0}(r_1-\Delta c|R_{i0}-\Delta c) dr - E\left[\int_{r\ge c^{cf}} m_1(r)f_{1|0}(r_1-\Delta c|R_{i0}-\Delta c) \right] \right\}+O_p(b_n^2+h_n^2)
	\end{split}
	\]
\end{lem}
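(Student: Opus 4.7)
The plan is to decompose the product $\widehat m_1(r)\widehat f^{cf}(r)-m_1(r)f^{cf}(r)$ via the general product-expansion device of Lemma \ref{lem:prod-linearization} and then integrate each resulting piece against $\mathbbm{1}\{r\ge c^{cf}\}$. Concretely, I would write
\[
\widehat m_1(r)\widehat f^{cf}(r)-m_1(r)f^{cf}(r)
=\bigl(\widehat m_1(r)-m_1(r)\bigr)f^{cf}(r)+m_1(r)\bigl(\widehat f^{cf}(r)-f^{cf}(r)\bigr)+R_n(r),
\]
where $R_n(r)$ is the cross-product of the two estimation errors. Combining the sup-norm rates in Lemmas \ref{lem:m1-bdry-unif} and \ref{lem:lin_fcf} with the bandwidth choice $h_n\asymp b_n\asymp n^{-1/4-\delta}$ gives $\sqrt{n}\int_{c^{cf}}^\infty R_n(r)\,dr=o_p(1)$, and $\sqrt{n}$ times each of the integrated bias terms $b_n^2$ and $h_n^2$ is $n^{-2\delta}=o(1)$. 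Hence only the two first-order pieces matter asymptotically, and I identify them in turn.

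For the first piece $A_n:=\sqrt{n}\int_{c^{cf}}^\infty(\widehat m_1(r)-m_1(r))f^{cf}(r)\,dr$, I would plug in the influence representation from Lemma \ref{lem:m1-bdry-unif}, swap summation and integration, and apply the change-of-variables argument of Lemma \ref{lem: change of variable integration } with weight $w(r):=f^{cf}(r)/f_1(r)$ and kernel $K_r^*$. Setting $u=(R_{i1}-r)/b_n$ converts the inner kernel integral into the kernel CDF evaluated at the truncation $(R_{i1}-c^{cf})/b_n$, and a first-order Taylor expansion of $w$ around $R_{i1}$ isolates the leading contribution
\[
\frac{1}{\alpha\sqrt{n}}\sum_{i:Z_i=1}\frac{\eta_{i1}f^{cf}(R_{i1})}{f_1(R_{i1})}\,F_{K^*}\!\left(\frac{R_{i1}-c^{cf}}{b_n}\right),
\]
with residual of order $\sqrt{n}\,b_n^2=o(1)$.

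For the second piece $B_n:=\sqrt{n}\int_{c^{cf}}^\infty m_1(r)(\widehat f^{cf}(r)-f^{cf}(r))\,dr$, I would use Lemma \ref{lem:lin_fcf} to split $\mu_i(r)$ into (i) the conditional-density piece $[K_{b_n}(R_{i1}-(r-\Delta c))-E\{K_{b_n}(R_{i1}-(r-\Delta c))\mid R_{i0}\}]\,f_0(R_{i0}+\Delta c)/(b_n f_0(R_{i0}))$ and (ii) the marginal piece $f_{1\mid 0}(r-\Delta c\mid R_{i0}-\Delta c)-\int f_0(r_0)f_{1\mid 0}(r-\Delta c\mid r_0-\Delta c)\,dr_0$. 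For (i) I substitute into $\int_{c^{cf}}^\infty m_1(r)(\cdot)\,dr$, change variables via $u=(R_{i1}-(r-\Delta c))/b_n$, and Taylor expand $m_1$ around $R_{i1}+\Delta c$, mirroring Step~1 of the proof of Lemma \ref{lem: linearization for the ATT integrant}. This yields the second right-hand side term $m_1(R_{i1}+\Delta c)(F_{K,i}-E[F_{K,i}\mid R_{i0}])f_0(R_{i0}+\Delta c)/f_0(R_{i0})$ modulo an $O(b_n^2)$ smoothing bias. Piece (ii) is a bounded-support i.i.d.\ average whose Fubini-integrated version against $m_1(r)$ delivers the third leading term $\int_{c^{cf}}^\infty m_1(r)f_{1\mid 0}(r-\Delta c\mid R_{i0}-\Delta c)\,dr$ minus its expectation.

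The step I expect to require the most care is the boundary-kernel handling in $A_n$: $K_r^*$ coincides with $K$ when $r-c\ge\tau b_n$ but uses the adjusted moments $a(r),b(r)$ inside the shrinking strip $r\in[c,c+\tau b_n)$, so I must check that the change-of-variables argument of Lemma \ref{lem: change of variable integration } extends to this $r$-dependent kernel and that no non-negligible term survives at the lower integration limit $c^{cf}$ beyond the truncation $F_{K^*}((R_{i1}-c^{cf})/b_n)$. The verification reduces to showing that $\int_c^{c+\tau b_n}|K_r^*(\cdot)-K(\cdot)|\,w(r)\,dr=O(b_n)$, whose contribution to $\sqrt{n}A_n$ is then only $\sqrt{n}\,b_n^2=n^{-2\delta}=o(1)$. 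Once this is in place, assembling $A_n$, $B_n$, and the negligibility of $\sqrt{n}\int R_n$ reproduces the three leading terms in the statement, and the overall $O_p(b_n^2+h_n^2)$ remainder inherits the explicit bias orders already quantified in Lemmas \ref{lem:m1-bdry-unif} and \ref{lem:lin_fcf}.
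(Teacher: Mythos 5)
Your proposal is correct and follows essentially the same route as the paper: decompose the product error into $(\widehat m_1-m_1)f^{cf}$, $m_1(\widehat f^{cf}-f^{cf})$, and a negligible cross term; apply the change-of-variables Lemma \ref{lem: change of variable integration } to the first piece; and split $\mu_i(r)$ from Lemma \ref{lem:lin_fcf} into its conditional-density and marginal influence components for the second. The explicit attention you pay to the $r$-dependence of $K_r^*$ on the boundary strip is a reasonable refinement (and matters only when $c^{cf}$ is within $O(b_n)$ of $c$), but it does not change the substance of the argument.
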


\begin{proof}
	Following the same argument, we have 
	\[
	\begin{split}
		&\quad 	\int_{r=c^{cf}}^{\infty} \left[ \widehat{m}_1(r) \widehat{f}^{cf}(r) dr - \int_{r=c^{cf}}^\infty {m}_1(r) {f}^{cf}(r) \right]dr\\
		&= \int_{r=c^{cf}}^{\infty} \frac{1}{n_1 b_n f_1(r)}
		\sum_{i:Z_i=1} K_r^*\!\Big(\frac{R_{i1}-r}{b_n}\Big)\,\eta_{i1} f^{cf}(r) dr \\
		&+ \int_{r=c^{cf}}^{\infty} m_1(r)\frac{1}{n_1} \sum_{i:Z_i=1} \mu_i(r)  dr +O_p(h_n^2+b_n^2)
	\end{split}
	\]
	
	Using Lemma \ref{lem: change of variable integration }, we have 
	\[
	\int_{r=c^{cf}}^{\infty} \frac{1}{n_1 b_n f_1(r)}
	\sum_{i:Z_i=1} K_r^*\!\Big(\frac{R_{i1}-r}{b_n}\Big)\,\eta_{i1} f^{cf}(r) dr = \frac{1}{n_1}\sum_{i:Z_i=1} \frac{\eta_{i1}f^{cf}(R_{i1})}{f_1(R_{i1})}F_{K^*}\left(\frac{R_{i1}-c^{cf}}{b_n}\right) +O(b_n^2)
	\]
	For the second part, we use the expression of $\mu_i(r)$ to write 
	\[
	\begin{split}
		&\quad \int_{r=c^{cf}}^{\infty} m_1(r)\frac{1}{n_1} \sum_{i:Z_i=1} \mu_i(r)  dr\\
		&=\int_{r=c^{cf}}^{\infty} m_1(r)\frac{1}{n_1} \sum_{i:Z_i=1}\frac{\Big[K_{b_n}(R_{i1}-(r_1-\Delta c))
			-E\!\left[K_{b_n}(R_{i1}-(r_1-\Delta c))\mid R_{i0}\right]\Big]\,f_0(R_{i0}+\Delta c)}{b_n f_{0}(R_{i0})}dr\\
		&+\int_{r=c^{cf}}^{\infty} \frac{m_1(r)}{n_1}\sum_{i:Z_i=1}\Big[f_{1\mid 0}(r_1-\Delta c\mid R_{i0}-\Delta c)
		- \int f_0(r_0)\,f_{1\mid 0}(r_1-\Delta c\mid r_0-\Delta c)\,dr_0\Big] dr\\
		&=\frac{1}{n_1} \sum_{i:Z_i=1} m_1(R_{i1}+\Delta c) \left\{F_K\left(\frac{R_{i1}-c^{cf}+\Delta}{b_n}\right)-E\left[F_K\left(\frac{R_{i1}-c^{cf}+\Delta}{b_n}\right)|R_{i0}\right] \right\} \frac{f_0(R_{i0}+\Delta c)}{f_0 (R_{i0})}\\
		&+\frac{1}{n_1} \sum_{i:Z_i=1}  \left\{\int_{r\ge c^{cf}} m_1(r)f_{1|0}(r_1-\Delta c|R_{i0}-\Delta c) dr - E\left[\int_{r\ge c^{cf}} m_1(r)f_{1|0}(r_1-\Delta c|R_{i0}-\Delta c) \right] \right\}+O(b_n^2),
	\end{split}
	\]
	where in the last step we use change of variable $u=(R_{i1}-(r_1-\Delta c))/b_n$ to evaluate the integral over $r$.

\end{proof}

\subsubsection{Proof for Bootstrap Inference Proposition \ref{prop: bootstrap validity}}

\begin{proof}
	Let $W_{i,z}^{*,b}$ denote the resampling weights for observation $i$ in bootstrap sample $b$ for group $Z_i=z\in\{0,1\}$, such that $E[W_i^*]=1$ and $\sum_{i} W_{i,z}^{*,b}= n_z$.  Following the same logic to derive Theorem \ref{thm: root-n consistency of the ATT(r)}, we can derive the linear expansion of the bootstrap estimator
	
	{\footnotesize
		\[
		\begin{split}
			&\quad \sqrt{n}\,F^{cf}(c^{cf})\left({ATT}^{*,b}(c^{cf})-\widehat{ATT}(c^{cf})\right)\\
			&= \frac{1}{\alpha \sqrt{n}}\sum_{i:Z_i=1} 
			\frac{\hat{\eta}_{i1}\,\hat f^{cf}\!\left(R_{i1}\right)}{\hat f_1\!\left(R_{i1}\right)}
			\,F_{K^*}\!\left(\frac{R_{i1}-c^{cf}}{b_n}\right)
			\,(W_{i,1}^{*,b}-1)\\
			&\quad+\frac{1}{\alpha \sqrt{n}} \sum_{i:Z_i=1} 
			\hat m_1\!\left(R_{i1}+\Delta c\right)
			\left\{
			F_{K,i}-\mathbb{E}\!\left[F_{K,i}\mid R_{i0}\right] 
			\right\}
			\frac{\hat f_0\!\left(R_{i0}+\Delta c\right)}{\hat f_0\!\left(R_{i0}\right)}
			\,(W_{i,1}^{*,b}-1)
			\\
			&\quad+\frac{1}{\alpha \sqrt{n}}  \sum_{i:Z_i=1}  
			\Bigg\{
			\int_{r\ge c^{cf}} \hat m_1(r)\,
			\widehat{f}_{1|0}\!\left(r-\Delta c\,\bigm|\,R_{i0}-\Delta c\right) \,dr 
			- \mathbb{E}\!\left[\int_{r\ge c^{cf}} \hat m_1(r)\,
			\widehat{f}_{1|0}\!\left(r-\Delta c\,\bigm|\,R_{i0}-\Delta c\right) dr \right]
			\Bigg\}
			\,(W_{i,1}^{*,b}-1)\\
			&\quad-\frac{1}{(1-\alpha)\sqrt{n}}\sum_{i:Z_i=0} 
			\frac{\hat f_0\!\left(R_{i0}\right)}{\widehat{\tilde f}_0\!\left(R_{i0}\right)} 
			\,\hat{\varepsilon}_i\, \hat{\kappa}\!\left(R_{i0}\right)
			\,(W_{i,0}^{*,b}-1)
			\;-\;  \frac{1}{\alpha \sqrt{n}}\sum_{i:Z_i=1}\hat{\eta}_{i0} \,\hat{\kappa}\!\left(R_{i0}\right)
			\,(W_{i,1}^{*,b}-1)\\
			&\quad-\frac{1}{\alpha\sqrt{n}}\sum_{i: Z_i=1} 
			\Big[
			\hat{\theta}\!\left(R_{i0}\right)\big(1-F_{1|0}\!\left(c^{cf}-\Delta c\,\bigm|\,R_{i0}-\Delta c\right)\big)
			-\mathbb{E}\!\big[\hat{\theta}\!\left(R_{i0}\right)\big(1-F_{1|0}\!\left(c^{cf}-\Delta c\,\bigm|\,R_{i0}-\Delta c\right)\big)\,\big|\,Z_i=1\big]
			\Big]
			\,(W_{i,1}^{*,b}-1)\\
			&\quad- \frac{1}{\alpha\sqrt{n}} 
			\sum_{i: Z_i = 1} 
			\hat \pi\!\left(R_{i0}\right)
			\left[
			F_K\!\left( \frac{R_{i1} - (c^{cf} - \Delta c)}{b_n} \right)
			- 
			\mathbb{E}\!\left\{
			F_K\!\left( \frac{R_{i1} - (c^{cf} - \Delta c)}{b_n} \right)
			\Bigm|\, R_{i0}
			\right\}
			\right]
			\,(W_{i,1}^{*,b}-1)
			+o_p(1),
		\end{split}
		\]
	}
	where $\hat{\eta}_{i1}=Y_{i1}-\hat{m}_1(R_{i1})$,  $\hat{\eta}_{i1}=Y_{i0}-\hat{m}_0(R_{i1})$, $\hat{\varepsilon}_i= Y_{i1}-Y_{i0}-\widehat{E}[Y_{i1}-Y_{i0}|R_{i0},Z_i=0]$.  The convergence result then follows by applying Lemma \ref{lem: bootstrap consistency lemma }.
\end{proof}

\begin{lem}\label{lem: bootstrap consistency lemma }
	Let $\{\varepsilon_i\}_{i=1}^n$ be i.i.d.\ with $\mathbb E[\varepsilon_i]=0$ and $\mathbb E|\varepsilon_i|^{2+\delta}<\infty$ for some $\delta>0$. Assume
	\[
	\sqrt n\,(\widehat{\gamma}-\gamma)=\frac1{\sqrt n}\sum_{i=1}^n \varepsilon_i + o_p(1),
	\]
	and, conditionally on the data, $(W_1^*,\dots,W_n^*)$ are multinomial bootstrap weights with $\sum_i W_i^*=n$ and
	\[
	\sqrt n\,(\gamma^*-\widehat{\gamma})=\frac1{\sqrt n}\sum_{i=1}^n \hat\varepsilon_i\,(W_i^*-1).
	\]
	If $\max_{1\le i\le n}|\hat\varepsilon_i-\varepsilon_i|\xrightarrow{p}0$, then
	\[
	\sqrt n\,(\widehat{\gamma}-\gamma)\Rightarrow \mathcal N(0,\sigma^2)
	\quad\text{and}\quad
	\sqrt n\,(\gamma^*-\widehat{\gamma})\Rightarrow{d^*}\mathcal N(0,\sigma^2),
	\]
	with $\sigma^2=\mathbb E[\varepsilon^2]$. Hence the bootstrap and original limits coincide.
\end{lem}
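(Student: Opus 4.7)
The plan is to establish the two limit statements separately and then observe they share the same Gaussian limit with variance $\sigma^2=\mathbb{E}[\varepsilon^2]$. For the unconditional limit, the linearization $\sqrt{n}(\widehat{\gamma}-\gamma) = n^{-1/2}\sum_{i=1}^n\varepsilon_i + o_p(1)$ reduces the problem to applying the Lindeberg--Feller CLT to the i.i.d.\ centered sequence $\{\varepsilon_i\}$, which is immediate under the $(2+\delta)$-moment hypothesis, and Slutsky's theorem absorbs the $o_p(1)$ remainder.

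For the bootstrap limit I would work conditionally on the data. Set $S_n^*:= n^{-1/2}\sum_{i=1}^n \widehat\varepsilon_i(W_i^*-1)$. Since $(W_1^*,\dots,W_n^*)$ is multinomial with $\mathrm{Var}(W_i^*)=1-1/n$ and $\mathrm{Cov}(W_i^*,W_j^*)=-1/n$ for $i\ne j$, a direct calculation gives
\[
\mathrm{Var}(S_n^*\mid \text{data})
= \frac{1}{n}\sum_{i=1}^n \widehat\varepsilon_i^{\,2}
- \Bigl(\frac{1}{n}\sum_{i=1}^n \widehat\varepsilon_i\Bigr)^{2}.
\]
The hypothesis $\max_i|\widehat\varepsilon_i-\varepsilon_i|\xrightarrow{p}0$ combined with the law of large numbers for $\varepsilon_i$ and $\varepsilon_i^2$ yields $n^{-1}\sum_i \widehat\varepsilon_i^{\,2}\xrightarrow{p}\sigma^2$ and $n^{-1}\sum_i \widehat\varepsilon_i\xrightarrow{p}0$, so the conditional variance converges to $\sigma^2$ in probability. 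An analogous calculation shows $\mathbb{E}[S_n^*\mid\text{data}]=0$.

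Next I would verify a conditional Lindeberg condition so as to invoke a CLT for the triangular array driven by multinomial weights. A convenient route is Poissonization: couple $(W_i^*)$ with independent $\mathrm{Poisson}(1)$ weights, apply the classical multiplier CLT to the Poissonized sum, and handle the coupling error by conditioning on $\sum_i W_i^*=n$. The Lindeberg condition then reduces to showing $n^{-1}\sum_i \widehat\varepsilon_i^{\,2}\mathbf{1}\{|\widehat\varepsilon_i|>\epsilon\sqrt{n}\}\xrightarrow{p}0$ for each $\epsilon>0$, which follows from the uniform replacement $\widehat\varepsilon_i=\varepsilon_i+o_p(1)$ together with $\mathbb{E}|\varepsilon_i|^{2+\delta}<\infty$ via a Markov bound on the truncated tail. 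Combining these pieces yields $S_n^*\overset{d^*}{\Rightarrow}\mathcal{N}(0,\sigma^2)$ in probability, matching the unconditional limit, and proves the lemma.

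The main obstacle is handling the estimation error inside the bootstrap summation rigorously: the uniform hypothesis $\max_i|\widehat\varepsilon_i-\varepsilon_i|\xrightarrow{p}0$ is exactly what makes both the conditional variance calculation and the Lindeberg truncation work uniformly over $i$, since without it a handful of badly estimated observations could dominate either the quadratic sum or the truncated second moment. Once that uniform control is granted, the remainder is a routine application of a conditional CLT for exchangeable multinomial weights, and the lemma follows by combining the two Gaussian limits.
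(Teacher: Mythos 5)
Your proposal is correct, and it reaches the same conclusion as the paper, but the structure of the argument is noticeably different. The paper \emph{first} decomposes
\[
T_n^*=\frac{1}{\sqrt n}\sum_i\hat\varepsilon_i(W_i^*-1)
=\underbrace{\frac{1}{\sqrt n}\sum_i\varepsilon_i(W_i^*-1)}_{A_n^*}
+\underbrace{\frac{1}{\sqrt n}\sum_i(\hat\varepsilon_i-\varepsilon_i)(W_i^*-1)}_{R_n^*},
\]
bounds $\mathrm{Var}^*(R_n^*)\le(\max_i|\hat\varepsilon_i-\varepsilon_i|)^2\to_p 0$, and then cites a conditional Lyapunov CLT for exchangeably weighted sums applied to $A_n^*$, which involves only the \emph{true} i.i.d.\ errors and bootstrap weights. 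This isolates all the ``estimation noise'' in a single $o_p^*(1)$ term and reduces the CLT step to an off-the-shelf result. You instead work directly with the estimated residuals $\hat\varepsilon_i$: you compute the exact conditional variance $n^{-1}\sum_i\hat\varepsilon_i^2-(n^{-1}\sum_i\hat\varepsilon_i)^2$ (including the multinomial cross-covariance, which the paper's display arguably omits though the limit is unaffected), and you verify a conditional Lindeberg condition on the triangular array $\{\hat\varepsilon_i\}$ via Poissonization. Your route is a bit more hands-on: you must show that the uniform error $\max_i|\hat\varepsilon_i-\varepsilon_i|\to_p0$ propagates through both the variance identification and the truncated second-moment (Lindeberg) bound, whereas the paper's decomposition discharges all of that in one variance bound on $R_n^*$. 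What you buy is transparency about \emph{where} the uniform control is actually needed (the Lindeberg tail), and what the paper buys is a shorter argument that defers the CLT entirely to the ideal-error case. Both are rigorous; the paper's is the more standard multiplier-bootstrap template, yours is a self-contained verification.
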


\begin{proof}
	Decompose
	\[
	T_n^*:=\frac1{\sqrt n}\sum_{i=1}^n \hat\varepsilon_i\,(W_i^*-1)
	= A_n^* + R_n^*,\quad
	A_n^*:=\frac1{\sqrt n}\sum_{i=1}^n \varepsilon_i\,(W_i^*-1),\quad
	R_n^*:=\frac1{\sqrt n}\sum_{i=1}^n (\hat\varepsilon_i-\varepsilon_i)\,(W_i^*-1).
	\]
	Conditional on the data, $\mathbb E^*(W_i^*-1)=0$ and $Var^*(W_i^*-1)=1-1/n$, hence
	\[
	Var^*(R_n^*)=\frac{1}{n}\sum_{i=1}^n (\hat\varepsilon_i-\varepsilon_i)^2(1-1/n)
	\le \Big(\max_{i}|\hat\varepsilon_i-\varepsilon_i|\Big)^2 \xrightarrow{p} 0,
	\]
	so $R_n^*=o_p^*(1)$. Moreover,
	\[
	Var^*(A_n^*)=\frac{1}{n}\sum_{i=1}^n \varepsilon_i^2(1-1/n)\xrightarrow{p}\sigma^2:=\mathbb E[\varepsilon^2].
	\]
	By the conditional Lyapunov CLT for exchangeably weighted sums (using $\mathbb E|\varepsilon|^{2+\delta}<\infty$),
	$A_n^*\Rightarrow{d^*}\mathcal N(0,\sigma^2)$. Therefore
	$T_n^*=A_n^*+o_p^*(1)\Rightarrow{d^*}\mathcal N(0,\sigma^2)$.
	Finally, by the ordinary CLT, $\frac1{\sqrt n}\sum_{i=1}^n \varepsilon_i \Rightarrow \mathcal N(0,\sigma^2)$, hence
	$\sqrt n(\widehat{\gamma}-\gamma)\Rightarrow \mathcal N(0,\sigma^2)$. The two limit distributions agree.
\end{proof}

\bibliographystyle{te}      
\bibliography{RDD_Extrapolation_with_policy}

@article{grembi2016fiscal,
	title={Do fiscal rules matter?},
	author={Grembi, Veronica and Nannicini, Tommaso and Troiano, Ugo},
	journal={American Economic Journal: Applied Economics},
	pages={1--30},
	year={2016},
	publisher={JSTOR}
}

@article{armstrong2020simple,
	title={Simple and honest confidence intervals in nonparametric regression},
	author={Armstrong, Timothy B and Koles{\'a}r, Michal},
	journal={Quantitative Economics},
	volume={11},
	number={1},
	pages={1--39},
	year={2020},
	publisher={Wiley Online Library}
}

@article{calonico2014robust,
	title={Robust nonparametric confidence intervals for regression-discontinuity designs},
	author={Calonico, Sebastian and Cattaneo, Matias D and Titiunik, Rocio},
	journal={Econometrica},
	volume={82},
	number={6},
	pages={2295--2326},
	year={2014},
	publisher={Wiley Online Library}
}

@article{lee2008randomized,
	title={Randomized experiments from non-random selection in US House elections},
	author={Lee, David S},
	journal={Journal of Econometrics},
	volume={142},
	number={2},
	pages={675--697},
	year={2008},
	publisher={Elsevier}
}

@article{dong2015identifying,
	title={Identifying the effect of changing the policy threshold in regression discontinuity models},
	author={Dong, Yingying and Lewbel, Arthur},
	journal={Review of Economics and Statistics},
	volume={97},
	number={5},
	pages={1081--1092},
	year={2015},
	publisher={MIT Press}
}

@article{jones1995simple,
	title={A simple bias reduction method for density estimation},
	author={Jones, M Chris and Linton, Oliver and Nielsen, Jens Perch},
	journal={Biometrika},
	volume={82},
	number={2},
	pages={327--338},
	year={1995},
	publisher={Oxford University Press}
}

@article{angrist2015wanna,
	title={Wanna get away? Regression discontinuity estimation of exam school effects away from the cutoff},
	author={Angrist, Joshua D and Rokkanen, Miikka},
	journal={Journal of the American Statistical Association},
	volume={110},
	number={512},
	pages={1331--1344},
	year={2015},
	publisher={Taylor \& Francis}
}

@article{gine2002rates,
	title={Rates of strong uniform consistency for multivariate kernel density estimators},
	author={Gin{\'e}, Evarist and Guillou, Armelle},
	journal={Annales de l'Institut Henri Poincare (B) Probability and Statistics},
	volume={38},
	number={6},
	pages={907--921},
	year={2002},
	organization={Elsevier}
}

@book{FanGijbels1996,
	title     = {Local Polynomial Modelling and Its Applications},
	author    = {Jianqing Fan and Irene Gijbels},
	year      = {1996},
	publisher = {Chapman and Hall/CRC},
	address   = {London},
	doi       = {10.1201/9780203748725}
}

@book{ullah1999nonparametric,
	title={Nonparametric econometrics},
	author={Pagan, Adrian  and Ullah, Aman},
	year={1999},
	publisher={Cambridge university press Cambridge}
}

\end{document}